\title{Classification of integrable discrete equations of octahedron type}
\author{{Vsevolod E. Adler$^{1,2}$, Alexander I. Bobenko$^2$, Yuri B. Suris$^3$}}
\date{15 November 2010}
\def\a{\alpha}
\def\b{\beta}
\def\g{\gamma}
\def\d{\delta}
\def\eps{\varepsilon}
\def\la{\lambda}
\def\ee{\mathfrak e}
\def\Integer{\mathbb{Z}}
\def\const{\mathop{\rm const}}
\def\id{\mathop{\rm id}}
\def\Ifrac#1#2{\genfrac{(}{)}{}{}{1}{#1}_{\!\!#2}}
\def\EQ#1{\langle#1\rangle}
\theoremstyle{plain}
\newtheorem{theorem}{Theorem}
\newtheorem{statement}[theorem]{Proposition}
\newtheorem{lemma}[theorem]{Lemma}
\newtheorem{definition}{Definition}
\theoremstyle{remark}
\newtheorem{remark}{Remark}
\begin{document}\maketitle\thispagestyle{empty}

\footnotetext[1]{L.D. Landau Institute for Theoretical Physics, 1a Semenov pr.,
 142432 Chernogolovka, Russia. E-mail: adler@itp.ac.ru}
\footnotetext[2]{Institut f\"ur Mathematik, MA 8-3, Technische
 Universit\"at Berlin, Str. des 17. Juni 136, 10623 Berlin, Germany.
 E-mail: bobenko@math.tu-berlin.de}
\footnotetext[3]{Institut f\"ur Mathematik, MA 7-2, Technische
 Universit\"at Berlin, Str. des 17. Juni 136, 10623 Berlin, Germany. E-mail:
 suris@math.tu-berlin.de}

\begin{abstract}
We use the consistency approach to classify discrete integrable 3D
equations of the octahedron type. They are naturally treated on
the root lattice $Q(A_3)$ and are consistent on the
multidimensional lattice $Q(A_N)$. Our list includes the most
prominent representatives of this class, the discrete KP equation
and its Schwarzian (multi-ratio) version, as well as three further
equations. The combinatorics and geometry of the octahedron type
equations are explained. In particular the consistency on the
4-dimensional Delaunay cells has its origin in the classical
Desargues theorem of projective geometry. The main technical tool
used for the classification is the so called tripodal form of the
octahedron type equations.
\medskip

Key words: integrability, multidimensional consistency, Hirota
equation, dKP equation, multi-ratio equation, Desargues
configuration, root lattice, tripodal form
\medskip

Mathematics Subject Classification: 37K10, 37K25,
\end{abstract}

\begin{quote} \small\tableofcontents\end{quote}

\section{Introduction}\label{s:intro}

This paper is devoted to the study of certain integrable discrete
three-dimensional equations. Prototypic examples of such equations
are the discrete KP (dKP), or the discrete bilinear Hirota equation, 
and its Schwarzian version. Both
equations play a pivotal role in modern mathematics \cite{FG_2006,
Henriques_2007, Kashaev_Reshetikhin_1997, Kazakov_2008,
King_Schief_2003, Konopelchenko_Schief_2002, KLWZ_1997,
Kuniba_Nakanishi_Suzuki_2010, LNQ_2009, NCW_1985, Speyer_2007,
Zabrodin_1997}.

The dKP equation for a complex lattice function (field) $x:\mathbb
Z^3\to\mathbb C$ was introduced in \cite{Hirota_1981} in the form
\begin{equation}\label{eq: Hirotas Hirota}
 \alpha x(m+e_1)x(m-e_1)+\beta x(m+e_2)x(m-e_2)+\gamma x(m+e_3)x(m-e_3)=0
\end{equation}
(here $m\in\mathbb Z^3$, $e_i$ is the unit vector of the $i$-th
coordinate direction, and $\alpha$, $\beta$, $\gamma$ are three
arbitrary complex coefficients). It relates six fields assigned to
any elementary octahedron of the even sublattice $\mathbb Z_{\rm
even}^3$ of $\mathbb Z^3$, centered at the odd point $m\in\mathbb
Z^3$. The even sublattice $\mathbb Z_{\rm even}^3$ is also known
as the face centered cubic lattice (fcc lattice). Its 3-cells are
octahedra and tetrahedra.

The fcc lattice is isomorphic to the root lattice of the type $A_3$,
\begin{equation}\label{eq: A3 lattice}
Q(A_3)=\big\{n=(n_0,\ldots,n_3)\in\mathbb Z^4:
n_0+n_1+n_2+n_3=0\big\}.
\end{equation}
The six vertices of every octahedron cell in $Q(A_3)$ can be written as $n+e_i+e_j$, $i,j\in\{0,1,2,3\}$ (with $n$ from the hyperplane $n_0+n_1+n_2+n_3=-2$). The dKP equation relates the fields $x$ at these vertices and reads:
\begin{eqnarray}\label{eq: root dKP}
 \alpha x(n+e_0+e_1)x(n+e_2+e_3)+\beta x(n+e_0+e_2)x(n+e_1+e_3)\nonumber\\
 +\gamma x(n+e_0+e_3)x(n+e_1+e_2) & = & 0.
\end{eqnarray}
This representation of the dKP equation turns out to be more
conceptual and plays the key role in the integrability analysis of
this paper.

Since on the hyperplane $n_0+n_1+n_2+n_3=0$ any coordinate is
determined by other three, by forgetting one coordinate we obtain
a one-to-one correspondence of the lattice $Q(A_3)$ with $\mathbb
Z^3$, $(n_0,n_1,n_2,n_3)\leftrightarrow(n_1,n_2,n_3)$. In this
realization, the dKP equation becomes an equation on $\mathbb Z^3$:
\begin{equation}\label{eq: Miwas dKP}
 \alpha x(n+e_1)x(n+e_2+e_3)+\beta x(n+e_2)x(n+e_3+e_1)+\gamma
 x(n+e_3)x(n+e_1+e_2)=0.
\end{equation}
In this form, introduced in \cite{Miwa_1982}, it relates fields
assigned to six out of eight vertices of any elementary cube of
$\mathbb Z^3$ (the fields at one pair of the opposite vertices,
$n$ and $n+e_1+e_2+e_3$, do not participate in the equation).

Throughout the present paper, we will use the following
abbreviation for lattice functions: $x$ for $x(n)$, $x_i$ for
$x(n+e_i)$, $x_{-i}$ for $x(n-e_i)$, $x_{ij}$ for $x(n+e_i+e_j)$, etc. 
In this notation,
the dKP equation in the $Q(A_3)$-form (\ref{eq: root dKP}) can be put as
\begin{equation}\label{eq: root dKP shifts}
 \alpha x_{01}x_{23}+\beta x_{02}x_{13}+\gamma x_{03}x_{12}=0,
\end{equation}
while in the $\mathbb Z^3$-form (\ref{eq: Miwas dKP}) it can be put as
\begin{equation}\label{eq: Miwas Hirota}
 \alpha x_1x_{23}+\beta x_2x_{13}+\gamma x_3x_{12}=0.
\end{equation}
We will call equations like \eqref{eq: Miwas Hirota} the {\em
octahedron type equations}, as opposed to the {\em cube type
equations}, whose best known representative is the dBKP equation
introduced in \cite{Miwa_1982}:
\begin{equation}\label{eq: Miwas Miwa}
 \alpha x_1x_{23}+\beta x_2x_{13}+\gamma x_3x_{12}+\delta xx_{123}=0.
\end{equation}
This latter equation relates the fields at all eight vertices of
any elementary cube of $\mathbb Z^3$. The octahedron type
equations could be considered as a subclass of the cube type
equations, however they have different properties and require for
a different analysis.
\begin{definition}\label{def: oct eq}
Equation of octahedron type is the relation
\[
 F(x_{01},x_{02},x_{03},x_{12},x_{13},x_{23})=0
\]
for the unknown function $x:Q(A_3)\to\mathbb C$, or, equivalently, the relation
\[
 F(x_1,x_2,x_3,x_{12},x_{13},x_{23})=0
\]
for the unknown function $x:\mathbb Z^3\to\mathbb C$.
\end{definition}

Our only assumption concerning the function $F$ is that it should
be locally analytic and satisfy
\begin{quote}
{\em irreducibility condition:} equation $F=0$ can be locally solved
with respect to any variable, the result depending on all other
variables, i.e., on any solution of $F=0$ we have $F_x\not\equiv 0$ , where $x$ denotes any argument of the equation.
\end{quote}
This condition forbids, in particular, equations with
$F$ of the form $F=AB$, where $A$ or $B$ does not depend on some
variable (for example, if $A_x\equiv 0$, then $F_x=AB_x$, and this
vanishes on the solution $A=0$). Also the so called ultradiscrete
equations with piecewise constant functions $F$ are excluded from
consideration. On the other hand, the irreducibility condition
does not forbid the case when the solution of $F=0$ is
multivalued, for instance, $F=AB$ with both $A$ and $B$ depending
on all variables. We handle such situations by working always in
the neighborhood of some solution branch, where the theorem on
implicit function applies.

It is important to observe that irreducible equations of the
octahedron type are 3D systems in the following sense: a generic
solution of such an equation on $\Integer^3$ can be defined by a
well posed Cauchy problem with generic 2D initial data. For
instance, for the dKP equation in the form \eqref{eq: Hirotas
Hirota} such initial data are constituted by prescribing the
values of $x$ on the planes $n_3=0$ and $n_3=-1$.

In the work \cite{Adler_Bobenko_Suris_2003} we pushed forward the
idea that integrability of discrete equations is synonymous with
their multidimensional consistency. For a certain class of 2D
equations, this notion was put in the basis of a classification of
integrable cases. In the present work, we classify
multidimensionally consistent (integrable) equations of the
octahedron type. A general idea of consistency leads to the following formulation.

\begin{definition}\label{def: consist cubic}
Consider a system on $\Integer^N$ consisting of (possibly different) octahedron type
equations
\begin{equation}\label{eq:oct gen cubic}
F(x_i,x_j,x_k,x_{ij},x_{ik},x_{jk})=0
\end{equation}
on all affine 3D sublattices $c+\Integer
e_i+\Integer e_j+\Integer e_k$. It is called multidimensionally
consistent if it has a solution whose restrictions on all 3D
sublattices are generic solutions of corresponding equations.
\end{definition}

This definition is a literal repetition of the corresponding notion for cubic type equations and does not take into account the specific feature of the octahedron type situation. If the lattice $\Integer^N$ lattice in this definition is treated as a realization of the root lattice
\begin{equation}\label{eq: AN lattice}
Q(A_N)=\big\{n=(n_0,\ldots,n_N)\in\mathbb Z^{N+1}: n_0+n_1+\ldots+n_N=0\big\},
\end{equation}
by forgetting the coordinate $n_0$, i.e.,
$Q(A_N)\ni(n_0,n_1,\ldots,n_N)\leftrightarrow(n_1,\ldots,n_N)\in\mathbb Z^N$, then the
corresponding octahedron type equations (\ref{eq:oct gen cubic}) live on the sublattices
\[
Q(A_3)=\{(n_0,n_i,n_j,n_k): n_0+n_i+n_j+n_k={\rm const}\}.
\]
However, these are by far not all $Q(A_3)$ sublattices of
$Q(A_N)$, but only those
involving the coordinate $n_0$, which
becomes distinguished in this formulation. A more symmetric notion
would be:
\begin{definition}\label{def: consist oct}
Consider a system on $Q(A_N)$ consisting of (possibly different) octahedron type equations
\begin{equation}\label{eq:oct gen}
F(x_{im},x_{jm},x_{km},x_{ij},x_{ik},x_{jk})=0
\end{equation}
on all affine sublattices $Q(A_3)=\{(n_i,n_j,n_k,n_m):
n_i+n_j+n_k+n_m={\rm const}\}$. It
is called multidimensionally
consistent if it has a solution whose restrictions on all $Q(A_3)$
sublattices are generic solutions of corresponding equations.
\end{definition}
This version is more natural but looks much more restrictive, since there are $N+1\choose 4$ sublattices $Q(A_3)$ of $Q(A_N)$, and only $N\choose 3$ sublattices $\mathbb Z^3$ of $\mathbb Z^N$. For instance, in the case $N=4$ the lattice $Q(A_4)$ has 5 sublattices $Q(A_3)$, of which only four involve the distinguished coordinate axis $n_0$. Nevertheless, in Section \ref{s:3} we prove the following result (see Proposition \ref{st: 3-->5}).
\begin{theorem}
Definitions \ref{def: consist cubic} and \ref{def: consist oct} of
the multidimensional
consistency of octahedron type equations are
equivalent.
\end{theorem}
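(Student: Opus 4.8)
The direction \ref{def: consist oct} $\Rightarrow$ \ref{def: consist cubic} is immediate and carries no geometric content. Reading Definition \ref{def: consist cubic} on $Q(A_N)$ as explained above, the equations it imposes are exactly those attached to the $Q(A_3)$-sublattices through the distinguished axis $n_0$, and these form a subsystem of the system of Definition \ref{def: consist oct}. A solution whose restriction to \emph{every} $Q(A_3)$-sublattice is a generic solution of the corresponding octahedron equation is in particular generic on the sublattices through $n_0$, so it witnesses consistency in the sense of \ref{def: consist cubic}. The whole substance of the theorem therefore lies in the converse implication \ref{def: consist cubic} $\Rightarrow$ \ref{def: consist oct}.

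For the converse I would localize the problem to a single four-dimensional block. Any $Q(A_3)$-sublattice that does \emph{not} meet the $n_0$-axis is carried by a $4$-subset $\{i,j,k,m\}\subset\{1,\ldots,N\}$, and it is one of the five $Q(A_3)$-sublattices of the block $Q(A_4)$ on the indices $\{0,i,j,k,m\}$. Writing $E_l$ for the octahedron equation carried by the $4$-subset $\{0,i,j,k,m\}\setminus\{l\}$, the four equations $E_i,E_j,E_k,E_m$ contain $0$ and are thus already imposed by Definition \ref{def: consist cubic}, while the remaining $E_0$ is precisely the equation we must establish. Everything then reduces to the local assertion of Proposition \ref{st: 3-->5}: on one copy of $Q(A_4)$, the four octahedron equations on the $Q(A_3)$-sublattices through $n_0$ force the equation on the remaining $Q(A_3)$-sublattice. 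Granting this, a solution provided by Definition \ref{def: consist cubic} automatically satisfies every non-$n_0$ equation (apply the local assertion in each block), so that it satisfies the full system of Definition \ref{def: consist oct}; modulo the genericity check discussed below, this is consistency in that sense.

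It remains to prove the local assertion, and this is where the combinatorics of the octahedron becomes essential. On $Q(A_4)$ the ten fields $x_{ij}$ indexed by the pairs $\{i,j\}\subset\{0,1,2,3,4\}$, together with the ten triples $\{i,j,k\}$ read as lines, reproduce the incidence structure of the Desargues $10_3$ configuration (a pair lies on a triple exactly when it is contained in it), each octahedron equation coupling the six fields whose pairs lie in a fixed $4$-subset. My plan is to pass to the tripodal form of the equation, in which the relation is separated so that each of the four through-$n_0$ equations $E_l$ ($l\in\{1,2,3,4\}$) is solved for a combination built from the three fields $x_{0i}$, $i\ne l$. Eliminating the four fields $x_{01},x_{02},x_{03},x_{04}$ between these four relations should collapse them onto a single relation among the six fields $x_{ij}$ with $\{i,j\}\subset\{1,2,3,4\}$, which I expect to recognize as a branch of the fifth equation $E_0$. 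In the Schwarzian (multi-ratio) specialization this elimination is nothing but the closing of the Desargues configuration, i.e.\ Desargues' theorem itself, and the role of the tripodal form is to make the same cancellation available for an arbitrary $F$.

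The crux, and the main obstacle, is precisely this elimination for a general locally analytic $F$. For an unstructured $F$ the four through-$n_0$ equations would generically determine $x_{01},\ldots,x_{04}$ as functions of the remaining six fields, leaving \emph{no} residual relation; a nontrivial eliminant can appear only because $F$ is already assumed consistent in the sense of Definition \ref{def: consist cubic}, so that the structural data entering the four equations are interlocked rather than independent. The argument must extract exactly this interlocking from the tripodal form and show that it produces $E_0$ and nothing stronger or weaker. Two points then need separate care: the octahedron equation is in general multivalued, so the elimination has to be run on a fixed solution branch where the implicit function theorem applies (as flagged after the irreducibility condition); and one must check that imposing the fifth equation does not secretly reduce the freedom in the Cauchy data, which is what guarantees that the restrictions called for in Definition \ref{def: consist oct} are genuinely generic.
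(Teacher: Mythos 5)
Your reduction is the right one and matches the paper's: the implication from Definition \ref{def: consist oct} to Definition \ref{def: consist cubic} is immediate, and the converse localizes to a single $Q(A_4)$ block, where it is exactly the content of Proposition \ref{st: 3-->5}. The genuine gap is that you do not prove that local assertion; you only sketch a plan for it, and the plan rests on the tripodal form, which at this point of the argument is not available. In the paper the tripodal form (Proposition \ref{st:3leg}) is itself a nontrivial \emph{consequence} of consistency, extracted in Section \ref{s:var3} from the rank conditions of Proposition \ref{st:fgh loc} by integrating a system of functional-differential equations; the paper explicitly notes that it subsumes and is much stronger than Proposition \ref{st: 3-->5}, so it cannot be invoked as a known normal form of a general consistent $F$ without reversing the logical order. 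As written, your argument defers the entire analytic content of the theorem to an unproved lemma --- which you yourself flag as ``the crux, and the main obstacle.''

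What actually closes the gap is more elementary than the tripodal machinery. Write the three through-$n_0$ equations of one block in solved form $x_{12}=f$, $x_{13}=g$, $x_{23}=h$ as in (\ref{xij}) and differentiate the consistency identity $\hat g=\hat h$ for the two computed values of $x_{124}$ with respect to the initial data $x_1,x_2,x_3,x_4$; this exhibits a row vector in the left kernel of the $3\times 4$ Jacobian of $(f,g,h)$ with respect to $(x_1,x_2,x_3,x_4)$, so that matrix has rank $\le 2$. That rank condition is then literally the statement you want: solving two of the three equations for two of the variables $x_1,\dots,x_4$ and substituting into the third makes the remaining two variables cancel identically, leaving a relation among $x_{12},x_{13},x_{23},x_{14},x_{24},x_{34}$ alone --- the induced equation on $\langle 0\rangle$; the equation on $\langle 4\rangle$ comes from the companion rank condition in the variables $x_4,x_{14},x_{24},x_{34}$. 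Note also a counting discrepancy in your sketch: the paper eliminates two variables from a \emph{triple} of equations (three equations, four unknowns, with the degenerate rank supplying the residual relation), not four variables from four equations. Your closing remarks about working on a fixed solution branch and about genericity of the Cauchy data are sensible side conditions, but they are not the missing step.
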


Now we formulate the main result of this work. We classify
multidimensionally consistent
systems of octahedron type equations
(in notation of Definition \ref{def: consist oct}) modulo the
group of {\em admissible transformations}. This group consists of
changes of independent variables $n$ generated by the affine Weyl
group of $Q(A_N)$ (permutations of indices and translations by
lattice vectors) extended by the simultaneous inversion of all
coordinates $n\mapsto -n$, as well as of non-autonomous point
transformations of dependent variables, $x(n)\mapsto f(x(n),n)$.

\begin{theorem}\label{th:main}
Any multidimensionally consistent system of octahedron type equations on $Q(A_N)$ can be reduced by an admissible transformation to a system whose restriction to any sublattice $Q(A_3)$ has one of the following forms:
\begin{align}
\label{h1}\tag{\mbox{$\chi_1$}}
 & x_{ij}x_{km}-x_{ik}x_{jm}+x_{jk}x_{im}=0,\\[0.2em]
\label{h2}\tag{\mbox{$\chi_2$}}
 & \frac{(x_{im}-x_{ij})(x_{jm}-x_{jk})(x_{km}-x_{ik})}
        {(x_{ij}-x_{jm})(x_{jk}-x_{km})(x_{ik}-x_{im})}=-1,\\[0.4em]
\label{h3}\tag{\mbox{$\chi_3$}}
 & (x_{ik}-x_{ij})x_{im}+(x_{ij}-x_{jk})x_{jm}+(x_{jk}-x_{ik})x_{km}=0,\\[0.5em]
\label{h4}\tag{\mbox{$\chi_4$}}
 & \frac{x_{ik}-x_{ij}}{x_{im}}+\frac{x_{ij}-x_{jk}}{x_{jm}}
  +\frac{x_{jk}-x_{ik}}{x_{km}}=0,\\[0.5em]
\label{h5}\tag{\mbox{$\chi_5$}}
 & \frac{x_{ik}-x_{jk}}{x_{km}}=x_{ij}\Bigl(\frac{1}{x_{jm}}-\frac{1}{x_{im}}\Bigr).
\end{align}
Moreover, each of equations (\ref{h1}), (\ref{h2}) is
multidimensionally consistent with itself (that is, with the set
of the same equations for all $Q(A_3)$ sublattices), while
multidimensional sets involving any of the equations (\ref{h3}),
(\ref{h4}) include with necessity (\ref{h2}) on some of the
$Q(A_3)$ sublattices, and multidimensional sets involving
(\ref{h5}) include with necessity (\ref{h4}) on some of the
$Q(A_3)$ sublattices. The detailed description of consistent sets
is given in Theorem \ref{th:class}.
\end{theorem}

\begin{remark}
All these equations in $\mathbb Z^3$-form already appeared in the
literature. They were derived by the direct linearization method
in \cite{CWN_1986, NCWQ_1984, NCW_1985, Nijhoff_Capel_1990,
Dorfman_Nijhoff_1991, Bogdanov_Konopelchenko_1998}. First steps
towards consistency of the dKP equation on a root lattice of type
$A$ were made in \cite{Shinzawa_Saito_1998, Shinzawa_2000}.
\end{remark}
\medskip

\begin{remark}\label{rmk: Z^3 form}
It is instructive to look at the (somewhat more traditional)
$\mathbb Z^3$-form of equations from Theorem \ref{th:main}. Recall
that they are obtained by forgetting one of the indices. The high
symmetry grade of equations (\ref{h1}), (\ref{h2}) yields that
forgetting any one of the indices leads to the same equation on
$\mathbb Z^3$, namely
\[
x_{ij}x_k-x_{ik}x_j+x_{jk}x_i=0,
\]
resp.
\[
\frac{(x_{i}-x_{ij})(x_j-x_{jk})(x_k-x_{ik})}{(x_{ij}-x_{j})(x_{jk}-x_k)(x_{ik}-x_i)}=-1.
\]
As for equations (\ref{h3})--(\ref{h5}), one gets for each of them several
seemingly different equations on $\mathbb Z^3$, like
\[
\frac{x_{ik}-x_{ij}}{x_i}+\frac{x_{ij}-x_{jk}}{x_j}+\frac{x_{jk}-x_{ik}}{x_k}=0
\]
and
\[
\frac{x_{i}-x_{j}}{x_k}+\frac{x_{j}-x_{ij}}{x_{jk}}+\frac{x_{ij}-x_{i}}{x_{ik}}=0,
\]
both of which follow from (\ref{h4}) by forgetting one of the
indices. Without the unifying and symmetric $Q(A_3)$ notation, it
is not easy to recognize the equivalence of the latter two
equations. For instance, the (non-commutative versions of) these
equations are listed in \cite{Nijhoff_Capel_1990} as different
equations (1.6) and (1.9).
\end{remark}
\medskip

\begin{remark}\label{rmk: limit}
One can get equations (\ref{h4}), (\ref{h5}) from (\ref{h2}) via
simple limiting transitions. Performing in (\ref{h2}) a
non-autonomous point transformation $x(n)\mapsto
\epsilon^{-n_m}x(n)$, which amounts to the replacement of $x_{im},
x_{jm}, x_{km}$ by $\epsilon^{-1} x_{im}, \epsilon^{-1}
x_{jm},\epsilon^{-1} x_{km}$, and then sending $\epsilon\to 0$, we
arrive at equation (\ref{h4}). Analogously, performing in 
(\ref{h4}) the point transformation $x(n)\mapsto
\delta^{n_k}x(n)$, which amounts to the replacement of $x_{ik},
x_{jk}, x_{km}$ by $\delta x_{im}, \delta x_{jm}, \delta x_{km}$,
and then sending $\delta\to 0$, we arrive at equation (\ref{h5}).
Of course, these limiting transitions do not belong to our group
of admissible transformations.
\end{remark}
\medskip

The contents of the paper is as follows.

In Section \ref{s:Hirota} we give, for the sake of completeness, a
simple and well known derivation of the dKP equation and the
related octahedron type equations from the compatibility of
auxiliary linear problems. The main result of the present work,
Theorem \ref{th:main} (or its detailed version Theorem
\ref{th:class}), says that equations derived in this section
exhaust the list of multidimensionally consistent equations of
this type.

Section \ref{s:problem} is devoted to the definition of
multidimensional consistency for the octahedron type equations.
This definition is not quite straightforward, since the underlying
lattice $Q(A_3)$ is more sophisticated than the standard cubic one. 
The main problem is to
find a suitable multidimensional lattice containing 
several copies of the $Q(A_3)$ lattice which can simultaneously
support generic solutions of the discrete octahedron type
equations. The fact that this problem is not trivial is
illustrated in Section \ref{ss:no fcc} by a failure of one
possible definition. A successful definition is then illustrated
by the example of the dKP equation in Section \ref{subsect: dKP
consist} and formulated in full generality in Section \ref{s:3}.
An elementary combinatorial cell (a 4-cell of the root lattice $Q(A_4)$ with five octahedral faces) is best illustrated by the Desargues configuration, see Section
\ref{s: cells}.

Section \ref{s:var3} contains the central technical observation:
each octahedron equation of the multidimensionally consistent
system can be written in eight ways in the so called tripodal
form, and the tripodal forms of equations on the adjacent
octahedra must combine themselves in a very special way. Actually,
the necessary conditions for consistency established in this
section are the basis for the subsequent solution of the
classification problem.

This solution starts in Section \ref{s:3leg} where we classify all
octahedron type equations admitting eight tripodal forms. This
conditions turns out to be stringent enough to produce a finite
list of equations, given in Theorem \ref{th:3leg_types}.

Finally, in Section \ref{s:class} we combine tripodal equations into
consistent systems, a complete list of which is given in Theorem \ref{th:class}.

\section{Equations of octahedron type through linear problems}
\label{s:Hirota}

All equations from the list are related to each other via
difference substitutions, and can be easily derived from simple
linear problems like
\begin{equation}\label{lin.lin}
  f_2-f_1=af,\quad f_3-f_1=bf.
\end{equation}
It should be noticed that all four faces of the tetrahedron with
the vertices $f,f_1,f_2,f_3$ are on the same footing, because, due
to (\ref{lin.lin}), there hold also the further linear equations:
\begin{equation}\label{lin.lin1}
  f_3-f_2=cf,\quad f_2-f_1=d(f_3-f_1),
\end{equation}
where $c=b-a$, $d=a/b$. Any two out of the four equations
(\ref{lin.lin}), (\ref{lin.lin1}) are equivalent to
(\ref{lin.lin}).

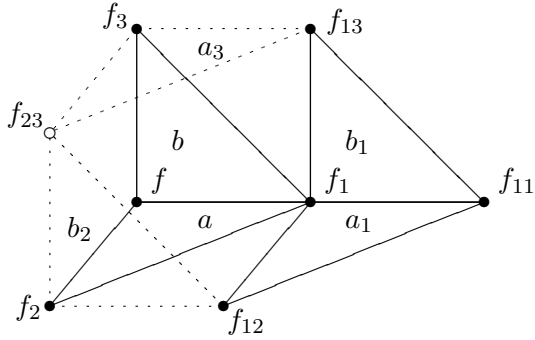
\begin{figure}[htbp]
\begin{center}
\setlength{\unitlength}{0.06em}
\begin{picture}(400,180)(-120,-60)
 \drawline(0,0)(100,0)(-50,-60)(0,0)                   
 \drawline(0,0)(100,0)(0,100)(0,0)                     
 \dashline{2}(-47.5,42.5)(0,100)(100,100)(-47,41.5)    
 \dashline{2}(-50,37)(-50,-60)(50,-60)(-47.5,37.5)     
 \drawline(100,0)(200,0)(50,-60)(100,0)                
 \drawline(100,0)(200,0)(100,100)(100,0)               
 \put(0,100){\circle*{6}}
 \put(100,100){\circle*{6}}
 \put(-50,-60){\circle*{6}}
 \put(50,-60){\circle*{6}}
 \put(0,0){\circle*{6}}
 \put(100,0){\circle*{6}}
 \put(200,0){\circle*{6}}
 \put(-50,40){\circle{6}}
 \put(35,-15){$a$}
 \put(20,30){$b$}
 \put(120,-15){$a_1$}
 \put(120,30){$b_1$}
 \put(35,85){$a_3$}
 \put(-40,-20){$b_2$}
 \put(8,8){$f$}
 \put(108,8){$f_1$}
 \put(208,8){$f_{11}$}
 \put(-70,-65){$f_2$}
 \put(52,-73){$f_{12}$}
 \put(-20,103){$f_3$}
 \put(109,100){$f_{13}$}
 \put(-75,46){$f_{23}$}
\end{picture}
\end{center}
\caption{Linear equations on triangles.}\label{cold:hex.fig1}
\end{figure}
It is easy to compute the compatibility conditions (which express
the equality of two alternative expressions for $f_{23}$ through
the initial values $f$, $f_1$, $f_{11}$, see Figure
\ref{cold:hex.fig1}):
\begin{equation}\label{lin.ab}
  a_3+b_1=b_2+a_1,\quad a_3b=b_2a.
\end{equation}
These formulas should be interpreted as a map
$(a,a_1,b,b_1)\mapsto(a_3,b_2)$. This map defines the evolution of
the initial data consisting, for instance, of the values of $a$
prescribed on the coordinate plane $12$ and the values of $b$
prescribed on the coordinate plane $13$. Expressing in
(\ref{lin.ab}) $a$ and $b$ through $f$ according to
(\ref{lin.lin}), we obtain an equation connecting 6 out of 8
values of $f$ at the vertices of an elementary cube, of the type
(\ref{h4}):
\begin{equation}\label{lin.f}
  \frac{f_{13}-f_{12}}{f_1}
 +\frac{f_{12}-f_{23}}{f_2}
 +\frac{f_{23}-f_{13}}{f_3}=0.
\end{equation}
Thus, there naturally appears the decomposition of the cubic
lattice into tetrahedra (which support the linear problem) and
octahedra (which support the nonlinear equations).

Each equations in (\ref{lin.ab}) can be interpreted as a
conservation law and allows for an introduction of a potential.
For instance, due to the first equation in (\ref{lin.ab}) one can
introduce a function $\rho$ on the vertices of the cubic lattice,
such that
\[
 a=\rho_2-\rho_1,\quad b=\rho_3-\rho_1,
\]
which brings the second equation in (\ref{lin.ab}) into the form
\begin{equation}\label{lin.rho}
 (\rho_{23}-\rho_{13})(\rho_3-\rho_1)=(\rho_{23}-\rho_{12})(\rho_2-\rho_1),
\end{equation}
which is equivalent to (\ref{h3}). Analogously, the second
equation allows us to introduce a vertex function $q$ such that
\[
 a=\frac{q_2}{q},\quad b=\frac{q_3}{q},
\]
which brings the first equation to the form
\begin{equation}\label{lin.q}
 \frac{q_{12}-q_{13}}{q_1}=q_{23}\Bigl(\frac1{q_3}-\frac1{q_2}\Bigr).
\end{equation}
which is nothing but (\ref{h5}). Finally, setting
\[
 q=\frac{\tau_1}{\tau}\quad\Rightarrow\quad
 a=\frac{\tau_{12}\tau}{\tau_1\tau_2},\quad
 b=\frac{\tau_{13}\tau}{\tau_1\tau_3},
\]
we can rewrite equation (\ref{lin.q}) as
\[
(T_1-1)\left(\frac{\tau_2\tau_{13}}{\tau_1\tau_{23}}
 -\frac{\tau_3\tau_{12}}{\tau_1\tau_{23}}\right)=0,
\]
which yields the Hirota equation (\ref{h1})
\begin{equation}\label{lin.tau}
 \tau_2\tau_{13}-\tau_3\tau_{12}=c\tau_1\tau_{23}.
\end{equation}

In order to obtain equation (\ref{h2}), one has to start with the
linear problems
\begin{equation}\label{lin.lin'}
  \psi_2-\psi=u(\psi_1-\psi),\quad \psi_3-\psi=v(\psi_1-\psi).
\end{equation}
These linear problems are not essentially different from
(\ref{lin.lin}): on one hand, they are gauge equivalent, and on
the other hand, they have already appeared in (\ref{lin.lin1}), so
that on lattices of a sufficiently large dimension both types of
linear problems coexist. The condition of compatibility of
(\ref{lin.lin'}) leads to the nonlinear equations
\begin{equation}\label{lin.uv}
 (u_3-1)(v-1)=(v_2-1)(u-1),\quad u_3v_1=v_2u_1,
\end{equation}
which are interpreted, as before, as a map
$(u,u_1,v,v_1)\mapsto(u_3,v_2)$. Variables $\psi$ satisfy equation
of the type (\ref{h2}):
\begin{equation}\label{lin.psi}
 \frac{(\psi_1-\psi_{12})(\psi_2-\psi_{23})(\psi_3-\psi_{13})}
      {(\psi_{12}-\psi_2)(\psi_{23}-\psi_3)(\psi_{13}-\psi_1)}=-1.
\end{equation}
Like in the previous example, both equations (\ref{lin.uv}) can be
interpreted as conservation laws and allow for an introduction of
a potential. However, this time these two equations are similar
and lead both to the same equations. For instance, one can resolve
the second equation in (\ref{lin.uv}) by introducing a vertex
function $f$ according to
\[
  u=\frac{f_1}{f_2},\quad v=\frac{f_1}{f_3},
\]
and then the first equation in (\ref{lin.uv}) leads again to
(\ref{lin.f}).

One can add one more equation which appears via the substitution
\[
  h=\frac{q_1}{q_3}=\frac{\tau_{12}\tau_3}{\tau_1\tau_{23}},
\]
which leads to
\begin{equation}\label{lin.Y}
 (h_{12}-1)(h_3-1)=h_2h_{13}(1-h^{-1}_1)(1-h^{-1}_{23}).
\end{equation}
This equation will be discussed in Section \ref{sect: Y-eq} under
the name (\ref{h6}). One can express the variables $h$ directly
through $\psi$, by composing all intermediate transformations:
\[
 h=\frac{q_1}{q_3}=\frac{a}{b}=\frac{f_1-f_2}{f_3-f_2}=\frac{1/u-1}{1/v-1}
  =\frac{\dfrac{\psi_2-\psi}{\psi_1-\psi}-1}{\dfrac{\psi_2-\psi}{\psi_3-\psi}-1}
  =\frac{(\psi_2-\psi_1)(\psi_3-\psi)}{(\psi_2-\psi_3)(\psi_1-\psi)}.
\]
It can be checked directly that this substitution turns equation
(\ref{lin.Y}) into a product of four copies of equation
(\ref{lin.psi}), at the original vertex and at its shifts in three
coordinate directions.

Thus, all equations (\ref{h1})--(\ref{h5}) and (\ref{lin.Y}) are
related to one another in a rather simple manner:
\[
\begin{array}{ccccccccccc}
&&&&&& (\ref{lin.rho})=(\ref{h5}) && (\ref{lin.Y})=(\ref{h6}) &  &\\
  &&&&&& \rho & & h & &\\
  &&&&&& \downarrow && \uparrow &&\\
 \psi &\to & u,v &\leftarrow & f &\to & a,b &\leftarrow & q &\leftarrow & \tau\\[0.5em]
  (\ref{lin.psi})=(\ref{h2}) &&&& (\ref{lin.f})=(\ref{h4})
 &&&&(\ref{lin.q})=(\ref{h5}) && (\ref{lin.tau})=(\ref{h1})
\end{array}
\]


\section{Multidimensional consistency}\label{s:problem}

\subsection{Consistency for equations of the cube type}
\label{ss:cube}

The idea behind the definition of multidimensional consistency given in Section \ref{s:intro} (Definitions \ref{def: consist cubic}, \ref{def: consist oct}) is rather general and can be implemented for various types of discrete systems. However, it refers to the properties of equations on the whole multidimensional lattice, which is difficult to verify. It would be preferable to have some sufficient conditions for multidimensional consistency which refer to some local (finite) piece of the lattice.

For 3D equations of the cube type, like dBKP equation \eqref{eq: Miwas Miwa}, such a local sufficient
condition refers to one 4D cube, as follows (see, e.g., \cite{Adler_Bobenko_Suris_2003}). A solution of a
non-degenerate equation of the cube type which is generic on each 3D
sublattice can be defined by the initial data consisting of the
values of $x$ on all two-dimensional coordinate planes. Indeed,
these data are obviously independent, and the inductive
application of the equation allows one to extend the solution from
the coordinate planes to the whole of $\mathbb Z^N$ (compute $x_{ijk}$ from the known $x$, $x_i$ and $x_{ij}$), provided one
does not encounter contradictions in this inductive process. It is
sufficient to verify the lack of contradictions within one 4D
cube, which is done as follows. Initial data within one 4D cube
are:
\[
x, \; x_i, \; x_{ij} \; (1\le i<j\le 4).
\]
Application of equations on the four cubic faces adjacent to the
vertex $x$ yields the values $x_{ijk}\;$ $(1\le i<j<k\le 4)$, and
then application of equations on the four cubic faces adjacent to
the vertex $x_{1234}$ yields four {\em a priori} different values
for this last field. The 4D consistency takes place (and implies
multidimensional consistency) if these four values identically
coincide (thus, one has three conditions in terms of 11 initial
data to be fulfilled). A more detailed discussion can be found in
\cite{Adler_Bobenko_Suris_2003}, where it was shown that the dBKP
equation satisfies this criterium.

\subsection{Lack of consistency for the dKP equation on the face
centered cubic lattice} \label{ss:no fcc}

The type of initial value problem discussed for the cube type
equations in the previous section is not applicable to 3D
equations of the octahedron type, because of the absence of the
fields $x_{ijk}$ from the equations (\ref{eq:oct gen}). Hence, the
very notion of the multidimensional consistency in this concrete
situation has to be modified.

Here we analyze a possible definition of 4D consistency for the dKP equation suggested by its original form (\ref{eq: Hirotas Hirota}). This definition turns out to be unsuccessful, so that this section is not necessary for the further reading, however we hope that it will clearly demonstrate the non-triviality of the problem of finding the suitable notion.

Equation (\ref{eq: Hirotas Hirota}) decomposes into two independent systems, one for the fields $x:\mathbb Z_{\rm even}^3\to \mathbb C$ on the so called even (or black) sublattice
\[
\mathbb Z_{\rm even}^3=\left\{m=(m_1,m_2,m_3)\in\mathbb Z ^3:
m_1+m_2+m_3\equiv 0\pmod 2\right\},
\]
and another one for the fields $x:\mathbb Z_{\rm odd}^3\to \mathbb
C$ on the odd (or white) sublattice
\[
\mathbb Z_{\rm odd}^3=\left\{m=(m_1,m_2,m_3)\in\mathbb Z ^3:
m_1+m_2+m_3\equiv 1\pmod 2\right\};
\]
it will be enough to consider the half defined on $\mathbb
Z^3_{\rm even}$ (say). The latter is known as the face centered cubic (fcc)
lattice. Its set of vertices is clearly in a
one-to-one correspondence to $\mathbb Z^3$, but its (Delaunay) cell structure is quite different.
Its 2-cells are equilateral triangles, while its 3-cells are octahedra and tetrahedra.
Equation (\ref{eq: Hirotas Hirota}) relates six fields
assigned to any elementary octahedron of $\mathbb Z^3_{\rm even}$,
centered at $m\in\mathbb Z_{\rm odd}^3$. Two octahedra with a nonempty
intersection either share an edge (those whose centers are
neighbors in $\mathbb Z_{\rm odd}^3$) or either share a vertex
(those whose centers are diagonal neighbors in $\mathbb Z_{\rm
odd}^3$, i.e., are at distance 2 from one another).

One could attempt to define the multidimensional consistency of
octahedron type equations by imposing them on all $\mathbb Z_{\rm
even}^3$ sublattices in
\[
\mathbb Z_{\rm even}^N=\left\{m=(m_1,\ldots,m_N)\in\mathbb Z^N:
m_1+\ldots+m_N\equiv 0\pmod 2\right\}.
\]
However, this idea turns out to be invalid. Indeed, one can show
the inconsistency of two copies of (\ref{eq: Hirotas Hirota}),
corresponding to the sublattices (123) and (124):
\begin{align}
\label{Hir11}
 x(m+e_3)x(m-e_3)-ax(m+e_1)x(m-e_1)+bx(m+e_2)x(m-e_2)&=0,\\
\label{Hir12}
 x(m+e_4)x(m-e_4)-cx(m+e_1)x(m-e_1)+dx(m+e_2)x(m-e_2)&=0,
\end{align}
where $a,b,c,d$ are arbitrary constants. As initial data for these
two equations one can take the values of $x$ at four
two-dimensional planes parallel to the coordinate plane 12:
\[
  x(m_1,m_2,0,0),\quad x(m_1,m_2,-1,0),\quad x(m_1,m_2,0,-1),\quad x(m_1,m_2,-1,-1).
\]
These data are free, in the sense that they are not subject to any
equation among (\ref{Hir11}), (\ref{Hir12}), or remaining two
equations corresponding to the sublattices 134 and 234, since
any of these equations contains at least one pair of points
differing by $2\ee_3$ or by $2\ee_4$. In particular, consider the
following points in these two planes:
\begin{gather*}
 (m_1,m_2)=(\pm2,0),~ (1,\pm1),~(0,0),~(-1,\pm1),~(0,\pm2)
 \quad\text{with}\quad (m_3,m_4)=(0,0);\\
 (m_1,m_2)=(\pm1,0),~ (0,\pm1)\quad\text{with}\quad (m_3,m_4)=(-1,0),~(0,-1);\\
 (m_1,m_2)=(0,0) \quad\text{with}\quad (m_3,m_4)=(-1,-1).
\end{gather*}
These are round points on Fig.~\ref{Fig: failure}. Now equations
(\ref{Hir11}), (\ref{Hir12}) allow us to compute $x$ at the triangular
points
\[
 (0,0,1,-1),\quad (0,0,-1,1),\quad
 (\pm1,0,1,0),\quad (0,\pm1,1,0),\quad
 (\pm1,0,0,1),\quad (0,\pm1,0,1),
\]
and then two different values at the square point $(0,0,1,1)$. It can
be checked that, for any choice of the non-vanishing coefficients
$a,b,c,d$, these two values of $x(0,0,1,1)$ do not coincide
identically (as functions of the initial data).

\begin{figure}[ht]
\centerline{\includegraphics[width=7cm]{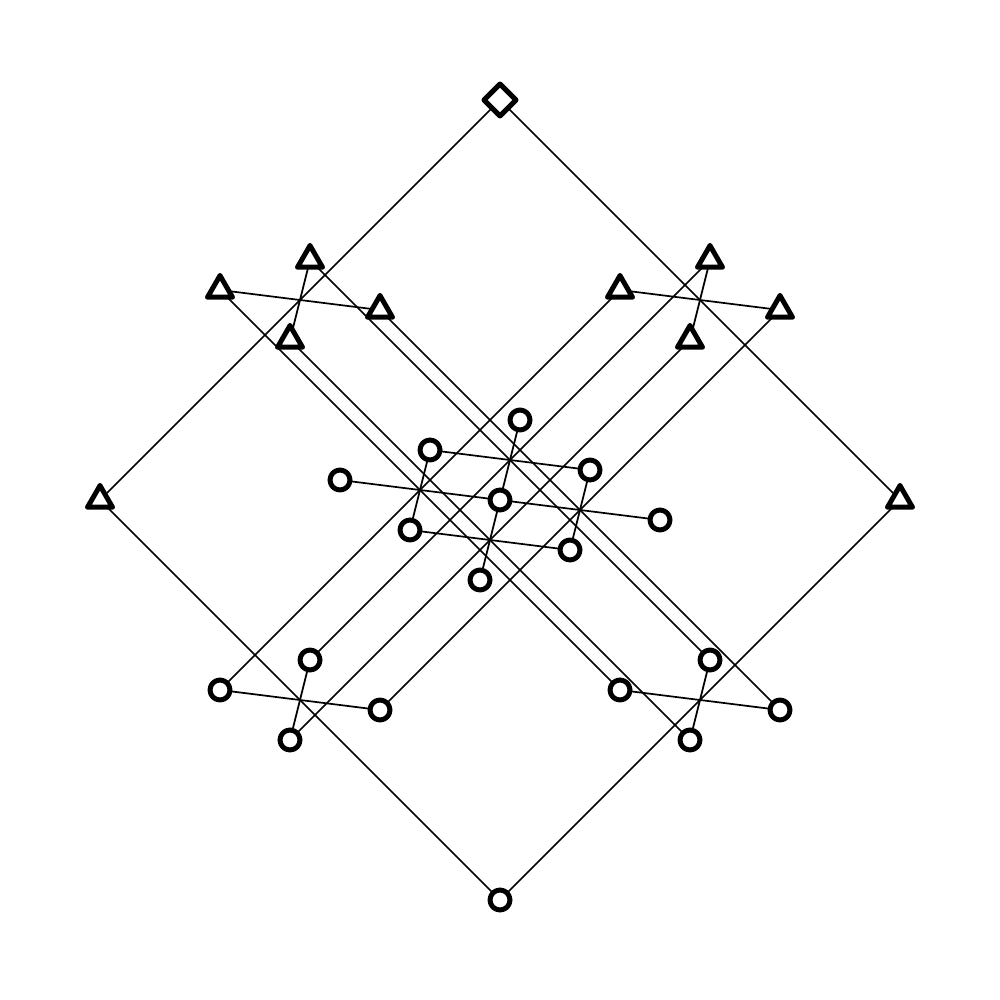}}
\caption{Consistency check for equations (\ref{Hir11}),
(\ref{Hir12}) leads to a contradiction at the uppermost vertex.}
\label{Fig: failure}
\end{figure}

Note that two neighboring octahedra in different three-dimensional
lattices $\mathbb Z^3$ with two common coordinate directions (as
those underlying equations (\ref{Hir11}), (\ref{Hir12})) do not share
triangular faces; rather, they share two pairs of antipodal
points, or an 2D equatorial square-formed section spanned by these points.

\subsection{Consistency of the dKP equation on a 4D cube}
\label{subsect: dKP consist}

Here we prove a local statement about the 4D consistency for the $\mathbb Z^3$ version (\ref{eq: Miwas Hirota}) of the dKP equation, which deals with the octahedra contained within one elementary 4D cube of $\mathbb Z^4$. Each of the eight 3D cubic faces of the 4D cube contains such an octahedron, so that we have the eight octahedra
\[
[jkm] =\{x_j,x_k,x_m,x_{jk},x_{jm},x_{km}\}\quad{\rm and}\quad T_i[jkm]=\{x_{ij},x_{ik},x_{im},x_{ijk},x_{ijm},x_{ikm}\}
\]
(where $\{i,j,k,m\}=\{1,2,3,4\}$). Neither of the equations on these octahedra involves the fields $x$ and $x_{1234}$, so that only 14 out of 16 vertices are involved. It turns out that one can take 9 of them as independent initial data, for instance,
\[
x_1,\; x_2,\; x_3,\; x_4,\; x_{14},\; x_{24},\; x_{34},\; x_{134},\; x_{234},
\]
and to find the remaining 5 fields using {\em only three of the equations}, namely those for the octahedra $[124]$, $[134]$, $[234]$:
\begin{equation}\label{dKP xij}
\begin{aligned}
 & x_{12}x_4-x_{14}x_2+x_{24}x_1=0,\\
 & x_{13}x_4-x_{14}x_3+x_{34}x_1=0,\\
 & x_{23}x_4-x_{24}x_3+x_{34}x_2=0,
\end{aligned}
\end{equation}
and for their shifted copies $T_3[124]$, $T_2[134]$, $T_1[234]$:
\begin{equation}\label{dKP xijk}
\begin{aligned}
 & x_{123}x_{34}-x_{134}x_{23}+x_{234}x_{13}=0,\\
 & x_{123}x_{24}-x_{124}x_{23}+x_{234}x_{12}=0,\\
 & x_{123}x_{14}-x_{124}x_{13}+x_{134}x_{12}=0.
\end{aligned}
\end{equation}
We use equations (\ref{dKP xij}) to determine $x_{12}$, $x_{13}$, $x_{23}$. Then, we use the first equation in (\ref{dKP xijk}) to determine $x_{123}$. Finally, we have two alternative answers for $x_{124}$ which come from the last two equations in (\ref{dKP xijk}). We now show that these two values coincide as functions of the initial data, so that the 4D consistency takes place. The following computations might make an impression of a repeated application of a certain skillful trick, but we will show later that this trick is actually a key structural feature common for all consistent octahedron type equations, namely the so called {\em tripodal form} of these equations.

First of all, we show that the values of $x_{12}$, $x_{13}$, $x_{23}$ determined from
(\ref{dKP xij}), together with the initial data $x_1$, $x_2$, $x_3$, automatically satisfy the dKP equation on the octahedron $[123]$. To this end, we rewrite (\ref{dKP xij}) as
\begin{equation}\label{dKP xij tripod 1}
\begin{aligned}
 & \frac{x_{12}}{x_1x_2}-\frac{x_{14}}{x_1x_4}+\frac{x_{24}}{x_2x_4}=0,\\
 & \frac{x_{13}}{x_1x_3}-\frac{x_{14}}{x_1x_4}+\frac{x_{34}}{x_3x_4}=0,\\
 & \frac{x_{23}}{x_2x_3}-\frac{x_{24}}{x_2x_4}+\frac{x_{34}}{x_3x_4}=0.
\end{aligned}
\end{equation}
An obvious linear combination of these equations immediately leads to
\[
\frac{x_{12}}{x_1x_2}-\frac{x_{13}}{x_1x_3}+\frac{x_{23}}{x_2x_3}=0,
\]
which is equivalent to
\begin{equation}\label{dKP 123}
x_{12}x_3-x_{13}x_2+x_{23}x_1=0.
\end{equation}

Second, we show that the values of $x_{12}$, $x_{13}$, $x_{23}$ determined from
(\ref{dKP xij}), together with $x_{14}$, $x_{24}$, $x_{34}$, automatically  satisfy an equation which literally coincides with dKP. For this aim, we rewrite equations (\ref{dKP xij}) in another equivalent form:
\begin{equation}\label{dKP xij tripod 2}
\begin{aligned}
& \frac{x_{12}}{x_{14}x_{24}}-\frac{x_2}{x_4x_{24}}+\frac{x_1}{x_4x_{14}}=0,\\
& \frac{x_{13}}{x_{14}x_{34}}-\frac{x_3}{x_4x_{34}}+\frac{x_1}{x_4x_{14}}=0,\\
& \frac{x_{23}}{x_{24}x_{34}}-\frac{x_3}{x_4x_{34}}+\frac{x_2}{x_4x_{24}}=0.
\end{aligned}
\end{equation}
A suitable linear combination of these equation leads to
\[
\frac{x_{12}}{x_{14}x_{24}}-\frac{x_{13}}{x_{14}x_{34}}+\frac{x_{23}}{x_{24}x_{34}}=0,
\]
which is equivalent to
\begin{equation}\label{dKP even}
x_{12}x_{34}-x_{13}x_{24}+x_{23}x_{14}=0.
\end{equation}

Finally, we show the 4D consistency. The first equation in (\ref{dKP xijk}) used to determine $x_{123}$ is equivalent to
\begin{equation}\label{dKP xijk tripod 1}
\frac{x_{34}}{x_{13}x_{23}}-\frac{x_{134}}{x_{13}x_{123}}+
\frac{x_{234}}{x_{23}x_{123}}=0.
\end{equation}
The last two equations in (\ref{dKP xijk}) used to determine $x_{124}$ are equivalent to
\begin{equation}\label{dKP xijk tripod 2}
\begin{aligned}
 & \frac{x_{24}}{x_{12}x_{23}}-\frac{x_{124}}{x_{12}x_{123}}+
 \frac{x_{234}}{x_{12}x_{123}}=0,\\
 & \frac{x_{14}}{x_{12}x_{13}}-\frac{x_{124}}{x_{12}x_{123}}+
 \frac{x_{134}}{x_{13}x_{123}}=0.
\end{aligned}
\end{equation}
They give the same value of $x_{124}$ if and only if
\[
\frac{x_{14}}{x_{12}x_{13}}-\frac{x_{24}}{x_{12}x_{23}}
+\frac{x_{134}}{x_{13}x_{123}}-\frac{x_{234}}{x_{23}x_{123}}=0.
\]
Combining the latter equation with (\ref{dKP xijk tripod 1}), we see that the 4D consistency condition is reduced to
\[
\frac{x_{14}}{x_{12}x_{13}}-\frac{x_{24}}{x_{12}x_{23}}+\frac{x_{34}}{x_{13}x_{23}}=0,
\]
which is equivalent to the already proven equation (\ref{dKP even}).

\subsection{Cell structure of the lattice $Q(A_N)$}
\label{s: cells}

Considerations of Section \ref{subsect: dKP consist} give a hint
towards a valid formulation of the notion of multidimensional
consistency for the octahedron type equations. In particular, equation (\ref{dKP even}) for the even shifts calls for an interpretation as an equation on a further octahedron present in the 4D lattice. Such interpretation is enabled by an embedding of the lattice $Q(A_3)$ into the root lattice $Q(A_N)$, given in (\ref{eq: AN lattice}). We start with a short description of the Delaunay cell structure of the four-dimensional lattice
$Q(A_4)$ \cite{Conway_Sloane_1991, Moody_Patera_1992}. We do not
go into detail and give an elementary description appropriate for
our purposes. For each $N$ there are $N$ sorts of $N$-cells of
$Q(A_N)$ denoted by $P(k,N)$, $k=1,\ldots,N$.

\begin{quote}
\noindent
{\em Two sorts of 2-cells:}
\begin{itemize}
\item[$P(1,2)$:] black triangles $\{x_i,x_j,x_k\}$,
\item[$P(2,2)$:] white triangles $\{x_{ij},x_{ik},x_{jk}\}$;
\end{itemize}
\end{quote}

\begin{quote}
\noindent
{\em Three sorts of 3-cells:}
\begin{itemize}
\item[$P(1,3)$:] black tetrahedra $\{x_i,x_j,x_k,x_{\ell}\}$, with all
four facets being black triangles,
\item[$P(2,3)$:] octahedra
$[ijk\ell]=\{x_{ij},x_{ik},x_{i\ell},x_{jk},x_{j\ell},x_{k\ell}\}$,
the eight triangular facets of each octahedron being bi-colored,
consult Fig.~\ref{fig: octahedron};
\item[$P(3,3)$:] white tetrahedra $\{x_{ijk},x_{ij\ell},x_{ik\ell},x_{jk\ell}\}$,
with all four facets being white triangles;
\end{itemize}
\end{quote}

\begin{quote}
\noindent
{\em Four sorts of 4-cells:}
\begin{itemize}
\item[$P(1,4)$:] black 4-simplices $\{x_i,x_j,x_k,x_{\ell},x_m\}$,
with all five facets being black tetrahedra,
\item[$P(2,4)$:] 4-ambo-simplices, in the terminology of \cite{Conway_Sloane_1991},
with the ten vertices $x_{ij}$, $i,j\in\{0,1,\ldots,4\}$; each such polytope
has five octahedral facets $\langle i\rangle=[jk\ell m]$ and five black tetrahedral
facets $T_i\{x_j,x_k,x_\ell,x_m\}$;
\item[$P(3,4)$:] 4-ambo-simplices with the ten vertices
$x_{ijk}$, $i,j,k\in\{0,1,2,3,4\}$; each of these polytopes has five octahedral
facets $T_i[jk\ell m]$ and five white tetrahedral facets;
\item[$P(4,4)$:] white 4-simplices
$\{x_{ijk\ell},x_{ijkm},x_{ij\ell m},x_{ik\ell m},x_{jk\ell m}\}$,
with all five facets being white tetrahedra.
\end{itemize}
\end{quote}

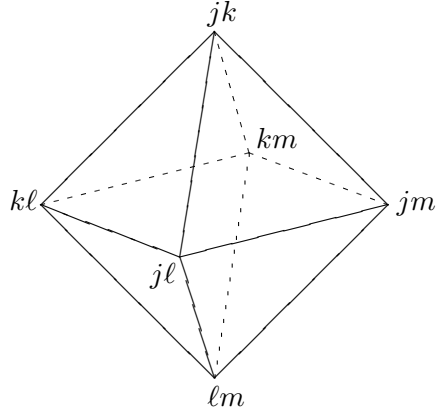
\begin{figure}[thb]
\setlength{\unitlength}{0.06em}
\centerline{
\begin{picture}(200,240)(-100,-120)
 \path(-100,0)(0,100)(100,0)(0,-100)(-100,0)(-20,-30)(100,0)
 \path(0,100)(-20,-30)(0,-100)
 \dashline{3}(0,100)(20,30)(0,-100)
 \dashline{3}(-100,0)(20,30)(100,0)
 \put(-4,106){$jk$}\put(-4,-116){$\ell m$}
 \put(-118,-3){$k\ell$}\put(105,-3){$jm$}
 \put(-37,-45){$j\ell$}\put(24,33){$km$}
\end{picture}}
\caption{Vertex enumeration of the octahedron $\langle i\rangle=[jk\ell m]$, where $i\in\{0,1,2,3,4\}$ and $\{j,k,\ell,m\}=\{0,1,2,3,4\}\setminus\{i\}$. Opposite vertices carry complementary pairs of indices. Facets are bi-colored; there are four white triangles like $\{jk,j\ell,k\ell\}$ missing one of the indices ($m$ in this case), and four black triangles like $\{jk,j\ell,jm\}=T_j\{k,\ell,m\}$ sharing one common index ($j$ in this case).}
\label{fig: octahedron}
\end{figure}

The combinatorial description of higher dimensional cells in the
root lattices
$Q(A_N)$ with bigger $N$ is given analogously.

The affine Weyl group is generated by permutations of indices and
translations. It permutes the cells of the same sort.

We finish this section with a suggestive description of the
admittedly somewhat complicated combinatorics of a 4-ambo-simplex
$P(2,4)$ whose five octahedral faces $\EQ0$, $\EQ1$, $\EQ2$,
$\EQ3$, $\EQ4$ carry a quintuple of consistent octahedron type
equations. Remarkably, this description not only has a
combinatorial meaning, but also has a direct relation to the
multidimensional consistency of equation (\ref{h2}), explained in
\cite[p. 285]{DDG_book}. Consider a map $x:Q(A_N)\to{\mathbb
R}P^n$ satisfying the following condition: the image of any white
triangle $\{x_{ij},x_{ik},x_{jk}\}$ is a collinear triple of
points. Such maps were introduced in the three-dimensional
situation by Schief \cite{Schief_talk} under the name
``Laplace-Darboux lattices''. He also observed the relation of
their four-dimensional consistency to the Desargues theorem
(private communication; see \cite{Adler_talk, Bobenko_talk}).
These maps (under the name ``Desargues maps'') are studied in
detail in the recent work by Doliwa \cite{Doliwa_2010,
Doliwa_prepr}. A connection of the Desargues theorem to equation
(\ref{h4}) appeared in \cite{Adler_tangent}.

It is easy to realize that the image of an octahedron $[ijkm]$ is
a complete quadrilateral, as on Fig.~\ref{fig:Menelaus}. It
contains four lines which are images of the white triangular faces
$\{x_{ij},x_{ik},x_{jk}\}$, and four triangles which are images of
the black triangular faces $T_i\{x_j,x_k,x_m\}$. An analytic
description of the complete quadrilateral is given, according to
the classical Menelaus theorem, by equation (\ref{h2}).

\begin{figure}[thb]
\centerline{\includegraphics[width=40mm]{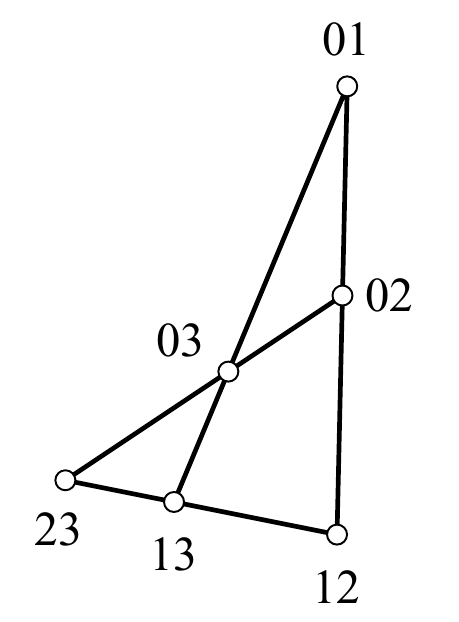}}
\caption{Geometric interpretation of equation (\ref{h2}): a
Menelaus configuration, or a complete quadrilateral, is an image
of the octahedron $[0123]$; four lines correspond to the white
triangles $\{jk,j\ell,k\ell\}$, while four triangles correspond to
the black triangles $\{jk,j\ell,jm\}$.} \label{fig:Menelaus}
\end{figure}

The image of the 4-ambo-simplex $P(2,4)$ is then a configuration
like the one on Fig.~\ref{fig:Desargues}. One can recognize here
five complete quadrilaterals, corresponding to the octahedra
$\langle i\rangle=[jk\ell m]$, as well as the images of the five
black tetrahedra $T_i\{x_j,x_k,x_\ell,x_m\}$. The ten lines are
the images of the ten white triangular faces of the octahedra. The
configuration on Fig.~\ref{fig:Desargues} illustrates one of the
most important incidence theorems of the classical projective
geometry -- the Desargues theorem. Its five-fold symmetry is not
obvious from the first glance (and from the original formulation
of the theorem), but it was well known to the classics, see, e.g.,
\cite{Hilbert-Cohn-Vossen}.

\begin{figure}[thb]
\centerline{\includegraphics[width=8cm]{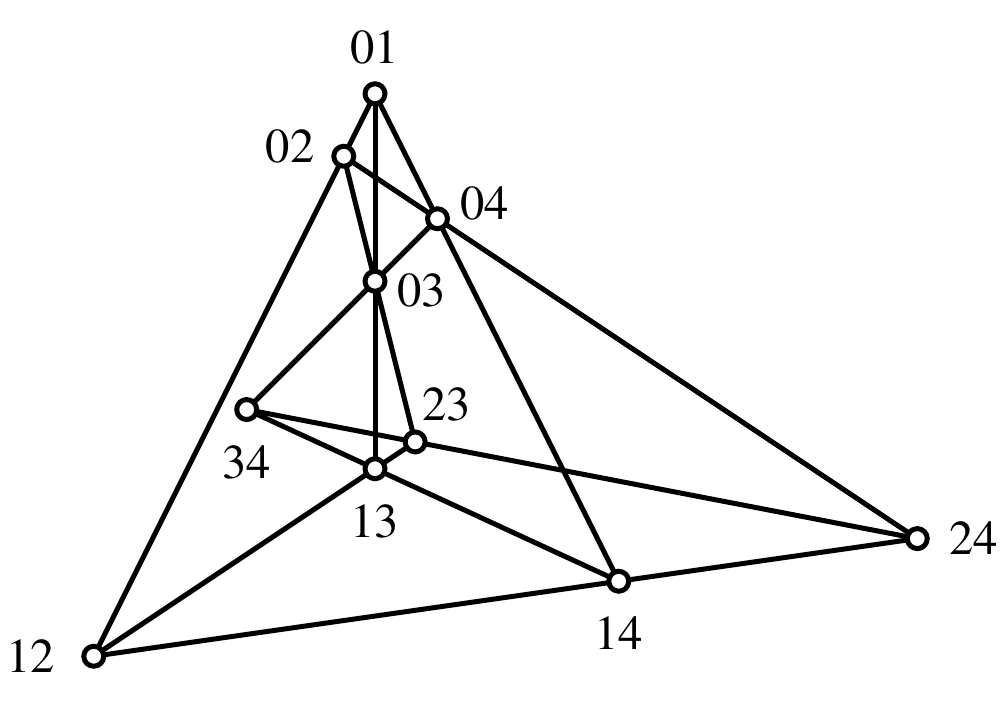}}
\caption{The image of a cell $P(2,4)$ -- Desargues configuration $10_3$. Ten lines correspond to the the white
triangles $\{jk,j\ell,k\ell\}$. One can clearly recognize the images of the five black
tetrahedra $\{ij,ik,i\ell,im\}=T_i\{j,k,\ell,m\}$, and five complete quadrilaterals,
which are images of the octahedral faces $\langle i\rangle=[jk\ell m]$, each one has
six vertices missing one of the indices ($i$ in this case).}\label{fig:Desargues}
\end{figure}

\begin{remark}
The image of the 4-ambo-simplex $P(3,4)$ turns out to be less
interesting. It is a configuration $10_25_4$ in the projective
plane, i.e., five lines in general position with their ten
pairwise intersection points, see Fig.~\ref{fig:P34}. This figure does not support any
non-trivial incidence theorem.
\end{remark}
\begin{figure}[htb]
\centerline{\includegraphics[width=8cm]{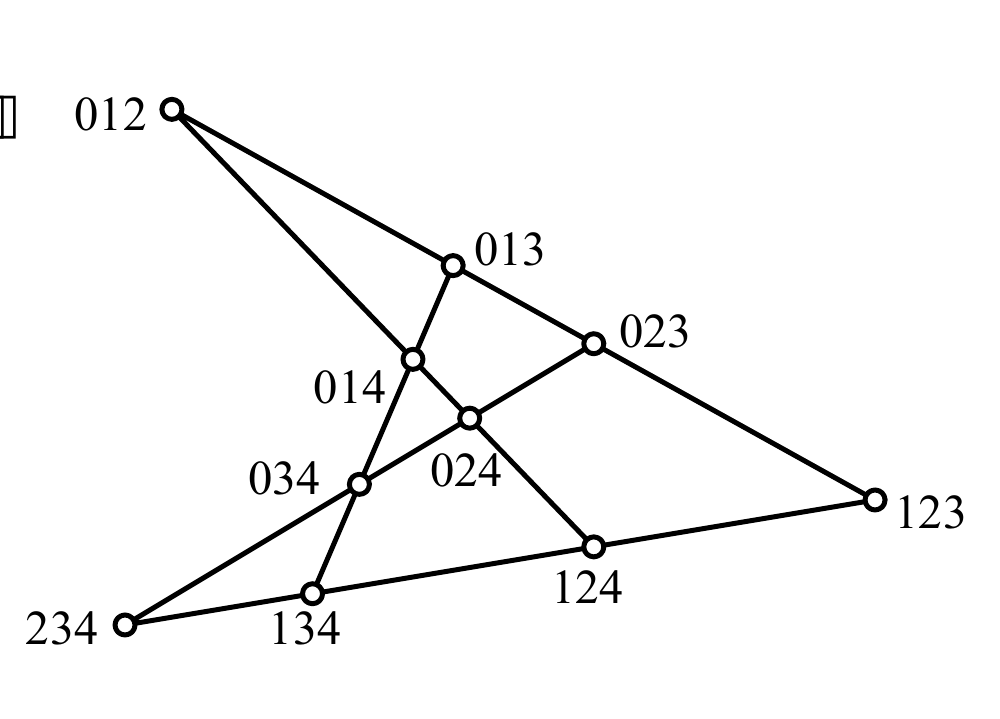}}
\caption{The image of a cell $P(3,4)$ -- a configuration $10_25_4$. Five lines correspond to the the white tetrahedra $\{ijk,ij\ell,ik\ell,jk\ell\}$. One can also recognize five complete quadrilaterals, which are images of the octahedral faces $T_i\langle i\rangle$, each one has six vertices sharing one of the indices ($i$ in this case).}\label{fig:P34}
\end{figure}

\subsection{Consistent triples of octahedron type equations}
\label{s:3}

A local definition of the 4D consistency of octahedron type equations should deal with the 4-cells of $Q(A_4)$ possessing octahedral faces, i.e., with 4-ambo-simplices. We consider in more detail an ambo-simplex whose tetrahedral faces are black. Its five octahedral faces
\[
\langle 0\rangle,\; \langle 1\rangle,\; \langle 2\rangle,\; \langle 3\rangle,\; \langle 4\rangle,
\]
are characterized by the property that any two of them share a triangular face. Two 4-ambo-simplices are said to be {\em adjacent} if they share a common octahedral face. For instance, for any fixed $i=0,1,\ldots,4$, the octahedra
\[
T_0T_i^{-1}\langle 0\rangle,\; T_1T_i^{-1}\langle 1\rangle,\; T_2T_i^{-1}\langle 2\rangle,\; T_3T_i^{-1}\langle 3\rangle,\; T_4T_i^{-1}\langle 4\rangle
\]
serve as the faces of a 4-ambo-simplex adjacent to the previous one, with the common octahedron $\langle i\rangle$.

In the rest of this section, we will work with the representation of the lattice $Q(A_4)$ as
$\mathbb Z^4$, which corresponds to the representation of the octahedron type equations in the form (\ref{eq:oct gen cubic}). This representation appears through a projection from $\mathbb Z^{5}$ to $\mathbb Z^4$ along the 0-th coordinate axis (that is, by forgetting the index $n_0$). It is useful to remember that the shift operators $T_i$ in the $\mathbb Z^4$ representation of $Q(A_4)$ stand for the operators $T_iT_0^{-1}$ in the standard representation (\ref{eq: AN lattice}) of $Q(A_4)$. As already mentioned, we will consider {\em triples} of octahedron type equations corresponding to three 3D sublattices of $\mathbb Z^4$, say the sublattices $(124)$, $(134)$, $(234)$:
\begin{equation}\label{eq: F triple}
\begin{aligned}
 F(x_1,x_2,x_4,x_{12},x_{14},x_{24})&=0,\\
 G(x_1,x_3,x_4,x_{13},x_{14},x_{34})&=0,\\
 H(x_2,x_3,x_4,x_{23},x_{24},x_{34})&=0.
\end{aligned}
\end{equation}
An elementary combinatorial structure supporting these equations is a {\em triple of octahedra}
\begin{equation}\label{eq: oct triple}
\langle 3\rangle,\; \langle 2\rangle,\; \langle 1\rangle,
\end{equation}
which are just three out of five octahedral faces of a 4-ambo-simplex (in particular, any two octahedra of a triple share a triangular face). We will say that another triple of octahedra of the same coordinate directions is {\em adjacent} to the triple (\ref{eq: oct triple}) if they are faces of an adjacent 4-ambo-simplex, and the common octahedral face of the 4-ambo-simplices belongs to neither of the triples. Thus, for two adjacent triples there is a further (seventh) octahedron sharing a face with any octahedron of the both triples. We will say that the both triples of octahedra are adjacent along the seventh one. For instance, the triple
\begin{equation}\label{eq: oct triple shifted 1}
T_3\langle 3\rangle,\; T_2\langle 2\rangle,\; T_1\langle 1\rangle,
\end{equation}
is adjacent to triple (\ref{eq: oct triple}) along the octahedron $\langle 0\rangle$, and the triple
\begin{equation}\label{eq: oct triple shifted 2}
T_3T_4^{-1}\langle 3\rangle,\; T_2T_4^{-1}\langle 2\rangle,\; T_1T_4^{-1}\langle 1\rangle,
\end{equation}
is adjacent to triple (\ref{eq: oct triple}) along the octahedron $\langle 4\rangle$. We are now in a position to formulate a local definition of 4D consistency of the octahedron type equations.

\begin{definition}\label{def: 4D local}
A triple of the octahedron type equations (\ref{eq: F triple}) is called 4D consistent if they can be imposed (admit generic solutions) on any two adjacent triples of octahedra.
\end{definition}

Thus, Definition \ref{def: 4D local} requires to consider equations (\ref{eq: F triple}) on the triple of octahedra (\ref{eq: oct triple}) and on its two adjacent triples, (\ref{eq: oct triple shifted 1}) and (\ref{eq: oct triple shifted 2}).

To verify the first requirement of Definition \ref{def: 4D local}, one can consider the same set of 9 initial data as used in Section \ref{subsect: dKP consist} for the dKP equation, namely
\begin{equation}\label{ini loc 1}
x_1,\; x_2,\; x_3,\; x_4,\; x_{14},\; x_{24},\; x_{34},\; x_{134},\; x_{234}.
\end{equation}
 Equations for the three octahedra (\ref{eq: oct triple}) determine the fields $x_{12}$, $x_{13}$, $x_{23}$:
\begin{equation}\label{xij}
\begin{aligned}
 x_{12}&=f(x_1,x_2,x_4,x_{14},x_{24}),\\
 x_{13}&=g(x_1,x_3,x_4,x_{14},x_{34}),\\
 x_{23}&=h(x_2,x_3,x_4,x_{24},x_{34}),
\end{aligned}
\end{equation}
then equation for the octahedron $T_3\langle 3\rangle$ determines the field $x_{123}$:
\begin{equation}\label{x123 in x}
 x_{123}=\hat{f}(x_{13},x_{23},x_{34},x_{134},x_{234}),
\end{equation}
and finally two equations for the octahedra $T_2\langle 2\rangle$ and $T_1\langle 1\rangle$ deliver two a priori different values for $x_{124}$:
\begin{equation}\label{x124 in x}
\begin{aligned}
   x_{124} &=\hat{g}(x_{12},x_{23},x_{24},x_{123},x_{234})\\
           &=\hat{h}(x_{12},x_{13},x_{14},x_{123},x_{134}).
\end{aligned}
\end{equation}
The first requirement in Definition \ref{def: 4D local} is that these two values of $x_{124}$ identically coincide as functions of the 9 initial data (\ref{ini loc 1}).

One can proceed similarly to verify the second requirement of Definition \ref{def: 4D local}: consider the set of 9 initial data
\begin{equation}\label{ini loc 2}
x_1,\; x_2,\; x_3,\; x_4,\; x_{14},\; x_{24},\; x_{34},\;
x_{13,-4},\; x_{23,-4}.
\end{equation}
One starts with equations (\ref{xij}) for the three octahedra (\ref{eq: oct triple}),  solves equation for the octahedron $T_3T_4^{-1}\langle 3\rangle$ to determine the field $x_{123,-4}$:
\begin{equation}\label{x123-4 in x}
 x_{123,-4}=\tilde{f}(x_3,x_{13},x_{23},x_{13,-4},x_{23,-4}),
\end{equation}
and finally has two equations for the octahedra $T_2T_4^{-1}\langle 2\rangle$ and $T_1T_4^{-1}\langle 1\rangle$ which give two a priori different values for $x_{12,-4}$:
\begin{equation}\label{x12-4 in x}
\begin{aligned}
   x_{12,-4} &=\tilde{g}(x_2,x_{12},x_{23},x_{23,-4},x_{123,-4})\\
           &=\tilde{h}(x_1,x_{12},x_{13},x_{13,-4},x_{123,-4}).
\end{aligned}
\end{equation}
The second requirement in Definition \ref{def: 4D local} is that the two values of $x_{12,-4}$ coincide as functions of the 9 initial data (\ref{ini loc 2}).

For the formulation of the following statements we use the following convention: the derivatives are denoted by the lower indices:
\[
 f_1=\frac{\partial f}{\partial x_1},\quad\dots,\quad
 h_{34}=\frac{\partial h}{\partial x_{34}}
\]
(since we will not need higher order derivatives, the multiple indices
should cause no mis\-understanding).

\begin{statement}\label{st:fgh loc}
If the triple of the octahedron type equations (\ref{xij}) is 4D consistent, then the following holds:
\begin{equation}\label{eq: ranks}
{\rm rk} \begin{pmatrix} f_1 & f_2 & 0 & f_4 \\
 g_1 & 0 & g_3 & g_4 \\ 0 & h_2 & h_3 & h_4 \end{pmatrix}\le 2,\quad
 {\rm rk} \begin{pmatrix} f_4 & f_{14} & f_{24} & 0  \\
 g_4 & g_{14} & 0 & g_{34} \\ h_4 & 0 & h_{24} & h_{34} \end{pmatrix}\le 2.
\end{equation}
\end{statement}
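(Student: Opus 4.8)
\emph{Overall approach.} The plan is to turn the two requirements of Definition \ref{def: 4D local} into two functional identities in the initial data and then differentiate them. For the first requirement I will write $\Delta:=\hat g-\hat h$, an identity $\Delta\equiv 0$ in the data \eqref{ini loc 1} once $x_{12},x_{13},x_{23}$ and $x_{123}$ have been expressed through \eqref{xij}, \eqref{x123 in x}; for the second I set $\Delta':=\tilde g-\tilde h\equiv 0$ in the data \eqref{ini loc 2}. A $3\times4$ matrix has rank at most $2$ exactly when its three rows are linearly dependent, i.e.\ when it admits a nonzero left null vector, so it suffices to exhibit such a vector for each matrix in \eqref{eq: ranks}. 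Note that the entries of the first matrix are the partials of $f,g,h$ with respect to $x_1,x_2,x_3,x_4$, and those of the second matrix the partials with respect to $x_4,x_{14},x_{24},x_{34}$; the zeros just record that $f$ is free of $x_3,x_{34}$, $g$ of $x_2,x_{24}$, and $h$ of $x_1,x_{14}$. (Derivatives of the hatted and tilded functions will be abbreviated by subscripts in the same way, e.g.\ $\hat g_{12}=\partial\hat g/\partial x_{12}$, $\hat g_{123}=\partial\hat g/\partial x_{123}$.)

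\emph{First inequality.} I will differentiate $\Delta\equiv 0$ with respect to each of $x_1,x_2,x_3,x_4$, tracking via the chain rule how each variable propagates into the intermediate fields $x_{12}=f$, $x_{13}=g$, $x_{23}=h$ and $x_{123}=\hat f(x_{13},x_{23},x_{34},x_{134},x_{234})$. For instance $x_1$ reaches $\Delta$ only through $x_{12}$ (factor $f_1$) and through $x_{13}$ (factor $g_1$, which also feeds $x_{123}$ via $\hat f_{13}$), giving $\partial_{x_1}\Delta=Af_1+Bg_1$, where I anticipate
\[
 A=\hat g_{12}-\hat h_{12},\qquad
 B=(\hat g_{123}-\hat h_{123})\hat f_{13}-\hat h_{13},\qquad
 C=\hat g_{23}+(\hat g_{123}-\hat h_{123})\hat f_{23}.
\]
The same bookkeeping for the other three variables should yield $\partial_{x_2}\Delta=Af_2+Ch_2$, $\partial_{x_3}\Delta=Bg_3+Ch_3$, and $\partial_{x_4}\Delta=Af_4+Bg_4+Ch_4$. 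Since all four vanish identically, the row vector $(A,B,C)$ annihilates every column of the first matrix, i.e.\ it is a left null vector of it.

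\emph{Non-triviality and the second inequality.} To conclude rank $\le 2$ I must exclude $(A,B,C)=0$. Here I will differentiate $\Delta$ with respect to $x_{14}$, which besides feeding $x_{12},x_{13},x_{123}$ now also enters $\hat h$ as a genuine argument; this should give $Af_{14}+Bg_{14}=\hat h_{14}$, and the irreducibility condition guarantees $\hat h_{14}\neq 0$, so $(A,B)\neq 0$ and the null vector is nonzero, establishing the first inequality in \eqref{eq: ranks}. The second inequality is obtained identically from the second requirement: differentiating $\Delta'\equiv 0$ with respect to $x_4,x_{14},x_{24},x_{34}$ (propagating through $x_{12},x_{13},x_{23}$ and $x_{123,-4}=\tilde f$) will produce a row vector $(A',B',C')$, built from $\tilde f,\tilde g,\tilde h$ in the same way, that annihilates all four columns of the second matrix; differentiating with respect to $x_2$, which is a direct argument of $\tilde g$, should give $A'f_2+C'h_2=-\tilde g_2\neq 0$ by irreducibility, whence $(A',B',C')\neq 0$.

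\emph{Where the difficulty lies.} The computations are routine chain-rule bookkeeping, so the substance of the argument is in two places. First, one must correctly sort, for each differentiation variable, which intermediate fields it feeds and whether it is \emph{also} a direct argument of $\hat g,\hat h$ (resp.\ $\tilde g,\tilde h$): it is precisely the variables that are \emph{not} direct arguments that produce the clean annihilation relations $(A,B,C)M=0$, while the variables that \emph{are} direct arguments supply the inhomogeneous term needed for non-vanishing. Second, the whole argument rests on the irreducibility condition to force $\hat h_{14}\neq 0$ and $\tilde g_2\neq 0$; without it the constructed null vector could vanish and the conclusion would fail. I expect this non-vanishing step, rather than the differentiation, to be the delicate point.
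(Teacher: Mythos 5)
Your proof is correct and follows essentially the same route as the paper's: one differentiates the coincidence condition $\hat g=\hat h$ (resp.\ $\tilde g=\tilde h$) with respect to the initial data $x_1,\dots,x_4$ (resp.\ $x_4,x_{14},x_{24},x_{34}$), and the resulting four relations exhibit the row vector $(A,B,C)$ --- identical to the one displayed in the paper --- as a left null vector of the matrix in question. The only place you go beyond the paper is the explicit verification that this vector is nonzero (via $\partial_{x_{14}}\Delta=0$, resp.\ $\partial_{x_2}\Delta'=0$, and the irreducibility condition), a step the paper leaves implicit; your chain-rule bookkeeping of which initial data feed which intermediate fields agrees with the paper's displayed identities.
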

\begin{proof} The first claim follows from the consistency of equations for the triples (\ref{eq: oct triple}), (\ref{eq: oct triple shifted 1}), while the second follows from the consistency of equations for the triples (\ref{eq: oct triple}), (\ref{eq: oct triple shifted 2}). Each of the matrices in (\ref{eq: ranks}) contains the derivatives with respect to the fields {\em not} belonging to the octahedron along which the corresponding triple is adjacent. Both statements are verified similarly, therefore we only prove the first one. We differentiate condition $\hat{g}=\hat{h}$ in (\ref{x124 in x}) with respect to the initial data:
\[
\begin{array}{rl}
 \;\;\partial_{x_1}:& \hat{g}_{12}f_1+\hat{g}_{123}\hat{f}_{13}g_1=
 \hat{h}_{12}f_1+\hat{h}_{13}g_1+\hat{h}_{123}\hat{f}_{13}g_1, \\
 \partial_{x_2}:&
 \hat{g}_{12}f_2+\hat{g}_{23}h_2+\hat{g}_{123}\hat{f}_{23}h_2=
 \hat{h}_{12}f_2+\hat{h}_{123}\hat{f}_{23}h_2, \\
 \partial_{x_3}:&
 \hat{g}_{23}h_3+\hat{g}_{123}(\hat{f}_{13}g_3+\hat{f}_{23}h_3)=
 \hat{h}_{13}g_3+\hat{h}_{123}(\hat{f}_{13}g_3+\hat{f}_{23}h_3), \\
 \partial_{x_4}:& \hat{g}_{12}f_4+\hat{g}_{23}h_4+\hat{g}_{123}(\hat{f}_{13}g_4+\hat{f}_{23}h_4)=
 \hat{h}_{12}f_4+\hat{h}_{13}g_4+\hat{h}_{123}(\hat{f}_{13}g_4+\hat{f}_{23}h_4).
\end{array}
\]
These equations say that the row vector
\[
 \begin{pmatrix} \hat{g}_{12}-\hat{h}_{12}, & -\hat{h}_{13}+(\hat{g}_{123}-\hat{h}_{123})\hat{f}_{13}, &
 \hat{g}_{23}+(\hat{g}_{123}-\hat{h}_{123})\hat{f}_{23}\end{pmatrix}
 \]
 belongs to the left kernel of the matrix
 \[
 \begin{pmatrix} f_1 & f_2 & 0 & f_4 \\
 g_1 & 0 & g_3 & g_4 \\ 0 & h_2 & h_3 & h_4 \end{pmatrix},
\]
so that the latter matrix has rank $\le 2$.
\end{proof}

\begin{statement}\label{st: 3-->5}
If the triple of the octahedron type equations (\ref{xij}) is consistent, then some octahedron type equations are automatically fulfilled on the sublattices $\langle4\rangle=(123)$ and $\langle 0\rangle=(1234)$:
\begin{equation}\label{k}
 K(x_1,x_2,x_3,x_{12},x_{13},x_{23})=0
\end{equation}
and
\begin{equation}\label{l}
 L(x_{12},x_{13},x_{23},x_{14},x_{24},x_{34})=0.
\end{equation}
\end{statement}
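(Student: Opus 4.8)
The plan is to read both relations off Proposition~\ref{st:fgh loc} by interpreting them as rank deficiencies of two explicit maps from the initial data, in direct analogy with the dKP computation of Section~\ref{subsect: dKP consist} (where $K$ and $L$ were the equations \eqref{dKP 123} and \eqref{dKP even}). By \eqref{xij} the triple expresses $x_{12},x_{13},x_{23}$ as functions $f,g,h$ of the seven free initial data $x_1,x_2,x_3,x_4,x_{14},x_{24},x_{34}$. For the first claim I would consider the analytic map
\[
\Phi:(x_1,x_2,x_3,x_4,x_{14},x_{24},x_{34})\longmapsto
(x_1,x_2,x_3,f,g,h)
\]
into the six-dimensional space of fields on the octahedron $\langle4\rangle=(123)$. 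The existence of a relation $K(x_1,x_2,x_3,x_{12},x_{13},x_{23})=0$ on all solutions is exactly the statement that the image of $\Phi$ is contained in a hypersurface, i.e. that $\operatorname{rk}D\Phi\le5$.

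To see this, I would write the $6\times7$ Jacobian $D\Phi$ with columns ordered as $x_1,x_2,x_3,x_4,x_{14},x_{24},x_{34}$. Its first three rows are the basis vectors $e_1,e_2,e_3$ coming from the trivial components; using them to clear the first three columns of the rows $f,g,h$, the rank of $D\Phi$ equals $3$ plus the rank of the block in the columns $x_4,x_{14},x_{24},x_{34}$. That block is precisely the second matrix of \eqref{eq: ranks}, whose rank is $\le2$ by Proposition~\ref{st:fgh loc}; hence $\operatorname{rk}D\Phi\le5$ and $K$ exists. The second claim is entirely parallel: I would use
\[
\Psi:(x_1,x_2,x_3,x_4,x_{14},x_{24},x_{34})\longmapsto
(f,g,h,x_{14},x_{24},x_{34})
\]
onto the fields of $\langle0\rangle=(1234)$, where now the trivial components $x_{14},x_{24},x_{34}$ supply rows $e_5,e_6,e_7$; clearing the corresponding columns leaves the block of $f,g,h$ in the columns $x_1,x_2,x_3,x_4$, which is the first matrix of \eqref{eq: ranks}, again of rank $\le2$. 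Thus $\operatorname{rk}D\Psi\le5$ and $L$ exists.

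The only steps requiring care rather than calculation are the following. First, to turn the pointwise bound $\operatorname{rk}\le5$ into a genuine analytic hypersurface I would pass to the open dense set on which the rank is locally maximal, apply the constant rank theorem to obtain a local submanifold of dimension $\le5$ through the image, and propagate its defining equation by analytic continuation; this is where the main (though routine) technical attention goes. Second, to confirm that $K$ and $L$ are octahedron type equations in the sense of Definition~\ref{def: oct eq}, i.e. depend nontrivially on all six of their arguments, I would invoke the irreducibility of $F,G,H$: it guarantees, for instance, $f_{24}\neq0$ and $g_{34}\neq0$ on solutions, so that $x_{12}$ and $x_{13}$ can be varied independently through $x_{24},x_{34}$ while $x_1,x_2,x_3$ are prescribed directly; the five fields other than $x_{23}$ are then free, forcing $K$ to depend on $x_{23}$, and the analogous argument handles the remaining arguments as well as $L$.
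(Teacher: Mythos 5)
Your argument is correct and is essentially the paper's own proof in different packaging: the paper reads the rank bounds of Proposition \ref{st:fgh loc} as saying that eliminating $x_1,x_4$ from the first two equations of (\ref{xij}) and substituting into the third cancels $x_2,x_3$ identically, while you read the same two matrices of (\ref{eq: ranks}) (paired with the octahedra $\langle 0\rangle$ and $\langle 4\rangle$ exactly as the paper does) as the bounds $\mathrm{rk}\,D\Phi\le 5$, $\mathrm{rk}\,D\Psi\le 5$ and invoke the constant rank theorem. The only caveat concerns your final paragraph, which goes beyond what the Proposition (and the paper's proof) asserts: the dependence of $K$ on $x_{23}$ does follow as you say, but the ``analogous argument'' for the remaining arguments is not quite analogous --- e.g.\ dependence of $K$ on $x_1$ requires a different $3\times5$ block of the Jacobian to have rank $3$, which does not follow from $f_{24}\neq0$, $g_{34}\neq0$ alone.
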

\begin{proof}
The fact that
\[
 \mathop{\rm rk}\begin{pmatrix}
   f_1 & f_2 &  0 & f_4 \\
   g_1 &  0  & g_3 & g_4\\
   0  & h_2 & h_3 & h_4
 \end{pmatrix} \le 2
\]
can be reformulated as follows: if we consider (\ref{xij}) as a system for the unknowns $x_1,x_2,x_3,x_4$ and solve the first two equations of this system for $x_1,x_4$, then the substitution of the result into the last equation will cancel out $x_2,x_3$ identically. This is precisely statement (\ref{l}). Relation (\ref{k}) follows similarly. Thus, consistency of a triple of equations on two adjacent triples of octahedra yields that the vertices of the connecting octahedron also satisfy certain equation.
\end{proof}

Thus, the 4D consistency of a triple of the octahedron type equations yields that actually all coordinate directions are on the same footing: each of the five root lattices $Q(A_3)$ contained in $Q(A_4)$ carries its own equation of the octahedron type.

\begin{remark} It is not clear how to extend the initial data (\ref{ini loc 1}) or (\ref{ini loc 2}) in order to get a set of independent initial data for the whole 4D lattice $\mathbb Z^4$. A possible formulation applicable to the whole of $\mathbb Z^4$ is the following: the initial data consist of the values of $x$ on the coordinate 2D planes $(i4)$ for all $i=1,2,3$ (they intersect along the coordinate axis $4$). Within two neighboring 4D cubes, one has the following set of 11 independent initial data:
\begin{equation}\label{ini less loc}
 x_1,\; x_2,\; x_3,\; x_4,\; x_{14},\; x_{24},\; x_{34},\;
x_{44},\; x_{144},\; x_{244},\; x_{344}
\end{equation}
(the last four data coming from the second 4D cube). One can determine $x_{12}$, $x_{13}$, $x_{23}$ from equations (\ref{xij}), and then $x_{124}$, $x_{134}$, $x_{234}$ from equations \eqref{xij} shifted in the $4$-th coordinate direction:
$x_{124}=f(x_{14},x_{24},x_{44},x_{144},x_{244})=T_4(f)$, etc. The 4D consistency conditions appear from the comparison of the three values of $x_{123}$ from \eqref{x123 in x} that must coincide identically as functions of the initial data:
\begin{equation}\label{x123}
 x_{123}=f(g,h,x_{34},T_4(g),T_4(h))=g(f,h,x_{24},T_4(f),T_4(h))=
 h(f,g,x_{14},T_4(f),T_4(g)).
\end{equation}
We will show that this notion of 4D consistency quickly leads to the same necessary conditions formulated in Proposition \ref{st:fgh loc}. Denote the exterior functions $f,g,h$ in (\ref{x123}), as well as their derivatives, by the bar, so that, for instance,
\[
 \bar f=T_3(f)=f(g,h,x_{34},T_4(g),T_4(h)),\quad \bar f_1=T_3(f_1).
\]
The following equations are obtained by differentiating (\ref{x123}):
\begin{align}\label{eq1}
&\left\{\begin{array}{rl}
 \partial_{x_1}:& \bar f_1g_1=\bar g_1f_1=\bar h_2f_1+\bar h_3g_1, \\
 \partial_{x_2}:& \bar f_2h_2=\bar g_1f_2+\bar g_3h_2=\bar h_2f_2, \\
 \partial_{x_3}:& \bar f_1g_3+\bar f_2h_3=\bar g_3h_3=\bar h_3g_3, \\
 \partial_{x_4}:& \bar f_1g_4+\bar f_2h_4=\bar g_1f_4+\bar g_3h_4=\bar h_2f_4+\bar h_3g_4,
\end{array}\right.\\
\label{eq2}
&\left\{\begin{array}{rl}
 \partial_{x_{44}}:& \bar f_{14}T_4(g_4)+\bar f_{24}T_4(h_4)
    =\bar g_{14}T_4(f_4)+\bar g_{34}T_4(h_4)=\bar h_{24}T_4(f_4)+\bar h_{34}T_4(g_4), \\
 \partial_{x_{144}}:& \bar f_{14}T_4(g_{14})=\bar g_{14}T_4(f_{14})=
 \bar h_{24}T_4(f_{14})+\bar h_{34}T_4(g_{14}), \\
 \partial_{x_{244}}:& \bar f_{24}T_4(h_{24})=\bar g_{14}T_4(f_{24})+
 \bar g_{34}T_4(h_{24})=\bar h_{24}T_4(f_{24}), \\
 \partial_{x_{344}}:& \bar f_{14}T_4(g_{34})+\bar f_{24}T_4(h_{34})=
 \bar g_{34}T_4(h_{34})=\bar h_{34}T_4(g_{34}).
\end{array}\right.
\end{align}
Now equations (\ref{eq1}) say that the row vectors
\[
\begin{pmatrix} -\bar g_1, & \bar f_1, & \bar f_2-\bar g_3 \end{pmatrix} \;\;{\rm and}\;\;
\begin{pmatrix} \bar g_1-\bar h_2, & -\bar h_3, & \bar g_3 \end{pmatrix}
\]
belong to the left kernel of the matrix
\[
\begin{pmatrix}
   f_1 & f_2 &  0 & f_4 \\
   g_1 &  0  & g_3 & g_4\\
   0  & h_2 & h_3 & h_4
 \end{pmatrix}.
\]
Analogously, equations (\ref{eq2}) say that the row vectors
\[
\begin{pmatrix} -\bar g_{14}, & \bar f_{14}, & \bar f_{24}-\bar g_{34} \end{pmatrix}\;\;{\rm and}\;\;
\begin{pmatrix} \bar g_{14}-\bar h_{24}, & -\bar h_{34}, & \bar g_{34} \end{pmatrix}
\]
belong to the left kernel of the ($T_4$-shifted) matrix
\[
\begin{pmatrix}
    f_4 & f_{14} & f_{24} &  0     \\
    g_4 & g_{14} &  0     & g_{34} \\
    h_4 & 0      & h_{24} & h_{34}
 \end{pmatrix}.
\]
Thus, both matrices have rank $\le 2$.
\end{remark}


\section{Tripodal forms of the octahedron type equations}\label{s:var3}

We now derive further analytic consequences of the necessary conditions of 4D consistency formulated in Proposition \ref{st:fgh loc}, which can be formulated as
\begin{equation}\label{fgh1}
 f_1g_3h_2+f_2g_1h_3=0,\qquad
 f_2g_3h_4=f_4g_3h_2+f_2g_4h_3,
\end{equation}
and
\begin{equation}\label{fgh2}
 f_{14}g_{34}h_{24}+f_{24}g_{14}h_{34}=0,\qquad
 f_{24}g_{34}h_4=f_4g_{34}h_{24}+f_{24}g_4h_{34},
\end{equation}
respectively.

\begin{statement}\label{st:3leg}
If the triple of octahedron type equations (\ref{xij}) is 4D consistent, then it can be cast into the form
\begin{equation}\label{3leg:1}
\begin{aligned}
 a(x_1,x_4,x_{14})-b(x_2,x_4,x_{24})&=p(x_{12},x_{14},x_{24}),\\
 c(x_3,x_4,x_{34})-a(x_1,x_4,x_{14})&=q(x_{13},x_{14},x_{34}),\\
 b(x_2,x_4,x_{24})-c(x_3,x_4,x_{34})&=r(x_{23},x_{24},x_{34}),
\end{aligned}
\end{equation}
and simultaneously into the form
\begin{equation}\label{3leg:2}
\begin{aligned}
 A(x_1,x_4,x_{14})-B(x_2,x_4,x_{24})&=P(x_1,x_2,x_{12}),\\
 C(x_3,x_4,x_{34})-A(x_1,x_4,x_{14})&=Q(x_1,x_3,x_{13}),\\
 B(x_2,x_4,x_{24})-C(x_3,x_4,x_{34})&=R(x_2,x_3,x_{23}).
\end{aligned}
\end{equation}
\end{statement}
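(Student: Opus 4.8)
The two claimed forms are dual to one another --- \eqref{3leg:1} being governed by the conditions \eqref{fgh1} and \eqref{3leg:2} by \eqref{fgh2} --- so the plan is to prove \eqref{3leg:1} in full and obtain \eqref{3leg:2} by the same argument with the roles of the variables $x_i$ and $x_{i4}$ interchanged. I read the tripodal form as an additive separation of variables and construct the ``legs'' $a,b,c$ by integration, the relations \eqref{fgh1} being exactly the compatibility conditions that make this possible (all statements being local, near a fixed solution branch). In the solved form \eqref{xij}, the first line of \eqref{3leg:1} asks for functions $a(x_1,x_4,x_{14})$, $b(x_2,x_4,x_{24})$ whose difference, evaluated on $x_{12}=f$, depends only on the cap variables $x_{12},x_{14},x_{24}$. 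Differentiating the desired identity $a-b=p(f,x_{14},x_{24})$ in $x_1,x_2,x_4$ (on which $p$ depends only through $f$) shows this is equivalent to the collinearity of gradients
\[
 \frac{a_{x_1}}{f_1}=\frac{-b_{x_2}}{f_2}=\frac{a_{x_4}-b_{x_4}}{f_4},
\]
and cyclically for the $g$- and $h$-equations; the derivatives in $x_{14},x_{24}$ stay unconstrained, being absorbed into $p$.

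The first step is the separation of the lower derivatives. The first identity in \eqref{fgh1} reads $\frac{g_1}{g_3}\cdot\frac{h_3}{h_2}=-\frac{f_1}{f_2}$, whose right-hand side involves neither $x_3$ nor $x_{34}$. Taking logarithmic derivatives in $x_3$ and $x_{34}$ and using that the $g$- and $h$-factors share only the variables $x_3,x_4,x_{34}$, one forces the multiplicative factorization
\[
 \frac{f_1}{f_2}=-\mu\lambda,\qquad \frac{g_1}{g_3}=\mu\nu,\qquad \frac{h_2}{h_3}=\frac{\nu}{\lambda},
\]
with $\mu=\mu(x_1,x_4,x_{14})$, $\lambda=\lambda(x_2,x_4,x_{24})$, $\nu=\nu(x_3,x_4,x_{34})$, determined up to a common gauge $\mu\mapsto\mu s$, $\lambda\mapsto\lambda/s$, $\nu\mapsto\nu/s$ with $s=s(x_4)$. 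Setting $a_{x_1}=\mu$, $b_{x_2}=1/\lambda$, $c_{x_3}=-1/\nu$ makes the first equalities in the three collinearity conditions hold, and, since $\mu,\lambda,\nu$ are common to the whole triple, the legs so produced are automatically shared between the equations in which they occur, as \eqref{3leg:1} demands.

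The remaining content lives on the shared axis $x_4$, and this is where the main work is. The second equalities in the collinearity conditions give a linear system $a_{x_4}-b_{x_4}=D_1$, $c_{x_4}-a_{x_4}=D_2$, $b_{x_4}-c_{x_4}=D_3$ for the $x_4$-derivatives of the legs, with $D_1=\mu f_4/f_1$ and $D_2,D_3$ analogous. Its cyclic sum must vanish, and substituting the factorizations of the first step reduces the identity $D_1+D_2+D_3=0$ exactly to the second relation in \eqref{fgh1}, $f_2g_3h_4=f_4g_3h_2+f_2g_4h_3$. The delicate part is that $a_{x_4}$ must be a function of $(x_1,x_4,x_{14})$ alone (and similarly for $b,c$), so each $D_i$ has to separate into a difference of functions of the appropriate three variables, and these pieces must be compatible with the already fixed lower derivatives through equality of mixed partials $a_{x_1x_4}=a_{x_4x_1}$, which pins down the residual gauge $s(x_4)$ via a first-order ODE. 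I expect the verification that these separation and compatibility conditions are simultaneously solvable --- consistently for all three legs --- to be the principal obstacle, to be settled by differentiating \eqref{fgh1} and invoking the remaining minors of the rank-two matrix of Proposition \ref{st:fgh loc}, which follow from the two displayed identities once irreducibility ($f_2\neq0$, etc.) is used. With the gauge fixed the gradients of $a,b,c$ are consistent, integration produces the legs, and $p=a-b$, $q=c-a$, $r=b-c$, re-expressed through the cap variables, yield \eqref{3leg:1}. Form \eqref{3leg:2} follows by the identical argument with $x_i\leftrightarrow x_{i4}$ and \eqref{fgh1} replaced by \eqref{fgh2}.
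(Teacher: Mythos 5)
Your first step is sound and matches the paper's: from $f_1g_3h_2+f_2g_1h_3=0$ you correctly extract the factorization $f_1/f_2=-\mu\lambda$, $g_1/g_3=\mu\nu$, $h_2/h_3=\nu/\lambda$ (the paper's $\alpha,\beta,\gamma$ up to relabelling), and your observation that $D_1+D_2+D_3=0$ reduces exactly to the second relation in (\ref{fgh1}) is a correct computation. But the proof stops precisely where the real work begins. Fixing $a_{x_1}=\mu$, $b_{x_2}=1/\lambda$, $c_{x_3}=-1/\nu$ determines $a$ only up to an additive function of $(x_4,x_{14})$ (similarly for $b,c$), and the whole content of the proposition is that the quantities $D_1=\mu f_4/f_1$ etc.\ --- which a priori depend on five variables each --- split as differences of functions of the correct triples of variables, compatibly with the already-fixed first derivatives. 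You name this as ``the principal obstacle'' and then only state that you \emph{expect} it to be settled; that is a declaration of intent, not an argument, and it is exactly the step the paper spends the second half of its proof on.

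Moreover, the tool you propose for closing the gap --- ``the remaining minors of the rank-two matrix of Proposition \ref{st:fgh loc}'' --- cannot supply the missing information: for a $3\times4$ matrix of rank $\le 2$ the other two maximal minors vanish as algebraic consequences of the two already encoded in (\ref{fgh1}), so no new identities come from there. What actually closes the argument in the paper is a separation-of-variables step: writing $f=\phi(a-b,x_4,x_{14},x_{24})$ etc., the second relation in (\ref{fgh1}) becomes $\phi_4/\phi'+\psi_4/\psi'+\chi_4/\chi'=0$; differentiating this in $x_1$ and $x_2$ and using that the three summands involve disjoint sets of ``private'' variables forces $(\phi_4/\phi')'=(\psi_4/\psi')'=(\chi_4/\chi')'=\delta(x_4)$, whence $\phi_4/\phi'=\delta(x_4)(a-b)+\la(x_4,x_{14})-\mu(x_4,x_{24})$ and cyclically. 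Only this explicit affine-in-$(a-b)$ form makes the residual dependence removable, via the gauge $\tilde a=k(x_4)a+\ell(x_4,x_{14})$ with $k'/k=\delta$, $\ell_4=k\la$, etc. Without establishing that specific structure, the claim that the mixed-partials condition ``pins down the residual gauge $s(x_4)$ via a first-order ODE'' is unsupported: a single scalar gauge function of $x_4$ cannot absorb an arbitrary obstruction depending on $x_1,x_2,x_4,x_{14},x_{24}$.
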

\begin{proof}
We prove that the general solution of equations (\ref{fgh1}) is of the form
\begin{align*}
 f&=\phi(a(x_1,x_4,x_{14})-b(x_2,x_4,x_{24}),x_{14},x_{24}),\\
 g&=\psi(c(x_3,x_4,x_{34})-a(x_1,x_4,x_{14}),x_{14},x_{34}),\\
 h&=\chi(b(x_2,x_4,x_{24})-c(x_3,x_4,x_{34}),x_{24},x_{34}),
\end{align*}
which is obviously equivalent to (\ref{3leg:1}). First equation in (\ref{fgh1}) implies
\[
 \frac{f_1}{f_2}(x_1,x_2,x_4,x_{14},x_{24})=
 -\frac{g_1/g_3(x_1,x_3,x_4,x_{14},x_{34})}
 {h_2/h_3(x_2,x_3,x_4,x_{24},x_{34})}
   =-\frac{\a(x_1,x_4,x_{14})}{\b(x_2,x_4,x_{24})}
\]
(it is sufficient to choose some fixed values of $x_3$ and $x_{34}$ in the middle expression). Now,
\[
 \frac{\a g_3}{g_1}=\frac{\b h_3}{h_2}=-\g(x_3,x_4,x_{34}),
\]
where $\g$ denotes the common value of both ratios. Clearly, $\a,\b,\g$ are defined up to a common factor, possibly depending on $x_4$. We arrive at
\[
 \b f_1+\a f_2=0,\quad \a g_3+\g g_1=0,\quad \g h_2+\b h_3=0.
\]
Choosing some $a=a(x_1,x_4,x_{14})$, $b=b(x_2,x_4,x_{24})$, $c=c(x_3,x_4,x_{34})$ so that  $\a=a_1$, $\b=b_2$, $\g=c_3$, we come to the conclusion that the functions $f,g,h$ can be represented as
\begin{align*}
 f&=\phi(a(x_1,x_4,x_{14})-b(x_2,x_4,x_{24}),x_4,x_{14},x_{24}),\\
 g&=\psi(c(x_3,x_4,x_{34})-a(x_1,x_4,x_{14}),x_4,x_{14},x_{34}),\\
 h&=\chi(b(x_2,x_4,x_{24})-c(x_3,x_4,x_{34}),x_4,x_{24},x_{34}).
\end{align*}
Substituting this into the second equation in (\ref{fgh1}), one finds:
\begin{equation}\label{444}
 \frac{\phi_4}{\phi'}+\frac{\psi_4}{\psi'}+\frac{\chi_4}{\chi'}=0,
\end{equation}
where prime denotes the derivatives with respect to the first arguments of the
functions. Differentiating this equation with respect to $x_1$ and $x_2$ yields:
\[
 \Bigl(\frac{\phi_4}{\phi'}\Bigr)'=
 \Bigl(\frac{\psi_4}{\psi'}\Bigr)'=
 \Bigl(\frac{\chi_4}{\chi'}\Bigr)'=\d(x_4),
\]
where $\d$ is the common value of all expressions. Now, integration yields:
\begin{gather*}
 \frac{\phi_4}{\phi'}=\d(a-b)+\la(x_4,x_{14})-\mu(x_4,x_{24}),\quad
 \frac{\psi_4}{\phi'}=\d(c-a)+\nu(x_4,x_{34})-\la(x_4,x_{14}),\\
 \frac{\chi_4}{\chi'}=\d(b-c)+\mu(x_4,x_{24})-\nu(x_4,x_{34})
\end{gather*}
(the form of the integration constants follows from (\ref{444})). At this point we
use the remaining freedom in the definition of the functions $a,b,c$, which can be changed
to
\[
 \tilde a=k(x_4)a+\ell(x_4,x_{14}),\quad
 \tilde b=k(x_4)b+m(x_4,x_{24}),\quad
 \tilde c=k(x_4)c+n(x_4,x_{34}),
\]
with arbitrary $k,\ell,m,n$. Denoting
\[
\begin{aligned}
 \phi(a-b,x_4,x_{14},x_{24})
  &=\tilde\phi(\tilde a-\tilde b,x_4,x_{14},x_{24}),\\
 \psi(c-a,x_4,x_{14},x_{34})
  &=\tilde\psi(\tilde c-\tilde a,x_4,x_{14},x_{34}),\\
 \chi(b-c,x_4,x_{24},x_{34})
  &=\tilde\chi(\tilde b-\tilde c,x_4,x_{24},x_{34}),
\end{aligned}
\]
it is easy to see that
\[
 \phi'=k\tilde\phi',\quad
 \phi_4=\tilde\phi_4+(k'(a-b)+\ell_4-m_4)\tilde\phi',
\]
so that
\[
 \frac{\phi_4}{\phi'}=\frac{\tilde\phi_4}{k\tilde\phi'}
  +\frac{k'}{k}(a-b)+\frac{\ell_4}{k}-\frac{m_4}{k},
\]
and analogously for $\psi,\chi$. The choice $k'/k=\d$, $\ell_4=k\la$, $m_4=k\mu$,
$n_4=k\nu_4$ leads to
\[
 \tilde\phi_4=\tilde\psi_4=\tilde\chi_4=0,
\]
and this gives the desired representation. The solution of the second pair of
equations (\ref{fgh2}) is of the same structure and leads to representation
(\ref{3leg:2}).
\end{proof}

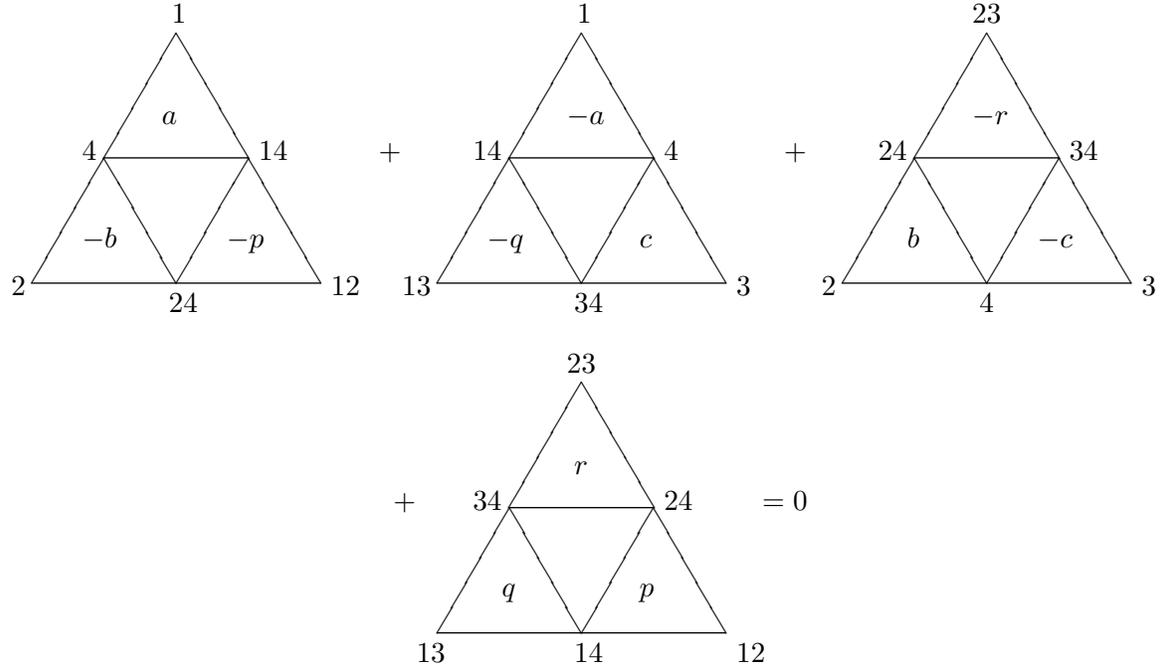
\begin{figure}[htbp]
\setlength{\unitlength}{0.05em}
\centerline{
\begin{picture}(200,200)(0,-30)
 \path(0,0)(200,0)(100,173.2)(0,0)
 \path(100,0)(150,86.6)(50,86.6)(100,0)
 \put(35,85){$4$}\put(207,-8){$12$}
 \put(157,85){$14$}\put(-14,-8){$2$}
 \put(95,-20){$24$}\put(97,180){$1$}
 \put(90,110){$a$}
 \put(35,25){$-b$}
 \put(135,25){$-p$}
 \put(240,85){$+$}
\end{picture}\qquad\qquad
\begin{picture}(200,200)(0,-30)
 \path(0,0)(200,0)(100,173.2)(0,0)
 \path(100,0)(150,86.6)(50,86.6)(100,0)
 \put(25,85){$14$}\put(207,-8){$3$}
 \put(157,85){$4$}\put(-24,-8){$13$}
 \put(95,-20){$34$}\put(97,180){$1$}
 \put(90,110){$-a$}
 \put(35,25){$-q$}
 \put(140,25){$c$}
 \put(240,85){$+$}
\end{picture}\qquad\qquad
\begin{picture}(200,220)(0,-30)
 \path(0,0)(200,0)(100,173.2)(0,0)
 \path(100,0)(150,86.6)(50,86.6)(100,0)
 \put(25,85){$24$}\put(207,-8){$3$}
 \put(157,85){$34$}\put(-14,-8){$2$}
 \put(95,-20){$4$}\put(90,180){$23$}
 \put(90,110){$-r$}
 \put(45,25){$b$}
 \put(135,25){$-c$}
\end{picture}}
\centerline{
\begin{picture}(200,240)(0,-30)
 \path(0,0)(200,0)(100,173.2)(0,0)
 \path(100,0)(150,86.6)(50,86.6)(100,0)
 \put(25,85){$34$}\put(207,-20){$12$}
 \put(157,85){$24$}\put(-14,-20){$13$}
 \put(95,-20){$14$}\put(90,180){$23$}
 \put(95,110){$r$}
 \put(45,25){$q$}
 \put(140,25){$p$}
 \put(-30,85){$+$}
 \put(225,85){$=0$}
\end{picture}}
\caption{Tripodal forms (\ref{3leg:1}) of equations (\ref{xij}) on the octahedra $\langle 3\rangle, \langle 2\rangle, \langle 1\rangle$, and the tripodal form (\ref{eq: pqr}) of equation (\ref{l}) on the octahedron $\langle 0\rangle$ sum up to zero}
\label{fig: 3legs 1}
\end{figure}

We call equations (\ref{3leg:1}) and (\ref{3leg:2}) {\em tripodal forms} of equations (\ref{xij}). For instance, the three terms in the first equation in (\ref{3leg:1}) correspond to the three (triangular) legs (1,4,14), (2,4,14), and (12,14,24) of the tripod with the head (4,14,24), see Fig.~\ref{fig: 3legs 1}. Adding the three tripodal forms (\ref{3leg:1}) of equations on the octahedra $\langle 3\rangle, \langle 2\rangle, \langle 1\rangle$ leads to the equation
\begin{equation}\label{eq: pqr}
p(x_{12},x_{14},x_{24})+q(x_{13},x_{14},x_{34})+r(x_{23},x_{24},x_{34})=0,
\end{equation}
which is nothing but the tripodal form of equation (\ref{l}) on the octahedron $\langle 0\rangle$, with the head (14,24,34).

\begin{figure}[htbp]
\setlength{\unitlength}{0.05em}
\centerline{
\begin{picture}(200,200)(0,-30)
 \path(0,0)(200,0)(100,173.2)(0,0)
 \path(100,0)(150,86.6)(50,86.6)(100,0)
 \put(35,85){$4$}\put(200,-20){$12$}
 \put(157,85){$1$}\put(-14,-20){$24$}
 \put(95,-20){$2$}\put(92,180){$14$}
 \put(90,110){$A$}
 \put(35,25){$-B$}
 \put(130,25){$-P$}
 \put(240,85){$+$}
\end{picture}\qquad\qquad
\begin{picture}(200,200)(0,-30)
 \path(0,0)(200,0)(100,173.2)(0,0)
 \path(100,0)(150,86.6)(50,86.6)(100,0)
 \put(25,85){$1$}\put(200,-20){$34$}
 \put(157,85){$4$}\put(-14,-20){$13$}
 \put(95,-20){$3$}\put(92,180){$14$}
 \put(80,110){$-A$}
 \put(35,25){$-Q$}
 \put(140,25){$C$}
 \put(240,85){$+$}
\end{picture}\qquad\qquad
\begin{picture}(200,220)(0,-30)
 \path(0,0)(200,0)(100,173.2)(0,0)
 \path(100,0)(150,86.6)(50,86.6)(100,0)
 \put(25,85){$2$}\put(200,-20){$34$}
 \put(157,85){$3$}\put(-14,-20){$24$}
 \put(95,-20){$4$}\put(92,180){$23$}
 \put(80,110){$-R$}
 \put(45,25){$B$}
 \put(135,25){$-C$}
\end{picture}}
\centerline{
\begin{picture}(200,240)(0,-30)
 \path(0,0)(200,0)(100,173.2)(0,0)
 \path(100,0)(150,86.6)(50,86.6)(100,0)
 \put(25,85){$3$}\put(200,-20){$12$}
 \put(157,85){$2$}\put(-14,-20){$13$}
 \put(95,-20){$1$}\put(92,180){$23$}
 \put(90,110){$R$}
 \put(45,25){$Q$}
 \put(140,25){$P$}
 \put(-30,85){$+$}
 \put(225,85){$=0$}
\end{picture}}
\caption{Tripodal forms (\ref{3leg:2}) of equations (\ref{xij}) on the octahedra $\langle 3\rangle, \langle 2\rangle, \langle 1\rangle$, and the tripodal form  (\ref{eq: PQR}) of equation (\ref{k}) on the octahedron $\langle 4\rangle$ sum up to zero}
\label{fig: 3legs 2}
\end{figure}
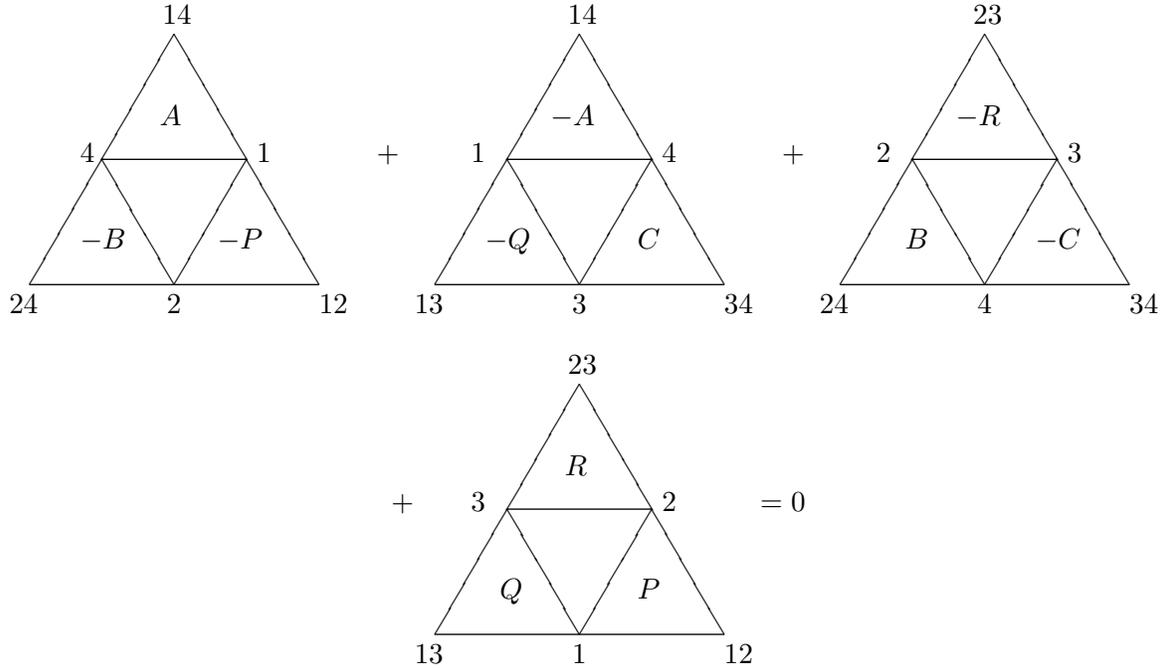

Similarly, adding the three tripodal forms (\ref{3leg:2}) of equations on the octahedra $\langle 3\rangle, \langle 2\rangle, \langle 1\rangle$ leads to the equation
\begin{equation}\label{eq: PQR}
P(x_1,x_2,x_{12})+Q(x_1,x_3,x_{13})+R(x_2,x_3,x_{23})=0,
\end{equation}
which is the tripodal form of equation (\ref{k}) on the octahedron $\langle 4\rangle$, with the head (1,2,3). Thus, Proposition \ref{st:3leg} subsumes (and is actually much stronger than) Proposition \ref{st: 3-->5}.

Moreover, there exist further tripodal representations of equations (\ref{xij}) and (\ref{k}), (\ref{l}), due to the symmetry of all coordinates. Each equation under consideration admits eight tripodal representations, since each face of the octahedron can be chosen as the head of the tripod. In total we have 40 such representations. In order to put them in a unified form, we will use the realization of the $Q(A_4)$ lattice as a hyperplane (\ref{eq: AN lattice}) in $\Integer^5$, and will denote the leg functions in the tripodal form by the letter $a$ with three superscripts which are the indices of the three arguments, omitting the arguments themselves. Our convention will be that this notation is symmetric with respect to the first and the third argument (the base of the leg), while the second argument (the spike of the leg) enters the equation only once. In this notation, formulas of Proposition \ref{st:3leg} can be written so: for the quadruple of octahedra $\langle i\rangle$, $\langle j\rangle$, $\langle k\rangle$, $\langle m\rangle$, the tripodal forms
\begin{gather}
    \EQ{i}:\quad a^{jn,jm,mn}-a^{kn,km,mn}=a^{jn,jk,kn},\nonumber\\[0.3em]
    \EQ{j}:\quad a^{kn,km,mn}-a^{in,im,mn}=a^{kn,ik,in},\label{ijk}\\[0.3em]
    \EQ{k}: \quad  a^{in,im,mn}-a^{jn,jm,mn}=a^{in,ij,jn}\nonumber
\end{gather}
sum up to the tripodal form
\[
\EQ{m}:\quad a^{jn,jk,kn}+a^{kn,ik,in}+a^{in,ij,jn}=0.
\]

The following theorem summarizes the results of the present section.

\begin{theorem}\label{th:3leg}
A quintuple of the octahedron type equations is 4D consistent if and only if, in any quadruple of octahedra such that any two of them share a triangular face, the tripodal forms of any three equations sum up to the tripodal form of the fourth one.
\end{theorem}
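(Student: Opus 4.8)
The plan is to prove the two implications separately, reducing the necessity to results already in hand and treating the sufficiency as an abstract replay of the dKP computation of Section~\ref{subsect: dKP consist}. For necessity I would argue that the statement is little more than a repackaging of Proposition~\ref{st:3leg} together with the additivity identities \eqref{eq: pqr} and \eqref{eq: PQR}. If the quintuple is 4D consistent then, in particular, every triple of its equations satisfies the hypotheses of Proposition~\ref{st:3leg} and hence admits both tripodal representations \eqref{3leg:1} and \eqref{3leg:2}. Adding the three forms \eqref{3leg:1} makes the leg functions $a,b,c$ cancel and produces \eqref{eq: pqr}, the tripodal form of the equation on the connecting octahedron $\langle 0\rangle$; adding \eqref{3leg:2} produces \eqref{eq: PQR} on $\langle 4\rangle$. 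In the uniform notation \eqref{ijk} these are exactly the summation relations for the two quadruples obtained by omitting $\langle 4\rangle$ and $\langle 0\rangle$ respectively. Since all five coordinate directions are on an equal footing and permutations of indices are admissible transformations preserving 4D consistency, relabeling yields the summation relation for each of the five quadruples, which is the asserted sum property (and "any three sum to the fourth" is immediate, since each relation is a single linear identity among four tripodal forms).

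For sufficiency I would verify Definition~\ref{def: 4D local} for an arbitrary triple, say $\langle 1\rangle,\langle 2\rangle,\langle 3\rangle$, following the template of Section~\ref{subsect: dKP consist} with the given tripodal forms in place of the explicit dKP ones. To check the first requirement with the data \eqref{ini loc 1}, I would first solve \eqref{xij} for $x_{12},x_{13},x_{23}$ by isolating the spike legs $p,q,r$ of \eqref{3leg:1} (solvability being guaranteed by the irreducibility condition and the implicit function theorem); since $a,b,c$ cancel, the solved fields automatically satisfy $p+q+r=0$, i.e. the equation on $\langle 0\rangle$. I would then pass to the adjacent 4-ambo-simplex $S'$ along $\langle 0\rangle$, whose octahedral faces are $\langle 0\rangle, T_1\langle 1\rangle, T_2\langle 2\rangle, T_3\langle 3\rangle, T_4\langle 4\rangle$. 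The sum property, which by hypothesis holds on every 4-ambo-simplex and in particular on $S'$, applied to the quadruple omitting $T_4\langle 4\rangle$ says that the tripodal forms of $T_3\langle 3\rangle, T_2\langle 2\rangle, T_1\langle 1\rangle$ add up to the form of $\langle 0\rangle$. Choosing among the eight tripodal representations of each shifted octahedron the one exhibiting $x_{123}$ (resp.\ $x_{124}$) as a spike leg, I determine $x_{123}$ from \eqref{x123 in x} and the two candidate values \eqref{x124 in x} for $x_{124}$; their difference equals, by the $S'$ summation relation, the $\langle 0\rangle$-equation expression evaluated on the already-determined fields, and therefore vanishes. The second requirement, with data \eqref{ini loc 2} and adjacency along $\langle 4\rangle$, is checked identically using \eqref{3leg:2} and reduces to the equation on $\langle 4\rangle$, which holds by \eqref{eq: PQR}. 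As both requirements hold for every triple, the quintuple is 4D consistent.

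I expect the sufficiency direction to be the main obstacle, and within it two bookkeeping points. First, among the eight tripodal forms of each shifted octahedron one must select the representation isolating the sought unknown as a spike while keeping the three forms of the shifted triple telescoping precisely onto the $\langle 0\rangle$- (resp.\ $\langle 4\rangle$-) equation; making these choices mutually compatible is the only real subtlety, for once they are fixed the cancellation is automatic, exactly as in the dKP prototype. Second, one must justify every solution step by the implicit function theorem, invoking the irreducibility condition so that each equation is locally solvable for the relevant field. The necessity direction carries no comparable difficulty, as it merely combines Proposition~\ref{st:3leg} with the elementary symmetry of the configuration.
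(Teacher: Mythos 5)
Your proposal is correct and follows essentially the same route as the paper, which disposes of the theorem in one line: necessity from Proposition~\ref{st:3leg} (via the summation identities \eqref{eq: pqr}, \eqref{eq: PQR} and index symmetry), and sufficiency by repeating the dKP computation of Section~\ref{subsect: dKP consist} with abstract tripodal forms in place of the explicit ones. You have merely spelled out the bookkeeping (choice of tripodal representation isolating the sought field as a spike, reduction of the two candidate values to the $\langle 0\rangle$- resp.\ $\langle 4\rangle$-equation) that the paper leaves implicit.
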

\begin{proof}
The necessity follows from Proposition \ref{st:3leg}, the sufficiency is proved by literally the same argument as used in Section \ref{subsect: dKP consist} for the proof of the 4D consistency of the dKP equation.
\end{proof}


\section{Tripodal equations}\label{s:3leg}
\subsection{Definitions and notation}\label{s:octahedron}

We have shown that each equation of a 4D consistent system can be written in 8 ways in a tripodal form. In the present section, we temporarily forget about consistency and analyze just this property of a single octahedron type equation. The enumeration of the vertices of an octahedron and of the corresponding variables will be different from the previously used: it will be convenient to enumerate them just from 1 to 6 as on Fig.~\ref{fig:octahedron}.

Our task here will be to describe all equations
\begin{equation}\label{xxxxxx}
 \Phi(x_1,x_2,x_3,x_4,x_5,x_6)=0,
\end{equation}
which are locally equivalent to equations of the form
\[
 a(x_i,x_{7-k},x_j)+b(x_j,x_{7-i},x_k)+c(x_k,x_{7-j},x_i)=0
\]
for each of the eight triples $(i,j,k)$, corresponding to the faces of an octahedron. Such equations will be called {\em tripodal}.

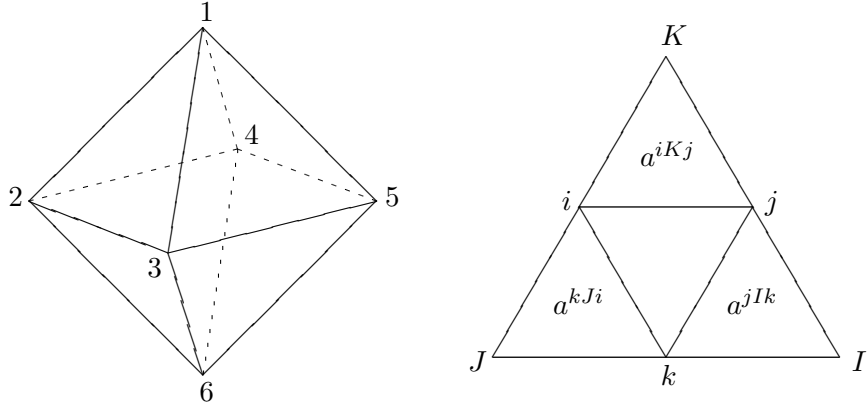
\begin{figure}[t]
\setlength{\unitlength}{0.06em}
\centerline{
\begin{picture}(200,240)(-100,-120)
 \path(-100,0)(0,100)(100,0)(0,-100)(-100,0)(-20,-30)(100,0)
 \path(0,100)(-20,-30)(0,-100)
 \dashline{3}(0,100)(20,30)(0,-100)
 \dashline{3}(-100,0)(20,30)(100,0)
 \put(-2,104){1}\put(-2,-115){6}
 \put(-112,-3){2}\put(105,-3){5}
 \put(-32,-44){3}\put(24,33){4}
\end{picture}\qquad\qquad
\begin{picture}(200,240)(0,-30)
 \path(0,0)(200,0)(100,173.2)(0,0)
 \path(100,0)(150,86.6)(50,86.6)(100,0)
 \put(40,85){$i$}\put(207,-8){$I$}
 \put(157,85){$j$}\put(-14,-8){$J$}
 \put(97,-16){$k$}\put(97,180){$K$}
 \put(85,110){$a^{iKj}$}
 \put(35,25){$a^{kJi}$}
 \put(135,25){$a^{jIk}$}
\end{picture}}
\caption{Left: enumeration of the vertices of an octahedron: (1,2,3) is a face, and opposite vertices carry complementary indices which sum up to 7. Right: the tripodal form with the head $(i,j,k)$.}
\label{fig:octahedron}
\end{figure}

As usual, it is supposed that equation (\ref{xxxxxx}) is irreducible, in particular, neither of the partial derivatives $\Phi_i$ vanishes identically. The answer is determined up to point transformations
\begin{equation}\label{tix}
 \tilde x_i=X_i(x_i),\quad X'_i\ne0.
\end{equation}
Functions $a,b,c$ may be different for different faces. It will be convenient to denote them by superscripts pointing to the arguments, for instance,
$a^{123}=a^{123}(x_1,x_2,x_3)$. Here the second index is distinguished, and it is supposed that the notation is symmetric with respect to the first and the third index: $a^{123}=a^{321}$. This means that $a^{123}(x_1,x_2,x_3)=a^{321}(x_3,x_2,x_1)$ (but of course it is not supposed that
$a^{123}(x_1,x_2,x_3)=a^{321}(x_1,x_2,x_3)$). Opposite vertices of an octahedron will be denoted by the same letter in the lower and the upper cases, e.g., $I=7-i$. Thus, the tripodal form of equation with the head $(ijk)$ takes the form
\[
 a^{iKj}+a^{jIk}+a^{kJi}=0,
\]
see Fig.~\ref{fig:octahedron}.
The set of all eight tripodal forms looks like this:
\begin{equation}\label{3legs}
\begin{array}{cll}
 \text{\em head:} &\qquad & \text{\em tripodal form:} \\[1.5mm]
  (1,2,3) && a^{142}+a^{263}+a^{351}=0, \\
  (1,2,4) && a^{132}+a^{264}+a^{451}=0, \\
  (1,3,5) && a^{123}+a^{365}+a^{541}=0, \\
  (1,4,5) && a^{124}+a^{465}+a^{531}=0, \\
  (2,3,6) && a^{213}+a^{356}+a^{642}=0, \\
  (2,4,6) && a^{214}+a^{456}+a^{632}=0, \\
  (3,5,6) && a^{315}+a^{546}+a^{623}=0, \\
  (4,5,6) && a^{415}+a^{536}+a^{624}=0.
\end{array}
\end{equation}
Clearly, this structure is rather restrictive. It turns out to be possible to find a complete list of tripodal equations. It is given and discussed in the following subsection, while the rest of the section is devoted to the proof that the list is exhaustive, indeed.

\subsection{Classification of tripodal equations}\label{ss:3leg_types}

\begin{theorem}\label{th:3leg_types}
All tripodal equations (\ref{xxxxxx}), up to the point transformations (\ref{tix}) and transpositions $i\leftrightarrow I$, $(i,j)\leftrightarrow(J,I)$, are given by the following list:
\begin{align}
\label{Y1}\tag{\mbox{T$_1$}}
 & x_1x_6+x_2x_5+x_3x_4=0,\\
\label{Y3}\tag{\mbox{T$_2$}}
 & (x_1-x_4)(x_2-x_6)(x_3-x_5)+(x_4-x_2)(x_6-x_3)(x_5-x_1)=0,\\
 \label{Y2}\tag{\mbox{T$_3$}}
 & (x_1-x_2)x_4+(x_2-x_3)x_6+(x_3-x_1)x_5=0,\\
\label{Y4}\tag{\mbox{T$_4$}}
 & x_1x_6=(x_2+x_3)^{-\g}(x_4+x_5),\\
\label{Y5}\tag{\mbox{T$_5$}}
 & x_1x_6=x_2+x_3+x_4+x_5,\\
\label{Y6}\tag{\mbox{T$_6$}}
 & x_1x_2x_3x_4=x_5+x_6,\\
\label{Y7}\tag{\mbox{T$_7$}}
 & x_1+x_2+x_3+x_4+x_5+x_6=0.
\end{align}
\end{theorem}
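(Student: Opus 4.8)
The plan is to classify all functions $\Phi$ of six variables admitting eight simultaneous tripodal representations, as listed in (\ref{3legs}), working modulo the point transformations (\ref{tix}) and the natural octahedral symmetries. The core strategy is to translate the existence of \emph{multiple} additive separations into a rigid system of differential conditions on $\Phi$, and then integrate this system. The first step is to exploit that $\Phi=0$ can be written both as $a^{142}+a^{263}+a^{351}=0$ (head $(1,2,3)$) and as $a^{132}+a^{264}+a^{451}=0$ (head $(1,2,4)$). Since both expressions vanish on the same hypersurface, each is a function of the other, and differentiating the relation between them produces constraints coupling the partial derivatives $\Phi_i$. Concretely, I would first reduce each tripodal form to a statement about ratios of partial derivatives: the separation $a^{iKj}+a^{jIk}+a^{kJi}=0$ forces the logarithmic derivatives $\partial_i\log(\Phi_i/\Phi_K)$, etc., to have a prescribed dependence pattern, so that suitable second-order combinations of $\log\Phi_i$ vanish. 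This is the analytic heart, mirroring the passage from (\ref{fgh1}) to Proposition \ref{st:3leg}.

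The second step is to set up a normalized function $u_i = \Phi_i$ (or better, to study the ``multiplicative'' quantities $\Phi_i/\Phi_j$) and encode each of the eight separations as a vanishing mixed second derivative of $\log$ of an appropriate ratio. Because the second argument (the spike) in each leg appears only once, the symmetric role of the first and third arguments gives, for every head, conditions of the shape $\partial_{x}\partial_{y}\bigl(\log(\Phi_x/\Phi_y)\bigr)$-type relations among the variables sharing a face. Collecting the conditions from all eight heads yields an overdetermined linear-in-log system; I expect that solving it forces $\log\Phi_i$ to be, for each $i$, a sum of univariate functions plus controlled coupling terms. Integrating back up from the $\Phi_i$ to $\Phi$ itself, and fixing the integration ambiguities by the point transformations (\ref{tix}) that let us normalize each univariate gauge freedom $X_i(x_i)$, I would then read off the finitely many normal forms. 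The free parameter surviving in (\ref{Y4}) as the exponent $-\gamma$ signals a one-parameter branch that the integration cannot gauge away, and I would track exactly where that parameter enters.

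The third step is bookkeeping: organizing the solutions of the differential system into equivalence classes under point transformations and the transpositions $i\leftrightarrow I$, $(i,j)\leftrightarrow(J,I)$ induced by the octahedral symmetry. Here one must verify that each candidate on the list (\ref{Y1})--(\ref{Y7}) genuinely admits all eight tripodal forms — a direct but necessary check, e.g.\ rewriting $x_1x_6+x_2x_5+x_3x_4=0$ after dividing by a product of two variables as in the dKP computation of Section~\ref{subsect: dKP consist}, which exhibits the separated form explicitly — and, conversely, that no further solutions escape the classification. The irreducibility assumption (no $\Phi_i\equiv 0$) is used repeatedly to divide by partial derivatives and to discard degenerate branches.

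The main obstacle I anticipate is the integration step: going from the derived constraints on the $\Phi_i$ back to $\Phi$ and correctly enumerating \emph{all} cases, including the logarithmic/exponential branches that produce the $\gamma$-family (\ref{Y4}) and the additive normal forms (\ref{Y5})--(\ref{Y7}). The difficulty is not any single computation but the combinatorial case analysis: the separation conditions can be satisfied in qualitatively different ways (e.g.\ a leg function being linear, logarithmic, or genuinely nonlinear in a variable), and each choice propagates through the remaining seven heads with compatibility constraints. Ensuring the case split is exhaustive and that the point-transformation gauge is used consistently to collapse apparently distinct solutions into the seven listed forms is where the proof will require the most care.
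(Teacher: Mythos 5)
Your outline starts in the same place as the paper -- comparing two tripodal forms that share an edge, solving both for a common variable, and differentiating to extract constraints -- and it ends in the right place, but the decisive middle is asserted rather than supplied. The sentence ``I expect that solving it forces $\log\Phi_i$ to be, for each $i$, a sum of univariate functions plus controlled coupling terms'' is exactly where the real content of the proof lives, and nothing in your plan produces it. The paper's mechanism is to work not with $\Phi_i$ but with the leg functions themselves: the identities obtained from two adjacent heads (Proposition \ref{st:id}) integrate to the structural form $a^{ikj}=a^{ij}b^k+p^{ik}+p^{kj}$ (Proposition \ref{st:aij}), which collapses every unknown three-variable leg to a combination of \emph{two}-variable functions. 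That reduction is what makes the problem finite-dimensional enough to classify, and it immediately yields the organizing trichotomy (classes I, II, III according to the vanishing of $a^{ikj}_{ikj}$ and $a^{ikj}_{ij}$) that guarantees the case split is exhaustive. Your proposal has no analogue of either the two-variable reduction or the trichotomy, so the ``overdetermined linear-in-log system'' you describe has no stated solution procedure and no argument that all branches have been found.

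A second concrete omission: the multi-ratio equations (\ref{Y2}) and (\ref{Y3}) are not reachable by the additive-separation heuristics you describe. In the paper they emerge from class III via Lemma \ref{l:frac}, a functional-equation lemma characterizing solutions of $g_x h_y/(g-h)^2=F(x,y)$ (the general solution of a Liouville-type equation), which ties together the representations $f=\rho\log(a(x)+b(y))+c(x)+d(y)$ of consecutive legs around a face and forces the cyclic cross-ratio structure. Without some such tool, the logarithmic branch of your case analysis cannot be closed, and precisely the two equations that matter most for the final classification (they are the $Q(A_3)$ forms of dKP's Schwarzian version and of (\ref{h3})) would be out of reach. Your step 3 (verifying that each listed equation does admit all eight tripodal forms, and gauging by (\ref{tix}) and the octahedral symmetries) matches the paper's check in Section \ref{ss:3leg_types} and is fine as stated.
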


It is easy to see that the first three equations are equivalent to
(\ref{h1}), (\ref{h3}), (\ref{h2}), upon re-naming
$(x_{12},x_{13},x_{23})\to(x_4,x_5,x_6)$. The proof of the theorem will be given in the following subsections. Here we will just check that all equations of the list are tripodal, indeed, that is, admit all eight representations (\ref{3legs}).

The results of this check are summarized in Table~\ref{fig:sklad}, which lists the functions $a(x,y,z)$ acting as the legs $a^{ijk}=a(x_i,x_j,x_k)$ for equations of the list. Recall that the variables $x,z$ are on the equal footing, while $y$ plays a distinguished role. In particular, the functions $a$ always depend on $y$ but is sometimes independent on $x$ or $z$. Functions $a$ are defined up to point transformations (\ref{tix}). Besides, they can be multiplied by an arbitrary constant, one can flip $x,z$ and add arbitrary combinations of the type $\mu(x)+\nu(z)$ (as long as one such function is considered, and not all three legs of the tripodal form). At this stage, we are not concerned with adjusting all these transformations for consistent equations, therefore the table contains arbitrary (possibly simple) representatives for the leg functions. This arbitrariness notwithstanding, we will see in Section \ref{s:class} that this table is very useful for putting separate tripodal equations into consistent quintuples. The third column of the table contains the class according to the subdivision at the end of Section \ref{ss:var2}.

\begin{figure}[t]
\[
 \begin{array}{|c|c|c|}
 \hline
  \text{Eq.}& \text{legs} & \text{case}\\[0.2em]
 \hline &&\\[-0.7em]
 (\ref{Y1}) & xyz & \text{I} \\[1em]
 (\ref{Y3}) & \log\Bigl(\dfrac{x+y}{y+z}\Bigr) & \text{III} \\[1em]
 (\ref{Y2}) & (x+z)y,\quad \log\Bigl(\dfrac{x+y}{y+z}\Bigr),\quad
              \log(x+y) & \text{III} \\[1em]
 (\ref{Y4}) & xy,\quad y(x+z)^\g,\quad y(x+z)^{1/\g},\quad \log(x+y),\quad y
    & \text{I ($\g\ne1$),  III ($\g=1$)}\\[0.7em]
 (\ref{Y5}) & y,\quad (x+y)z & \text{II} \\[0.7em]
 (\ref{Y6}) & xyz,\quad xy,\quad y,\quad y+\log(x+z),\quad \log(x+y) & \text{I} \\[0.7em]
 (\ref{Y7}) & y & \text{III} \\[0.5em]
 \hline
 \end{array}
\]
\caption{Legs store}\label{fig:sklad}
\end{figure}

{\em Equation} (\ref{Y1}). The tripodal form with the head $(1,2,3)$ is
\[
 (1,2,3):\qquad \frac{x_4}{x_1x_2}+\frac{x_6}{x_2x_3}+\frac{x_5}{x_3x_1}=0,
\]
other ones are obtained by reflections $x_i\leftrightarrow x_I$ and
$(x_i,x_j)\leftrightarrow(x_J,x_I)$, which generate the symmetry group of an octahedron and leave the equation invariant.

{\em Equation} (\ref{Y3}). The multiplicative tripodal form $(1,2,3)$ can be seen directly from the equation. All other ones follow again by reflections $x_i\leftrightarrow x_I$ and
$(x_i,x_j)\leftrightarrow(x_J,x_I)$, although the invariance of the equation under these reflections is less obvious than in the case of equation \ref{Y1}.

{\em Equation} (\ref{Y2}) is already written in the tripodal form $(123)$. Arranging the terms in a different order leads to the tripodal form for the opposite face:
\[
(4,5,6): \qquad x_1(x_4-x_5)+x_3(x_5-x_6)+x_2(x_6-x_4)=0.
\]
Thus, the equation is invariant under the involution
\[
 P:\quad (x_1,x_2,x_3)\leftrightarrow(x_6,x_5,x_4).
\]
Further, one finds the {\em multiplicative} tripodal form
\[
 (1,2,4):\qquad \Bigl(\frac{x_1-x_3}{x_3-x_2}\Bigr)\frac{(x_4-x_5)}{(x_6-x_4)}=1.
\]
All other tripodal forms are obtained with the help of the involution $P$ and the cyclic permutation
\[
 Z:\quad x_1\to x_2\to x_3\to x_1,\quad x_6\to x_6\to x_4\to x_6.
\]

{\em Equation} (\ref{Y4}) is invariant under the transformations
\[
 x_1\leftrightarrow x_6;\qquad
 (x_2,x_5)\leftrightarrow(x_3,x_4);\qquad
 (x_1,x_2,x_3,\g)\leftrightarrow(x^{-1/\g}_1,x_5,x_4,-1/\g),
\]
which allows one to obtain all the tripodal forms from the following two:
\begin{alignat*}{2}
 &(1,2,3):&\qquad & \frac{x_4}{x_1}-x_6(x_2+x_3)^\g+\frac{x_5}{x_1}=0,\\
 &(1,2,4):& & (\log x_1+\g\log(x_2+x_3))+\log x_6-\log(x_4+x_5)=0.
\end{alignat*}

{\em Equation} (\ref{Y5}). In this case the symmetry group is generated by the transformations
\[
 x_i\leftrightarrow x_I;\qquad
 (x_2,x_5)\leftrightarrow(x_3,x_4),
\]
which allows one to obtain all the tripodal forms from one of them, for instance, from
\[
 (1,2,3):\qquad \frac{x_4+x_2}{x_1}-x_6+\frac{x_3+x_5}{x_1}=0.
\]
It is worth mentioning that, although here the first and the third terms are similar to the legs encountered for equations (\ref{Y2}) and (\ref{Y4}) with $\g=1$, this is a different type of legs, since the role of the distinguished variable $y$ is different.

{\em Equation} (\ref{Y6}). Here there are transformations
\[
 (x_1,x_6)\leftrightarrow(x_2,x_5);\qquad x_3\leftrightarrow x_4,
\]
which make it sufficient to give the following three tripodal forms:
\begin{alignat*}{2}
 &(1,2,3):&\qquad & x_1x_4x_2-\frac{x_6}{x_3}-\frac{x_5}{x_3}=0,\\
 &(1,3,5):& & \log(x_1x_2x_3)-\log(x_5+x_6)+\log x_4=0,\\
 &(3,5,6):& & (\log x_3+\log x_1)+(\log x_4-\log(x_5+x_6))+\log x_2=0.
\end{alignat*}

{\em Equation} (\ref{Y7}). This case is trivial.

\subsection{Reduction to functions of two variables}\label{ss:var2}

Let us prove some useful relations which follow from the (local) equivalence of the two tripodal forms
\begin{equation}\label{ikKj}
 (i,j,K):~~a^{ikj}+a^{jIK}+a^{KJi}=0,\qquad (i,j,k):~~a^{iKj}+a^{jIk}+a^{kJi}=0,
\end{equation}
which correspond to two faces sharing a common edge $ij$. We will use the subscripts $i$ to denote the partial derivatives $\partial_{x_i}$.

\begin{statement}\label{st:id}
The following identities hold true:
\begin{gather}
\label{id1}
 \frac{a^{ikj}_i+a^{KJi}_i}{a^{KJi}_J}=\frac{a^{iKj}_i+a^{kJi}_i}{a^{kJi}_J},\qquad
 \frac{a^{ikj}_j+a^{jIK}_j}{a^{jIK}_I}=\frac{a^{iKj}_j+a^{jIk}_j}{a^{jIk}_I},\\
\label{id2}
 \Ifrac{a^{iKj}_K}{j} \Ifrac{a^{KJi}_J}{K}=
 \Ifrac{a^{ikj}_k}{j} \Ifrac{a^{kJi}_J}{k},\qquad
 \Ifrac{a^{iKj}_K}{i} \Ifrac{a^{jIK}_I}{K}=
 \Ifrac{a^{ikj}_k}{i} \Ifrac{a^{jIk}_I}{k}.
\end{gather}
Since each one contains five variables only, they are satisfied identically and not by virtue of equations (\ref{ikKj}).
\end{statement}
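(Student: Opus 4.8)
The plan is to use that, by hypothesis, the two tripodal forms in (\ref{ikKj}) represent locally one and the same irreducible equation $\Phi=0$. Abbreviate the two sides as $F:=a^{ikj}+a^{jIK}+a^{KJi}$ (head $(i,j,K)$) and $G:=a^{iKj}+a^{jIk}+a^{kJi}$ (head $(i,j,k)$). Both are local defining functions of the same smooth hypersurface $\{\Phi=0\}$, each with nonvanishing gradient by irreducibility, so there is a smooth nonvanishing factor $\mu$ with $G=\mu F$ in a neighbourhood of the solution branch. Differentiating once and restricting to $\{F=0\}$ gives $G_\alpha=\mu F_\alpha$ for each variable $\alpha\in\{i,j,k,I,J,K\}$. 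Since every leg depends on only three of the variables, each $F_\alpha$, $G_\alpha$ is a short sum of leg-derivatives; for instance $F_J=a^{KJi}_J$, $G_J=a^{kJi}_J$, while $F_i=a^{ikj}_i+a^{KJi}_i$, $G_i=a^{iKj}_i+a^{kJi}_i$. Thus the six relations $G_\alpha/F_\alpha=\mu$ become explicit equalities of ratios.

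First I would prove (\ref{id1}). Equating the ratios coming from $\alpha=i$ and $\alpha=J$,
\[
 \frac{a^{iKj}_i+a^{kJi}_i}{a^{ikj}_i+a^{KJi}_i}=\mu=\frac{a^{kJi}_J}{a^{KJi}_J},
\]
and cross-multiplying gives at once the first identity in (\ref{id1}); the ratios from $\alpha=j$ and $\alpha=I$ give the second one verbatim. Each of these relations involves only five variables: the pair $(i,J)$ avoids $x_I$, the pair $(j,I)$ avoids $x_J$.

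The harder part is (\ref{id2}), which contains second derivatives and so needs the second order of $G=\mu F$. Differentiating $G_\alpha=\mu_\alpha F+\mu F_\alpha$ once more and restricting to $\{F=0\}$ gives $G_{\alpha\beta}=\mu_\alpha F_\beta+\mu_\beta F_\alpha+\mu F_{\alpha\beta}$. The key simplification is again that each leg depends on three variables only: for the three antipodal index pairs $(i,I)$, $(j,J)$, $(k,K)$ and for the pair $(I,J)$ both $F_{\alpha\beta}$ and $G_{\alpha\beta}$ vanish identically, which yields the clean auxiliary relations
\[
 \mu_iF_I+\mu_IF_i=0,\quad \mu_jF_J+\mu_JF_j=0,\quad \mu_kF_K+\mu_KF_k=0,\quad \mu_IF_J+\mu_JF_I=0
\]
on $\{F=0\}$. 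Rewriting (\ref{id2}) by means of $a^{iKj}_K=G_K$, $a^{KJi}_J=F_J$, $a^{ikj}_k=F_k$, $a^{kJi}_J=G_J$ casts its first line as $(1/G_K)_j\,(1/F_J)_K=(1/F_k)_j\,(1/G_J)_k$ and its second line as the $i/I$-analogue $(1/G_K)_i\,(1/F_I)_K=(1/F_k)_i\,(1/G_I)_k$. Expressing the mixed second derivatives that occur there through the surviving (nonvanishing) second-order relations and inserting the four auxiliary relations above, one eliminates $\mu$ together with all first derivatives $\mu_\alpha$, landing exactly on the stated products of logarithmic second derivatives.

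Both identities are then upgraded from $\{F=0\}$ to genuine identities by the observation recorded in the statement: the first identities in (\ref{id1}), (\ref{id2}) do not contain $x_I$ and the second ones do not contain $x_J$, and since $\Phi=0$ is solvable for the missing variable, a relation proven on the hypersurface holds for arbitrary values of the remaining five variables, hence identically. The main obstacle I anticipate is precisely the second-order elimination for (\ref{id2}): one must track which of the mixed partials $F_{\alpha\beta}$, $G_{\alpha\beta}$ survive and verify that, after substituting the auxiliary $\mu_\alpha$-relations, every term carrying $\mu$ or a derivative of $\mu$ cancels, leaving the $\mu$-free product form. This cancellation is the only delicate step; the remainder is routine bookkeeping of the three-variable dependencies of the legs.
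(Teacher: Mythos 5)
Your treatment of (\ref{id1}) is correct and is essentially the paper's argument in different packaging: the paper solves both forms of (\ref{ikKj}) for $x_I$ and equates the results, which produces exactly your ratios $G_\alpha/F_\alpha$ (your $\mu$ is the paper's $g'/f'$), and the five-variable observation that upgrades the relation from $\{F=0\}$ to an identity is the one already recorded in the statement. The gap is in (\ref{id2}). Set $s=\mu_j/F_j$, $t=\mu_k/F_k$; your four auxiliary relations give $\mu_i=-sF_i$, $\mu_I=sF_I$, $\mu_J=-sF_J$, $\mu_K=-tF_K$ on $\{F=0\}$. The only further second-order consequences of $G=\mu F$ in which no unknown mixed leg-derivative survives on both sides come from the pairs $(I,K)$, $(J,K)$, $(k,I)$, $(k,J)$, where exactly one of $F_{\alpha\beta}$, $G_{\alpha\beta}$ vanishes; they yield $\mu F_{IK}=(t-s)F_IF_K$, $\mu F_{JK}=(s+t)F_JF_K$, $G_{kI}=(s+t)F_kF_I$, $G_{kJ}=(t-s)F_kF_J$. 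Substituting all of these into the first identity of (\ref{id2}), written as $G_{jK}F_{JK}/(G_K^2F_J^2)=F_{jk}G_{kJ}/(F_k^2G_J^2)$, and then using the only remaining relevant relations $G_{jK}=(s-t)F_jF_K+\mu F_{jK}$ and $G_{jk}=(s+t)F_jF_k+\mu F_{jk}$, the elimination does \emph{not} terminate in a triviality: it terminates in the residue
\begin{equation*}
(s+t)\,\mu F_{jK}\,F_k=(t-s)\,G_{jk}\,F_K,\qquad F_{jK}=a^{jIK}_{jK},\quad G_{jk}=a^{jIk}_{jk}.
\end{equation*}
These two mixed derivatives are tied by the second-order jet of $G=\mu F$ only to $G_{jK}=a^{iKj}_{jK}$ and $F_{jk}=a^{ikj}_{jk}$ respectively, not to each other or to already determined data; one can prescribe a second-order jet satisfying all fifteen relations $G_{\alpha\beta}=\mu_\alpha F_\beta+\mu_\beta F_\alpha+\mu F_{\alpha\beta}$ and still violate the residue. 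So the cancellation you advertise as "routine bookkeeping" does not occur at second order.

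What is missing is a third-order input. The paper obtains it by solving both forms for $x_I$, so that $\phi:=\frac{a^{KJi}_J}{a^{kJi}_J}a^{iKj}_K-a^{KJi}_K$ is identified with $f_K/f'$ and is therefore again a function of $(x_j,x_K,a^{ikj}+a^{KJi})$ only; this structural fact forces $\phi_J\,a^{ikj}_k=\phi_k\,a^{KJi}_J$, and one further differentiation of that relation with respect to $x_j$ (which kills all terms not containing $x_j$) is precisely what produces (\ref{id2}). In your language this amounts to differentiating the first-order consequences of $G=\mu F$ once more tangentially along the hypersurface, i.e.\ to using the third-order jet. The framework $G=\mu F$ is sound and can be pushed that far, but the proof as written stops one derivative short.
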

\begin{proof}
The numerators do not vanish identically because of the local solvability of equations with respect to each variable. We shall prove only the first identities in each pair, since the second ones follow by the permutation $(i,I)\leftrightarrow(j,J)$. Note also that all identities are invariant under the flip $k\leftrightarrow K$. Solving equations (\ref{ikKj}) for $x_I$, we obtain an equation of the following form:
\[
 f(x_j,x_K,a^{ikj}+a^{KJi})=g(x_j,x_k,a^{iKj}+a^{kJi}),
\]
and by differentiation there follow equations of the form
\[
 f_K+f'a^{KJi}_K=g'a^{iKj}_K,\quad
 f'(a^{ikj}_i+a^{KJi}_i)=g'(a^{iKj}_i+a^{kJi}_i),\quad
 f'a^{KJi}_J=g'a^{kJi}_J,
\]
where the prime denotes the derivatives of the functions $f,g$ with respect to their last arguments. Now (\ref{id1}) follows immediately from the second and the third equations, while from the first and the third we derive
\[
 \frac{f_K}{f'}=\frac{a^{KJi}_J}{a^{kJi}_J}a^{iKj}_K-a^{KJi}_K=\phi
 \quad\Rightarrow\quad \phi_Ja^{ikj}_k=\phi_ka^{KJi}_J,
\]
which yields
\[
 \biggl(\biggl(\frac{a^{KJi}_J}{a^{kJi}_J}\biggr)_{\!\!J}a^{iKj}_K
  -a^{KJi}_{KJ}\biggr)a^{ikj}_k
 =(a^{KJi}_J)^2a^{iKj}_K\Ifrac{a^{kJi}_J}{k}.
\]
Dividing by $a^{iKj}_Ka^{ikj}_k(a^{KJi}_J)^2$ and differentiating with respect to $x_j$, we arrive at (\ref{id2}).
\end{proof}

Equations (\ref{id1}) will play the main role in the subsequent analysis. Equations
(\ref{id2}) are needed only for the proof of the following property, which, in its turn, will only be used in the proof of Proposition \ref{st:T23}. Nevertheless, it is convenient to do the job right now, since all necessary formulas are at hand.

\begin{statement}\label{st:imp2}
If at least one of the functions $a^{jIK}_{IK}$, $a^{KJi}_{KJ}$, $a^{jIk}_{Ik}$,
$a^{kJi}_{kJ}$ does not vanish identically, then
\begin{equation}\label{Aij}
 \frac{a^{ikj}_{ik}a^{ikj}_{kj}}{(a^{ikj}_k)^2}=
 \frac{a^{iKj}_{iK}a^{iKj}_{Kj}}{(a^{iKj}_K)^2}.
\end{equation}
\end{statement}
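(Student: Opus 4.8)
The plan is to turn the reciprocal-derivative identities (\ref{id2}) into identities between genuine second-order partial derivatives, and then to exploit the fact that the two sides of each identity do not share a common spike variable, which forces a separation of variables. Expanding with the rule $\Ifrac{A}{x}=-A_x/A^2$, the first identity in (\ref{id2}) reads
\[
 \frac{a^{iKj}_{Kj}\,a^{KJi}_{KJ}}{(a^{iKj}_K)^2(a^{KJi}_J)^2}
 =\frac{a^{ikj}_{kj}\,a^{kJi}_{kJ}}{(a^{ikj}_k)^2(a^{kJi}_J)^2},
\]
and the second identity gives the analogous relation with $(j,J)$ replaced by $(i,I)$. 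The structural point is that the left-hand side here involves only $x_i,x_j,x_K,x_J$ and does not contain the spike $x_k$, while the right-hand side involves only $x_i,x_j,x_k,x_J$ and does not contain $x_K$; being equal, both are therefore functions of the shared variables $x_i,x_j,x_J$ alone (and of $x_i,x_j,x_I$ for the second identity).

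Next I would bring in the nonvanishing hypothesis. The four derivatives $a^{jIK}_{IK},a^{KJi}_{KJ},a^{jIk}_{Ik},a^{kJi}_{kJ}$ are precisely the ``cross'' factors occurring in the two expanded identities. Moreover the permutation $(i,I)\leftrightarrow(j,J)$ leaves the claim (\ref{Aij}) invariant (using $a^{ikj}=a^{jki}$) and interchanges the two identities, sending the pair $\{a^{KJi}_{KJ},a^{kJi}_{kJ}\}$ to $\{a^{jIK}_{IK},a^{jIk}_{Ik}\}$; hence, up to this symmetry, I may assume the nonvanishing factor sits in the first identity. Writing $P=a^{iKj}_{Kj}/(a^{iKj}_K)^2$, $M=a^{KJi}_{KJ}/(a^{KJi}_J)^2$ and denoting the common value of the two sides by $\Theta(x_i,x_j,x_J)$, the relation $PM=\Theta$ with $M\not\equiv0$ lets me take $\partial_{x_K}\log$: the term $\partial_{x_K}\log P$ depends on $(x_i,x_K,x_j)$ and $-\partial_{x_K}\log M$ on $(x_i,x_J,x_K)$, so both equal some $\rho(x_i,x_K)$, whence the ratio factorizes, $P=\Pi(x_i,x_K)\Sigma(x_i,x_j)$, and likewise $Q:=a^{ikj}_{kj}/(a^{ikj}_k)^2=\Pi'(x_i,x_k)\Sigma'(x_i,x_j)$ on the other side, with the base parts matching up to a factor $c(x_i)$.

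Finally I would feed this back: since $P=-\partial_{x_j}\bigl(1/a^{iKj}_K\bigr)$, integrating in $x_j$ reconstructs $a^{iKj}_K$ in closed form, and substituting into $\frac{a^{iKj}_{iK}a^{iKj}_{Kj}}{(a^{iKj}_K)^2}=(\partial_i\log a^{iKj}_K)(\partial_j\log a^{iKj}_K)$ together with the (still unused) second identity of (\ref{id2}) shows that the dependence on the spike $x_K$ cancels, leaving a function of $x_i,x_j$ that is common to the spikes $k$ and $K$, which is exactly (\ref{Aij}). The hardest part will be this last matching step: the book-keeping that makes a single nonvanishing hypothesis control both the $i$- and the $j$-direction (using the $(i,I)\leftrightarrow(j,J)$ symmetry to halve the cases, and propagating non-vanishing across each identity by analyticity, since $PM\not\equiv0$ forces every factor to be non-trivial), together with the degenerate branches in which one of the second derivatives vanishes identically — there the corresponding leg is affine in one argument and both sides of (\ref{Aij}) must be checked to collapse to the same elementary expression.
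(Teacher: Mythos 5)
Your opening reductions are sound: the expansion of (\ref{id2}) into the product identity
\[
 \frac{a^{iKj}_{Kj}\,a^{KJi}_{KJ}}{(a^{iKj}_K)^2(a^{KJi}_J)^2}
 =\frac{a^{ikj}_{kj}\,a^{kJi}_{kJ}}{(a^{ikj}_k)^2(a^{kJi}_J)^2},
\]
the observation that both sides depend only on $x_i,x_j,x_J$, and the $(i,I)\leftrightarrow(j,J)$ symmetry reduction are all correct. But the argument has a genuine gap: you work exclusively with (\ref{id2}), and (\ref{id2}) does not contain enough information to prove (\ref{Aij}). The mixed derivatives $a^{ikj}_{ik}$ and $a^{iKj}_{iK}$, which constitute half of the claim, enter (\ref{id2}) only through its second identity, via $\Ifrac{a^{ikj}_k}{i}$ and $\Ifrac{a^{iKj}_K}{i}$ — and that identity degenerates to $0=0$ precisely when $a^{jIK}_{IK}\equiv a^{jIk}_{Ik}\equiv 0$, a case fully allowed by the hypothesis, which only requires one of the four functions to be nonzero. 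In that case your ``still unused second identity'' supplies nothing, and the first identity alone places no constraint whatever relating $a^{ikj}_{ik}$ to $a^{iKj}_{iK}$. Even in the non-degenerate case, your separation of variables gives $a^{iKj}_{Kj}/(a^{iKj}_K)^2=\Pi(x_i,x_K)\Sigma(x_i,x_j)$, after which $(\partial_i\log a^{iKj}_K)(\partial_j\log a^{iKj}_K)$ still carries $x_K$-dependence through $\Pi$ and the integration ``constant'' $\beta(x_i,x_K)$; the cancellation asserted in your final step is exactly what needs to be proved, and no mechanism for it is supplied.

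The missing ingredient is (\ref{id1}). Applying $\partial_{x_k}\partial_{x_K}$ to the first identity of (\ref{id1}) (its left side depends on $x_k$ only through $a^{ikj}_i$, its right side on $x_K$ only through $a^{iKj}_i$) yields
\[
 a^{ikj}_{ik}\Ifrac{a^{KJi}_J}{K}=a^{iKj}_{iK}\Ifrac{a^{kJi}_J}{k},
\]
and analogously for the second identity. Each of these, paired with the corresponding identity of (\ref{id2}), forms a homogeneous linear system for the two unknowns $\Ifrac{a^{KJi}_J}{K}$, $\Ifrac{a^{kJi}_J}{k}$ (respectively $\Ifrac{a^{jIK}_I}{K}$, $\Ifrac{a^{jIk}_I}{k}$). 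The two systems share the same determinant, which is precisely the difference of the two sides of (\ref{Aij}), and the hypothesis states exactly that one of the systems admits a nontrivial solution, forcing that determinant to vanish. This is the paper's three-line proof; your route would need this extra input from (\ref{id1}) in any case, so the separation-of-variables machinery can be dropped.
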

\begin{proof}
Differentiating (\ref{id1}) with respect to $x^k$ and $x^K$, we obtain equations
\[
 a^{ikj}_{ik}\Ifrac{a^{KJi}_J}{K}=a^{iKj}_{iK}\Ifrac{a^{kJi}_J}{k},\qquad
 a^{ikj}_{kj}\Ifrac{a^{jIK}_I}{K}=a^{iKj}_{Kj}\Ifrac{a^{jIk}_I}{k}.
\]
Along with (\ref{id2}) they build a pair of linear homogeneous systems, one for
$(1/a^{KJi}_J)_K$, $(1/a^{kJi}_J)_k$ and another for $(1/a^{jIK}_I)_K$,
$(1/a^{jIk}_I)_k$. Determinants of these systems coincide. The statement of the proposition says that if at least one of the systems has a non-trivial solution, then their common determinant vanishes.
\end{proof}

The following representation is a direct consequence of identities (\ref{id1}).

\begin{statement}\label{st:aij}
The functions $a^{ikj}$ and $a^{iKj}$ are of the form
\begin{equation}\label{aikj}
 a^{ikj}=a^{ij}b^k+p^{ik}+p^{kj},\qquad a^{iKj}=a^{ij}b^K+p^{iK}+p^{Kj},\qquad
 b^kb^K\ne0.
\end{equation}
\end{statement}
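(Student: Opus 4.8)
The plan is to read off the stated additive–multiplicative structure directly from the two identities in (\ref{id1}) by a separation-of-variables argument applied to the mixed second derivatives $a^{ikj}_{ij}$ and $a^{iKj}_{ij}$, i.e. the derivatives in the two base variables $x_i,x_j$ that the legs $a^{ikj}$ and $a^{iKj}$ share. First I would clear denominators in the first identity of (\ref{id1}) and differentiate with respect to $x_j$. Among the four legs only $a^{ikj}$ and $a^{iKj}$ contain $x_j$, so every term built from $a^{KJi}$ and $a^{kJi}$ drops out, leaving $a^{kJi}_J\,a^{ikj}_{ij}=a^{KJi}_J\,a^{iKj}_{ij}$, that is $a^{ikj}_{ij}/a^{iKj}_{ij}=a^{KJi}_J/a^{kJi}_J$. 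The right-hand side is free of $x_j$, so this ratio, which I call $\rho$, is independent of $x_j$. Doing the same with the second identity in (\ref{id1}) but differentiating in $x_i$ gives $a^{ikj}_{ij}/a^{iKj}_{ij}=a^{jIK}_I/a^{jIk}_I$, whose right-hand side is free of $x_i$. Hence $\rho=\rho(x_k,x_K)$ depends only on the two spike variables. Since the numerator and denominator here are exactly the spike-derivatives appearing as denominators in (\ref{id1}), which do not vanish identically, $\rho$ is a finite nonzero function.

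Next I would use that $a^{ikj}_{ij}$ is independent of $x_K$ while $a^{iKj}_{ij}$ is independent of $x_k$. Writing $a^{ikj}_{ij}=\rho(x_k,x_K)\,a^{iKj}_{ij}$ and requiring the left side to be free of $x_K$, logarithmic differentiation in $x_K$ yields $\rho_K/\rho=-\,\partial_{x_K}\log a^{iKj}_{ij}$. The left member depends only on $(x_k,x_K)$ and the right only on $(x_i,x_K,x_j)$, so both equal a function $g(x_K)$ of $x_K$ alone. Integrating separates the variables: $\rho$ factorizes as $b^k(x_k)/b^K(x_K)$ with $b^K=\exp(-\!\int g\,dx_K)$, and $a^{iKj}_{ij}=b^K(x_K)\,c(x_i,x_j)$ for a single function $c$ of the base variables; consequently $a^{ikj}_{ij}=\rho\,a^{iKj}_{ij}=b^k(x_k)\,c(x_i,x_j)$ with the \emph{same} $c$. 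This common factor $c$ is the origin of the shared function $a^{ij}$ in the claim, and the finiteness and non-vanishing of $\rho$ give $b^kb^K\ne0$.

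Finally I would integrate. Choosing $a^{ij}(x_i,x_j)$ with $a^{ij}_{ij}=c$, the quantity $a^{ikj}-a^{ij}b^k$ has vanishing mixed derivative $\partial_i\partial_j$, hence splits as a sum $p^{ik}(x_i,x_k)+p^{kj}(x_k,x_j)$ of a term free of $x_j$ and a term free of $x_i$ (both still allowed to depend on the spike $x_k$). This gives $a^{ikj}=a^{ij}b^k+p^{ik}+p^{kj}$, and the identical computation for $a^{iKj}$ produces $a^{iKj}=a^{ij}b^K+p^{iK}+p^{Kj}$ with the same $a^{ij}$, which is exactly the assertion.

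I expect the only genuinely delicate point to be the non-degeneracy $b^kb^K\ne0$. The relation $a^{ikj}_{ij}=\rho\,a^{iKj}_{ij}$ with $\rho$ finite and nonzero shows the two mixed derivatives vanish or not together, so one must still exclude $a^{ikj}_{ij}\equiv a^{iKj}_{ij}\equiv0$; this is precisely where the non-vanishing of the spike-derivative denominators of (\ref{id1}) and the standing irreducibility assumption on $\Phi$ are needed, since a vanishing coupling term $b^k$ would make the leg, and hence the whole equation, fail to depend genuinely on all of its arguments.
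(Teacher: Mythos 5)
Your main computation coincides with the paper's: differentiating the two identities (\ref{id1}) with respect to the base variable that is absent from the legs $a^{KJi},a^{kJi}$ (resp.\ $a^{jIK},a^{jIk}$) gives exactly the paper's relations $a^{ikj}_{ij}a^{kJi}_J=a^{iKj}_{ij}a^{KJi}_J$ and $a^{ikj}_{ij}a^{jIk}_I=a^{iKj}_{ij}a^{jIK}_I$, and your separation of variables via the ratio $\rho=a^{ikj}_{ij}/a^{iKj}_{ij}$ with logarithmic differentiation is just a variant of the paper's device of freezing $x_I,x_J,x_K$; both routes give $a^{ikj}_{ij}=\alpha^{ij}b^k$, $a^{iKj}_{ij}=\alpha^{ij}b^K$ with a common factor $\alpha^{ij}$, after which integration yields (\ref{aikj}). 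Up to that point the argument is correct and essentially identical to the paper's.

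The flaw is in your final paragraph. You propose to \emph{exclude} the degenerate case $a^{ikj}_{ij}\equiv a^{iKj}_{ij}\equiv 0$ by appealing to irreducibility, on the grounds that a vanishing coupling $b^k$ would make the leg fail to depend on all its arguments. That is false: $a^{ikj}_{ij}\equiv 0$ only means $a^{ikj}=p^{ik}+p^{kj}$, which still depends on all three of $x_i,x_k,x_j$, and this situation genuinely occurs for irreducible tripodal equations --- it is precisely class III of the classification, realized e.g.\ by (\ref{Y7}), whose legs are $y$, and by (\ref{Y3}), whose legs $\log\bigl((x+y)/(y+z)\bigr)$ have vanishing mixed derivative in $x,z$. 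So no argument can rule this case out, and any attempt to do so would fail. Fortunately, nothing needs to be ruled out: when both mixed derivatives vanish identically, the representation (\ref{aikj}) holds trivially with $a^{ij}=0$ and $b^k=b^K=1$. The condition $b^kb^K\ne0$ asserts only that the factors $b$ can be \emph{chosen} non-vanishing --- in the non-degenerate case because $b^k\not\equiv0$ and one works locally off its zero set, in the degenerate case by fiat --- not that the term $a^{ij}b^k$ is nontrivial; the dichotomy between the two situations is then handled later, in Proposition \ref{st:branch} and in the division into classes I--III. Replace your exclusion attempt by this trivial observation and the proof is complete.
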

\begin{proof}
Differentiating equations (\ref{id1}), we find:
\begin{equation}\label{aaaa}
 a^{ikj}_{ij}a^{kJi}_J=a^{iKj}_{ij}a^{KJi}_J,\qquad
 a^{ikj}_{ij}a^{jIk}_I=a^{iKj}_{ij}a^{jIK}_I.
\end{equation}
Setting the variables $x_I,x_J,x_K$ to arbitrary constants, we find:
\[
 a^{ikj}_{ij}=\a^{ij}\b^{ik}=\a^{ij}\g^{jk}\quad\Rightarrow\quad
 a^{ikj}_{ij}=\a^{ij}b^k,
\]
and an integration leads to formula (\ref{aikj}). Here the function $a^{ij}$ is defined up to addition of the terms $\mu^i+\nu^j$ and multiplication by a constant, which can be taken into account by a re-definition of $b$ and $p$. It is easy to see from (\ref{aaaa}) that this function can be chosen the same in the both formulas, and that the factors $b$ can be taken non-vanishing without restriction of generality.
\end{proof}

\begin{remark}
Taking into account all the tripodal forms (\ref{3legs}), we come to the conclusion that each function $a^{lmn}$ has a representation of the type (\ref{aikj}), so that the form of the equation is already found up to functions of two variables. However it should be understood that the notation in formulas (\ref{aikj}) refer to some fixed pair of the tripodal forms (\ref{ikKj}). If we would like to apply them to all possible index sets simultaneously then we would be forced to use more complicated enumeration for the functions $b$ and $p$. However, there will be no need to do this, since we will only perform a pairwise comparison of the tripodal forms.
\end{remark}

\begin{statement}\label{st:branch}
If $a^{ij}_{ij}\ne0$ then either the both functions $b^k$ and $b^K$ are different from constants or the both are constants, the tripodal forms (\ref{ikKj}) being either
\begin{equation}\label{ikKj-I}
 a^{ij}b_k+\frac{c^{jI}}{b_K}+\frac{c^{Ji}}{b_K}=0\quad\Leftrightarrow\quad
 a^{ij}b_K+\frac{c^{jI}}{b_k}+\frac{c^{Ji}}{b_k}=0
\end{equation}
or
\begin{equation}\label{ikKj-II}
\begin{gathered}
 (a^{ij}+p^{ik}+p^{kj})+(c^{jI}+p^{Kj})+(c^{Ji}+p^{iK})=0\\
 \Leftrightarrow\qquad
 (a^{ij}+p^{iK}+p^{Kj})+(c^{jI}+p^{kj})+(c^{Ji}+p^{ik})=0,
\end{gathered}
\end{equation}
respectively.
\end{statement}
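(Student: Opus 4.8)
The plan is to feed the representation \eqref{aikj} into the two relations \eqref{aaaa} and then cash in the hypothesis $a^{ij}_{ij}\neq 0$. Since \eqref{aikj} gives $a^{ikj}_{ij}=a^{ij}_{ij}b^k$ and $a^{iKj}_{ij}=a^{ij}_{ij}b^K$, dividing both identities in \eqref{aaaa} by the nonzero factor $a^{ij}_{ij}$ produces $b^k a^{kJi}_J=b^K a^{KJi}_J$ and $b^k a^{jIk}_I=b^K a^{jIK}_I$. In each of these the left-hand side is free of $x_K$ while the right-hand side is free of $x_k$, so both sides depend only on $(x_J,x_i)$, resp.\ $(x_j,x_I)$; integrating once in the spike variable I obtain four representations $a^{kJi}=C/b^k+D$, $a^{KJi}=C/b^K+E$, $a^{jIk}=M/b^k+N$, $a^{jIK}=M/b^K+O$, where $C=C(x_J,x_i)$ and $M=M(x_j,x_I)$ carry all the $x_J$- resp.\ $x_I$-dependence (and are nonconstant in the spike, since $a^{KJi}_J\not\equiv 0$ by the local solvability invoked in Proposition \ref{st:id}), while $D,E,N,O$ are integration terms independent of the spike.

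Substituting these together with \eqref{aikj} into the two tripodal forms \eqref{ikKj}, I can write the equation as $a^{ij}b^k+S/b^K+R_1=0$ and equivalently as $a^{ij}b^K+S/b^k+R_2=0$, where $S:=M+C$ is the only place the variables $x_I,x_J$ enter and $R_1,R_2$ depend solely on $x_i,x_j,x_k,x_K$. Because both forms \eqref{ikKj} express the same equation $\Phi=0$, they cut out the same hypersurface; solving each linearly for $S$ and matching the level sets of $S$ (which, since $S=M(x_j,x_I)+C(x_J,x_i)$ separates and is a submersion in $(x_I,x_J)$, have distinct fibres for distinct values) forces the two solved expressions to agree. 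The leading terms $-a^{ij}b^kb^K$ cancel and I am left with the key coupling $b^K R_1=b^k R_2$.

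Next I insert $R_1=p^{ik}+p^{kj}+O+E$ and $R_2=p^{iK}+p^{Kj}+N+D$ and differentiate $b^KR_1=b^kR_2$ so as to annihilate every two-variable piece: applying $\partial_{x_k}\partial_{x_j}\partial_{x_K}$ strips everything except $(b^K)'p^{kj}_{jk}=(b^k)'p^{Kj}_{jK}$, and the symmetric combination $\partial_{x_k}\partial_{x_i}\partial_{x_K}$ gives $(b^K)'p^{ik}_{ik}=(b^k)'p^{iK}_{iK}$. From the first relation the dichotomy follows: if $b^K$ is constant then $(b^k)'p^{Kj}_{jK}\equiv 0$, so either $b^k$ is constant too, or $p^{Kj}_{jK}\equiv 0$ and $p^{Kj}$ is additive; running the same argument with $k$ and $K$ interchanged shows that $b^k$ and $b^K$ are simultaneously constant or simultaneously nonconstant.

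It remains to read off the normal forms in the two branches. When $b^k,b^K$ are nonconstant, the two coupling relations give $p^{kj}_{jk}=\omega(x_j)(b^k)'$ and $p^{ik}_{ik}=\tilde\omega(x_i)(b^k)'$; integrating and using the additive and multiplicative freedom in $a^{ij},b,p$ recorded after \eqref{aikj}, I absorb $p^{ik},p^{kj}$ into $a^{ij}$ (its mixed derivative, hence the hypothesis, is untouched) and the surviving remainders $O,E,N,D$ into $M,C$, while the pieces depending on the opposite vertex cancel in the tripodal sum; this collapses the forms to the purely multiplicative \eqref{ikKj-I} with $c^{jI}:=M$, $c^{Ji}:=C$. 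When $b^k,b^K$ are constant I normalise them to $1$, so that $b^KR_1=b^kR_2$ becomes $R_1=R_2$; separating this identity by the variables carrying $x_k$ versus $x_K$, and using a $\partial_{x_i}\partial_{x_j}$ argument to force the separation constant to split additively, identifies up to single-variable gauge $N=p^{kj}$, $D=p^{ik}$, $O=p^{Kj}$, $E=p^{iK}$, which is precisely \eqref{ikKj-II}. The main obstacle is exactly this final gauge bookkeeping carried out in tandem with the degenerate sub-cases of the third paragraph (where a mixed derivative such as $p^{Kj}_{jK}$ vanishes identically): there the additive freedom must be spent to absorb the ambiguous terms and to confirm that no third configuration escapes the two listed forms.
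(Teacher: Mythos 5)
Your overall route is the paper's own: you divide the identities \eqref{aaaa} by $a^{ij}_{ij}$ (using \eqref{aikj}), separate variables and integrate to obtain $a^{kJi}=c^{Ji}/b^k+d^{ki}$ and its three companions, substitute into the two forms \eqref{ikKj}, and extract the matching condition $b^KR_1=b^kR_2$, i.e.
\[
 b^K\bigl(p^{ik}+p^{kj}+d^{jK}+d^{Ki}\bigr)=b^k\bigl(p^{iK}+p^{Kj}+d^{jk}+d^{ki}\bigr),
\]
which is exactly the pivot of the paper's proof. The normal-form reductions in your last paragraph (absorbing $b^k(p^i+p^j)$ into $a^{ij}$ in the nonconstant branch, separating $R_1=R_2$ by the variables $x_k$ versus $x_K$ in the constant branch and noting that the separation function of $(x_i,x_j)$ has vanishing mixed derivative) also agree with what the paper does to land on \eqref{ikKj-I} and \eqref{ikKj-II}.

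The gap is in the dichotomy itself. You differentiate the matching condition three times ($\partial_{x_k}\partial_{x_j}\partial_{x_K}$ and $\partial_{x_k}\partial_{x_i}\partial_{x_K}$) to get $(b^K)'p^{kj}_{kj}=(b^k)'p^{Kj}_{Kj}$ and $(b^K)'p^{ik}_{ik}=(b^k)'p^{iK}_{iK}$, and claim these force $b^k$, $b^K$ to be simultaneously constant or nonconstant. They do not: with $b^K$ constant and $b^k$ nonconstant both relations are satisfied by taking $p^{Kj}$ and $p^{iK}$ additive, and ``running the same argument with $k$ and $K$ interchanged'' only yields an implication whose hypothesis is false in this mixed configuration, so it adds nothing. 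You have over-differentiated and lost the relevant information, which sits in the second-order consequence $b^K_K(p^{ik}_k+p^{kj}_k)=b^k_k(p^{iK}_K+p^{Kj}_K)$ combined with irreducibility — an assumption you never invoke at this step. The paper's argument is: if $b^k_k\equiv0$, then $x_k$ enters the first tripodal form only through $p^{ik}+p^{kj}$, so irreducibility forces $p^{ik}_k+p^{kj}_k\not\equiv0$, whence $b^K_K\equiv0$. Your closing remark that the ``degenerate sub-cases \dots\ where a mixed derivative such as $p^{Kj}_{jK}$ vanishes identically'' are to be settled by gauge bookkeeping defers precisely this point; without the irreducibility step the mixed case is not excluded and the stated dichotomy remains unproved.
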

\begin{proof}
We still did not exhaust the content of identities (\ref{aaaa}). If $a^{ij}_{ij}\ne0$ then they are reduced to
\begin{equation}\label{baba}
 b^ka^{kJi}_J-b^Ka^{KJi}_J=0,\qquad b^ka^{jIk}_I-b^Ka^{jIK}_I=0,
\end{equation}
which give, upon integration,
\begin{alignat}{3}
\label{akJijIk}
 a^{kJi}&=\frac{c^{Ji}}{b^k}+d^{ki},&\qquad a^{jIk}&=\frac{c^{jI}}{b^k}+d^{jk},\\
\label{aKJijIK}
 a^{KJi}&=\frac{c^{Ji}}{b^K}+d^{Ki},&\qquad a^{jIK}&=\frac{c^{jI}}{b^K}+d^{jK}.
\end{alignat}
Thus, the tripodal forms (\ref{ikKj}) can be put as
\begin{align*}
 (a^{ij}b^k+p^{ik}+p^{kj})
 +\Bigl(\frac{c^{jI}}{b^K}+d^{jK}\Bigr)
 +\Bigl(\frac{c^{Ji}}{b^K}+d^{Ki}\Bigr)=0,\\
 (a^{ij}b^K+p^{iK}+p^{Kj})
 +\Bigl(\frac{c^{jI}}{b^k}+d^{jk}\Bigr)
 +\Bigl(\frac{c^{Ji}}{b^k}+d^{ki}\Bigr)=0.
\end{align*}
It is easy to see that they are equivalent if and only if
\[
 b^K(p^{ik}+p^{kj}+d^{jK}+d^{Ki})=b^k(p^{iK}+p^{Kj}+d^{jk}+d^{ki}),
\]
from which it follows $b^K_K(p^{ik}_k+p^{kj}_k)=b^k_k(p^{iK}_K+p^{Kj}_K)$. Now if
$b^k_k=0$, then $p^{ik}_k+p^{kj}_k\ne0$, since otherwise the first tripodal form would not contain $x_k$, but then also $b^K_K=0$, and, setting $b^k=b^K=1$, we find formulas (\ref{ikKj-II}).

If, on the contrary, $b^k_k\ne0$, then
\[
 p^{ik}+p^{kj}=b^k(p^i+p^j)+q^i+q^j,\qquad
 p^{iK}+p^{Kj}=b^K(p^i+p^j)+\tilde q^i+\tilde q^j.
\]
It is not difficult to see that these terms can be assumed to vanish (this can be achieved by a re-definition of other terms). But then
\[
 d^{jK}+d^{Ki}=\frac{d^j+d^i}{b^K},\qquad d^{jk}+d^{ki}=\frac{d^j+d^i}{b^k},
\]
and these terms can be again absorbed by the functions $c^{jI}$, $c^{Ji}$. As a result, the tripodal forms (\ref{aikj}) can be put as (\ref{ikKj-I}).
\end{proof}

The further analysis does not require for complicated computations, but rather for a detailed bookkeeping of different cases. Taking into account the possibilities listed in Proposition \ref{st:branch}, one can take care of this by separating the following classes of equations:\medskip

\begin{tabular}{ll}
I. & $a^{ikj}_{ikj}\ne0$ for at least one triple $i,k,j$;\\[0.2em]
II. & $a^{ikj}_{ikj}=0$ for all $i,k,j$, but $a^{ikj}_{ij}\ne0$ for at least one triple;\\[0.2em]
III. &$a^{ikj}_{ij}=0$ for all $i,k,j$.
\end{tabular}\medskip

\noindent Clearly, this exhausts all logical possibilities. A complete description of these cases will be sufficient for a proof of Theorem \ref{th:3leg_types}.

\subsection{Class I}\label{ss:caseI}

Setting, without restriction of generality, $b^k=x_k$, $b^K=x_K$, we put the tripodal forms  (\ref{ikKj-I}) as
\begin{equation}\label{Igen}
 a^{ij}x_k+\frac{c^{jI}}{x_K}+\frac{c^{Ji}}{x_K}=0,
 \quad\Leftrightarrow\quad
 a^{ij}x_K+\frac{c^{jI}}{x_k}+\frac{c^{Ji}}{x_k}=0.
\end{equation}
Now we have to compare them with the other tripodal forms. Since functions $a^{jIk}$, $a^{kJi}$ have to be of the form (\ref{aikj}), as well, we immediately find that
\[
  c^{jI}=c^jb^I+d^I,\quad c^{Ji}=c^ib^J+d^J.
\]
Assume first that $c^{jI}_J\ne0$ or $c^{Ji}_i\ne0$. For definiteness, let $c^{jI}_J\ne0$. Compare (\ref{Igen}) with the tripodal form $(I,j,k)$:
$a^{IKj}+a^{jik}+a^{kJI}=0$. Apply formulas (\ref{akJijIk}), from which there follows:
\[
 a^{jik}=\frac{c^j}{x_k}b^i+q^{ji}+q^{jk},
\]
while the function $a^{iKj}$ should admit the both representations
\[
 a^{iKj}=a^{ij}x_K=\frac{c^{Kj}}{b^i}+d^{ij}\qquad\Rightarrow\qquad
 a^{iKj}=\frac{x_K}{b^ib^j}.
\]
Since, by assumption, $a^{ij}_{ij}\ne0$, there follows $b^i_i\ne0$, but then there holds also $b^I_I\ne0$, and the equation is reduced to the form
\[
 \frac{c^{Kj}}{b^i}+a^{jk}b^I+\frac{c^{kJ}}{b^i}=0.
\]
Subtracting this from (\ref{Igen}), we come to
\[
 \frac{c^{jI}}{x_k}+\frac{c^{Ji}}{x_k}=a^{jk}b^I+\frac{c^{kJ}}{b^i},
\]
which yields that the equation is reduced by the point transformations (\ref{tix}) to the form (\ref{Y1}).

Now let $c^{jI}_j=c^{Ji}_i=0$, then equation (\ref{Igen}) takes the form
\[
 \log a^{ij}+\log x_K-\log(c_I+c_J)+\log x_k=0.
\]
This case is covered (after an obvious change of enumeration) by the following proposition which refers to somewhat more general equations. This is done in order to avoid unnecessary repetitions for the cases II and III, which also lead to equations of this type. In this proposition it is not assumed that the equation falls into one of the three classes separated above. Actually, the analysis of results performed in Section \ref{ss:3leg_types} shows that equation (\ref{Y6}) belongs to the class I, equation (\ref{Y7}) belongs to the class III, and equation (\ref{Y4}) belongs to the class I if $\la\ne1$ and to the class III if $\la=1$.

\begin{statement}\label{st:spec_1}
Let one of the tripodal forms of equation (\ref{xxxxxx}) be as follows:
\begin{equation}\label{spec_1}
 p^{IK}+p^{Ki}+p^{ik}+q^j+q^J=0.
\end{equation}
Then it is reduced to one of the equations (\ref{Y4}), (\ref{Y6}), or (\ref{Y7}).
\end{statement}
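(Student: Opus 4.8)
Write $E:=p^{IK}+p^{Ki}+p^{ik}+q^j+q^J$, so that the equation \eqref{spec_1} reads $E=0$. The only non-vanishing second mixed derivatives of $E$ are $E_{IK}=p^{IK}_{IK}$, $E_{Ki}=p^{Ki}_{Ki}$, $E_{ik}=p^{ik}_{ik}$: the coupling of $E$ is a \emph{chain} $I\!-\!K\!-\!i\!-\!k$, while $x_j,x_J$ enter only additively through $q^j,q^J$. The plan is to convert the tripodal property into differential constraints on this structure. Since \eqref{xxxxxx} is tripodal, for every face the corresponding leg sum has the same zero set as $E$; solving one form for a variable and substituting it into another, exactly as in Proposition~\ref{st:id}, shows that each of the eight leg sums equals $\Lambda_f(E)$ for a local diffeomorphism $\Lambda_f$ with $\Lambda_f(0)=0$. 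If $u,v$ are two vertices that do \emph{not} lie in a common leg of the face $f$, then $\partial_u\partial_v$ of that leg sum vanishes, whence
\[
 \frac{E_{uv}}{E_uE_v}=-\frac{\Lambda_f''(E)}{\Lambda_f'(E)}=\mu_f(E).
\]

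Next I would feed the chain structure into this relation. For the pair $(I,K)$ I choose a face on which $I$ and $K$ fall into distinct legs (such a face exists for every edge by the octahedral symmetry of the eight forms \eqref{3legs}; e.g. for $(I,K)$ the face with legs $\{i,K,j\},\{j,I,k\},\{k,J,i\}$ works). The left hand side $E_{IK}/(E_IE_K)$ is then independent of $x_j$, whereas $\partial_{x_j}E=(q^j)'\neq0$ by irreducibility; hence $\mu_f'\equiv0$ and $\mu_f$ is a \emph{constant}. Repeating this for the three coupled pairs yields
\[
 E_{IK}=c_1E_IE_K,\qquad E_{Ki}=c_2E_KE_i,\qquad E_{ik}=c_3E_iE_k
\]
with constants $c_1,c_2,c_3$. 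Inserting the explicit $E$ and comparing the dependence on $x_i$ (resp.\ on $x_K$) of the two sides shows that $c_1\neq0$ or $c_3\neq0$ forces $p^{Ki}_{Ki}=0$, i.e.\ $c_2=0$; thus only the end couplings may be simultaneously active, and the admissible patterns of nonzero constants are few.

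It then remains to integrate in each case and normalize by the point transformations \eqref{tix}. Each equation $w_{uv}=c\,w_uw_v$ linearizes under $w\mapsto e^{-cw}$, whose mixed derivative vanishes, so that $e^{-cw}=\alpha(x_u)+\beta(x_v)$; a coupling with $c\neq0$ therefore integrates to $-\tfrac1c\log\bigl(\alpha(x_u)+\beta(x_v)\bigr)$, while a coupling with $c=0$ separates. Absorbing $\alpha,\beta$, the residual single-variable terms, and the powers $1/c$ by transformations \eqref{tix} brings $E=0$ to a normal form: all three constants zero gives the pure sum \eqref{Y7}; exactly one surviving coupling gives \eqref{Y6}; and the two end couplings $(I,K)$, $(i,k)$ surviving (necessarily with $c_2=0$) gives \eqref{Y4}, the exponent $\g$ being determined by the ratio of the two surviving constants.

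The hard part will be the first step: justifying rigorously the passage from \emph{local equivalence} of the eight tripodal forms to the functional identity with a single $\Lambda_f$, and checking that for every edge there is indeed a face separating its endpoints into different legs. Once the three scalar relations $E_{uv}=c\,E_uE_v$ are in hand the analysis is essentially routine, but some care is still needed in the bookkeeping that reduces the freedom in $\alpha,\beta$ and the single-variable remainders — together with the residual freedom in \eqref{tix} — to exactly the three canonical forms \eqref{Y4}, \eqref{Y6}, \eqref{Y7}.
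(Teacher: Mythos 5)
There is a genuine gap at the very first step, and it is the step on which everything else rests. From the local equivalence of two tripodal forms of (\ref{xxxxxx}) you conclude that each leg sum $S_f$ satisfies $S_f=\Lambda_f(E)$ for a function $\Lambda_f$ of a single variable. This does not follow and is false in general: local equivalence of two irreducible equations means only that their zero hypersurfaces coincide, which (both gradients being nonzero there) gives $S_f=h\cdot E$ with $h$ a non-vanishing function of \emph{all six} variables, not a function of $E$ alone. Consequently the identity $E_{uv}/(E_uE_v)=-\Lambda_f''(E)/\Lambda_f'(E)$ is unavailable; from $S_{f,uv}=0$ and $S_f=hE$ one only gets, on the solution set, $E_{uv}/(E_uE_v)=-h^{-1}\bigl(h_u/E_u+h_v/E_v\bigr)$, which is not of the form $\mu_f(E)$, so the subsequent argument (``the left-hand side is independent of $x_j$, hence $\mu_f$ is constant'') collapses. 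The three scalar relations $E_{IK}=c_1E_IE_K$, $E_{Ki}=c_2E_KE_i$, $E_{ik}=c_3E_iE_k$ do hold for the normal forms (\ref{Y4}), (\ref{Y6}), (\ref{Y7}), but they are precisely what has to be \emph{proved} from the tripodal property, and your derivation does not supply that. This is why the paper never compares two defining functions directly: it solves each of two tripodal forms for the same variable and equates the resulting implicit functions, which produces the genuine invariant identities (\ref{id1}), (\ref{id2}) of Proposition \ref{st:id} --- identities in five variables that hold identically, not merely on the solution set --- and the proof of Proposition \ref{st:spec_1} then runs entirely on (\ref{id1}) (showing first that $p^{Ki}_{Ki}\neq0$ forces $a^{kJi}$ to be affine in $x_J$, whence elimination of $x_J$ kills $p^{IK}_{IK}$ and $p^{ik}_{ik}$, and then, with $p^{Ki}=0$, that $p^{IK}=P(x_I+x_K)$ with $(1/P')''=0$).

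The remainder of your plan is structurally sound and in fact parallels the paper's: the chain structure $I$--$K$--$i$--$k$, the mutual exclusion of the middle coupling and the end couplings, the integration of $w_{uv}=c\,w_uw_v$ into $-\tfrac1c\log(\alpha(x_u)+\beta(x_v))$ or a separated sum, and the three-way case count ($0$, $1$, or $2$ logarithmic couplings giving (\ref{Y7}), (\ref{Y6}), (\ref{Y4}) respectively, with $\gamma$ read off from the ratio of constants) all match what the paper obtains. Your combinatorial observation that every edge of the octahedron is separated by some face's tripodal form is also correct and checkable from (\ref{3legs}). But to make the argument work you must replace the $\Lambda_f$ shortcut by the implicit-function comparison of Proposition \ref{st:id}; as it stands, the central differential constraints are asserted rather than derived.
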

\begin{proof}
Setting, without restriction of generality, $q^j=x_j$, $q^J=x_J$, we put the equation as \begin{equation}\label{IVgen}
 p^{IK}+p^{Ki}+p^{ik}+x_j+x_J=0.
\end{equation}
Let us show that if $p^{Ki}_{Ki}\ne0$, then $p^{IK}_{IK}=0$ and $p^{ik}_{ik}=0$.

In order to prove the first of these equations, compare two tripodal forms (\ref{ikKj}),
\[
 (i,j,K):~~a^{ikj}+a^{jIK}+a^{KJi}=0,\qquad (i,j,k):~~a^{iKj}+a^{jIk}+a^{kJi}=0,
\]
assuming that (\ref{IVgen}) is the first one of them, so that $a^{ikj}=p^{ik}+x_j$,
$a^{jIK}=p^{IK}$ and $a^{KJi}=p^{Ki}+x_J$. From the first identity (\ref{id1}) we find:
\begin{equation}\label{IV-id}
 p^{ik}_i+p^{Ki}_i=\frac{a^{iKj}_i+a^{kJi}_i}{a^{kJi}_J}\quad\Rightarrow\quad
 p^{Ki}_{Ki}=\frac{a^{iKj}_{iK}}{a^{kJi}_J}.
\end{equation}
Suppose that $p^{Ki}_{Ki}\ne0$. Then, differentiating the latter equation with respect to
$x_k$, $x_J$, we find $a^{kJi}_{kJ}=a^{kJi}_{JJ}$, but then the first equations implies also $a^{kJi}_{Ji}=0$. Therefore, $a^{kJi}=\mu x_J+r^{ki}$,  so the tripodal form $(i,j,k)$ becomes
\[
 a^{iKj}+a^{jIk}+\mu x_J+r^{ki}=0,\quad \mu\ne0.
\]
Eliminating $x_J$ with the help of (\ref{IVgen}), we obtain the identity
\[
 a^{iKj}+a^{jIk}-\mu(p^{IK}+p^{Ki}+p^{ik}+x_j)+r^{ki}=0\quad\Rightarrow\quad
 p^{IK}_{IK}=0.
\]
The second equation is shown analogously, by comparing (\ref{IVgen}) with the tripodal form $(I,j,K)$ (it is enough to use the symmetry $i\leftrightarrow K$, $k\leftrightarrow I$).

Thus, we have shown that (\ref{IVgen}) takes one of the following two forms:
\[
 p^I+p^{Ki}+p^k+x_j+x_J=0\qquad \text{or}\qquad p^{IK}+p^{ik}+x_j+x_J=0.
\]
But it is clear that the first case is included into the second one, upon the re-labeling $i\to k\to I\to K\to i$. Therefore we can assume that in equation (\ref{IVgen}) there holds $p^{Ki}=0$. Note that this makes the equation invariant under the re-labeling  $i\leftrightarrow k$, $I\leftrightarrow K$, which does not affect the tripodal form $(i,j,k)$. Now, returning to identity (\ref{IV-id}), we find:
\[
 p^{ik}_i=\frac{a^{iKj}_i+a^{kJi}_i}{a^{kJi}_J}\quad\Rightarrow\quad
 a^{iKj}_{iK}=a^{iKj}_{ij}=0
\]
and, using the symmetry just pointed out,
\[
 p^{ik}_k=\frac{a^{jIk}_k+a^{kJi}_k}{a^{kJi}_J}\quad\Rightarrow\quad
 a^{jIk}_{Ik}=a^{jIk}_{jk}=0.
\]
As a result, we put the tripodal forms $(i,j,K)$ and $(i,j,k)$ as
\[
 p^{IK}+p^{ik}+x_j+x_J=0,\qquad a^{Kj}+a^{jI}+a^{kJi}=0,
\]
and, solving for $x_J$, we arrive at the identity
\[
 p^{IK}+x_j=\varphi(x_k,x_i,a^{Kj}+a^{jI}) \quad\Rightarrow\quad
 \frac{p^{IK}_K}{p^{IK}_I}=\frac{a^{Kj}_K}{a^{jI}_I}=\frac{\a^K}{\a^I},
\]
where the last equality is obtained by setting $x_j=\const$. Upon suitable point transformations of $x_I$ and $x_k$, we can put $p^{IK}$ as $p^{IK}=P(x_I+x_K)$, while $a^{Kj}_K=a^{jI}_I=r^j$. There follows easily
\[
 \frac1{P'}=\frac{a^{Kj}_j+a^{jI}_j}{r^j},
\]
and, differentiating twice with respect to $x_I$ or to $x_K$, we find $(1/P')''=0$, or
\[
 P(y)=\a\log(y+\b)+\g \qquad\text{or}\qquad P(y)=\a y+\b.
\]
Of course, for $p^{ik}$ analogous formulas are found (by comparison with the tripodal form $(I,j,K)$ and upon suitable point transformations of $x_i,x_k$), so $p^{ik}=Q(x_i+x_k)$ with
\[
 Q(y)=\la\log(y+\mu)+\nu \qquad\text{or}\qquad Q(y)=\la y+\mu.
\]
Different combinations lead, after further obvious point transformations, to the cases (\ref{Y4}), (\ref{Y6}), (\ref{Y7}).
\end{proof}

\subsection{Class II}\label{ss:caseII}

Now let $a^{ikj}_{ikj}=0$ for all $i,k,j$, but $a^{ikj}_{ij}\ne0$ for at least one triple. For this triple we find two equivalent tripodal forms
(\ref{ikKj-II})
\[
\begin{gathered}
 (a^{ij}+p^{ik}+p^{kj})+(c^{jI}+p^{Kj})+(c^{Ji}+p^{iK})=0\\
 \Leftrightarrow\qquad
 (a^{ij}+p^{iK}+p^{Kj})+(c^{jI}+p^{kj})+(c^{Ji}+p^{ik})=0
\end{gathered}
\]
which have to be compared with other ones. Here it is convenient to distinguish subcases depending on which mixed derivatives of the pairs of functions $p^{ik},p^{kj}$ and $p^{iK},p^{Kj}$ vanish.

First assume that for at least one pair the mixed derivatives of the both functions do not vanish. For the sake of definiteness, let $p^{ik}_{ik}\ne0$, $p^{kj}_{kj}\ne0$. Note that formula (\ref{baba}) with constant $b^k,b^K$ yields that
\[
 a^{ikj}_{ij}\ne0\quad\Rightarrow\quad
 a^{jIK}_{IK}=a^{KJi}_{KJ}=a^{jIk}_{Ik}=a^{kJi}_{kJ}=0.
\]
Applying this implication to the second of the tripodal forms, we find:
 \begin{align*}
  & a^{jIk}_{jIk}=0,~~a^{jIk}_{jk}=p^{kj}_{kj}\ne0\quad\Rightarrow\quad
    a^{kJi}_{Ji}=a^{iKj}_{iK}=0\quad\Rightarrow\quad c^{Ji}_{Ji}=p^{iK}_{iK}=0;\\
  & a^{kJi}_{kJi}=0,~~a^{kJi}_{ki}=p^{ik}_{ik}\ne0\quad\Rightarrow\quad
    a^{jIk}_{jI}=a^{iKj}_{Kj}=0\quad\Rightarrow\quad c^{jI}_{jI}=p^{Kj}_{Kj}=0,
 \end{align*}
and as a consequence the equation takes the form
\[
 a^{ij}+p^{ik}+p^{kj}+q^I+q^J+q^K=0,
\]
and after the change $q^n\to x_n$ we come to equation
\begin{equation}\label{spec_2}
 p^{ij}+p^{jk}+p^{ki}+x_I+x_J+x_K=0.
\end{equation}

If the previous subcase does not take place, then we assume first that one of the pairs
$p^{ik},p^{kj}$ or $p^{iK},p^{Kj}$ contains exactly one function with vanishing mixed derivatives. For definiteness, let $p^{ik}_{ik}\ne0$ and $p^{kj}_{kj}=0$. Then, as before, $c^{jI}_{jI}=p^{Kj}_{Kj}=0$, and we come to an equation of the form
\[
 a^{ij}+p^{ik}+q^I+c^{Ji}+p^{iK}=0,
\]
or, upon a point change of variables, to equation
\begin{equation}\label{spec_3}
 p^{ij}+p^{ik}+p^{iJ}+p^{iK}+x_I=0.
\end{equation}

Finally, if all mixed derivatives $p^{ik}_{ik},p^{kj}_{kj},p^{iK}_{iK},p^{Kj}_{Kj}$ vanish, then our equation belongs to the special type already dealt with in Proposition \ref{st:spec_1}:
\[
 a^{ij}+q^k+c^{jI}+q^K+c^{Ji}=0;
\]
we have seen that such equations cannot belong to the case II.

As a result, case II is reduced to equations of two special types. They are dealt with in the following two statements which result in equation (\ref{Y5}) as the only possible one.

\begin{statement}\label{st:spec_2}
There exist no equations of the class II with one of the tripodal forms as in (\ref{spec_2}).
\end{statement}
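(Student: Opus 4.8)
The equation (\ref{spec_2}) reads $p^{ij}+p^{jk}+p^{ki}+x_I+x_J+x_K=0$, and by the reduction preceding the statement (together with the class~II hypothesis $a^{ikj}_{ij}\ne0$) all three mixed derivatives $p^{ij}_{ij}$, $p^{jk}_{jk}$, $p^{ki}_{ki}$ are nonvanishing. The plan is to show that such an equation cannot admit the tripodal form whose head is the triangle $(I,J,K)$ antipodal to the base triangle $(i,j,k)$; since a tripodal equation must admit all eight forms (\ref{3legs}), this contradiction proves the assertion. The combinatorial reason to single out this face is that its three legs $a^{IkJ},a^{JiK},a^{KjI}$ carry exactly one of the base indices $k,i,j$ as a spike and two of the indices $I,J,K$ as a base, so none of the genuinely two-variable functions $p^{ij},p^{jk},p^{ki}$ can be absorbed into a single leg; the required identity can therefore only hold by virtue of the solvability of (\ref{spec_2}).

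First I would solve (\ref{spec_2}) for $x_I$, writing $x_I=\Xi:=-(p^{ij}+p^{jk}+p^{ki}+x_J+x_K)$, and substitute into the putative identity $a^{IkJ}(\Xi,x_k,x_J)+a^{JiK}(x_J,x_i,x_K)+a^{KjI}(x_K,x_j,\Xi)=0$. Here the middle leg $a^{JiK}$ is free of $\Xi$, which is the key simplification. Setting $M:=a^{IkJ}_I+a^{KjI}_I$ for the combination of derivatives in the two slots occupied by $\Xi$, differentiation of the identity in $x_i$, $x_j$, $x_k$ gives $a^{JiK}_i=-M\,\Xi_{x_i}$ and the two companion relations for the other legs. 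Since $a^{JiK}$ does not depend on $x_j$, differentiating $a^{JiK}_i=-M\,\Xi_{x_i}$ once more in $x_j$ and using $\Xi_{x_ix_j}=-p^{ij}_{ij}$ yields $M_j/M=p^{ij}_{ij}/\Xi_{x_i}$, and similarly $M_k/M=p^{ki}_{ki}/\Xi_{x_i}$; the class~II inequalities make both right-hand sides nonzero (and $\Xi_{x_i}\ne0$ by solvability for $x_i$).

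Next I would exploit that each leg depends only on its own three variables. Writing $N:=\Xi_{x_i}M$, the relation $a^{JiK}_i=-N$ shows $N$ is a function of $(x_J,x_i,x_K)$ alone; the companion relations then read $a^{IkJ}_k=-(\Xi_{x_k}/\Xi_{x_i})N$ with the left side a function of $(x_I,x_k,x_J)$, and $a^{KjI}_j=-(\Xi_{x_j}/\Xi_{x_i})N$ with the left side a function of $(x_K,x_j,x_I)$. Comparing, in each relation, the groups of variables on which the two sides may legitimately depend (a separation-of-variables argument repeated in the directions $x_J,x_K$ and then back along the spike direction $x_i$, with all denominators kept nonvanishing by irreducibility) pins $N$ to a constrained exponential form, forces the coefficient functions to be constant, and finally returns the contradiction $p^{ij}_{ij}=0$.

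I expect the last paragraph to be the main obstacle: the functional system is strongly over-determined, and the difficulty is the careful bookkeeping of which variable each partial genuinely depends on, so that the repeated ``the two sides depend on disjoint groups, hence only on the overlap'' steps are applied without silently discarding the nonvanishing of $p^{ij}_{ij}$. A more structural alternative, closer to Propositions \ref{st:aij}--\ref{st:branch}, would be to propagate the representation (\ref{aikj}) along a chain of edge-adjacent faces running from the base triangle $(i,j,k)$ to the opposite face $(I,J,K)$ via repeated use of the identities (\ref{id1}); this avoids the global substitution but trades it for organizing several pairwise comparisons, and I would fall back on it only if the direct computation proved unwieldy.
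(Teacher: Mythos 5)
You have chosen the right face to play off against (\ref{spec_2}) — the antipodal head $(I,J,K)$, exactly as the paper does — but the proof is not actually carried out. The entire analytic content is concentrated in your last paragraph, which only asserts that a separation-of-variables argument ``pins $N$ to a constrained exponential form, forces the coefficient functions to be constant, and finally returns the contradiction $p^{ij}_{ij}=0$''. Nothing there is derived, and you yourself flag it as the main obstacle. More seriously, the sketch never invokes the defining \emph{equality} of class II, namely $a^{lmn}_{lmn}=0$ for all triples; you only use the inequalities $p^{ab}_{ab}\ne0$ and irreducibility. That cannot suffice: class I equations such as (\ref{Y6}) do possess tripodal forms of the shape (\ref{spec_2}) (with some legs degenerating to sums of one-variable functions), so the statement is simply false if one drops the third-order condition. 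In the paper's argument this condition enters exactly once but decisively: after solving the $(I,J,K)$ form for $x_k$ (the spike of $a^{IkJ}$, which gives the clean relation $\log a^{JiK}_i-\log a^{KjI}_j=h(x_i,x_j,x_I+x_J+x_K)$, rather than your substitution $x_I=\Xi$, which entangles $\Xi$ in two legs at once), one obtains functions $\la^I,\mu^J,\nu^K$ with $\la^I+\mu^J+\nu^K=0$, and the vanishing of $a^{JiK}_{JiK}$ is what forces $\mu^J\nu^K=\nu^K\la^I=\la^I\mu^J=0$, hence $\la^I=\mu^J=\nu^K=0$. Only then does the $(I,J,K)$ form collapse to $a^{IJ}+a^{JK}+a^{KI}+x_i+x_j+x_k=0$, and a second comparison pins it to $\la(x_I+x_J+x_K)^2+\mu(x_I+x_J+x_K)+\nu+x_i+x_j+x_k=0$, which is either of type (\ref{Y7}) (not class II) or incompatible with (\ref{spec_2}). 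Note that the contradiction is \emph{not} $p^{ij}_{ij}=0$, so even the target of your computation is misidentified.

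A second gap is your opening claim that all three mixed derivatives $p^{ij}_{ij}$, $p^{jk}_{jk}$, $p^{ki}_{ki}$ are nonvanishing. That holds for the first instance of (\ref{spec_2}) derived in the class II analysis, but the Proposition is later applied (in the proof of Proposition \ref{st:spec_3}) to degenerate instances such as $p^j+p^k+p^{iJ}+p^{iK}+x_I=0$, where some legs are sums of one-variable functions. An argument whose nonvanishing hypotheses include all three mixed derivatives (as yours does, when you divide by them to conclude $M_j/M\ne0$) would not cover those cases; the paper's proof deliberately avoids this assumption. Your fallback suggestion of propagating the representation (\ref{aikj}) via the identities (\ref{id1}) is closer in spirit to what actually works, but as written neither route is completed.
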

\begin{proof}
We compare (\ref{spec_2}) with the tripodal form
\[
 (I,J,K):\quad a^{IkJ}+a^{JiK}+a^{KjI}=0.
\]
Solving for $x_k$, we obtain an identity
\[
 f(x_I,x_J,a^{JiK}+a^{KjI})=g(x_i,x_j,x_I+x_J+x_K),
\]
whence
\[
 \log a^{JiK}_i-\log a^{KjI}_j=\log g_i-\log g_j=h(x_i,x_j,x_I+x_J+x_K).
\]
Differentiating with respect to $x_I$, $x_J$, and $x_K$, and using the symmetry of the equation under permutations of $i,j,k$, we find:
\begin{alignat*}{4}
  &-\frac{a^{KjI}_{jI}}{a^{KjI}_j}
  &~~=~~&\frac{a^{JiK}_{Ji}}{a^{JiK}_i}
  &~~=~~&\frac{a^{JiK}_{iK}}{a^{JiK}_i}-\frac{a^{KjI}_{Kj}}{a^{KjI}_j}&~~=~~&\nu^K,\\
  &-\frac{a^{IkJ}_{kJ}}{a^{IkJ}_k}
  &~~=~~&\frac{a^{KjI}_{Kj}}{a^{KjI}_j}
  &~~=~~&\frac{a^{KjI}_{jI}}{a^{KjI}_j}-\frac{a^{IkJ}_{Ik}}{a^{IkJ}_k}&~~=~~&\la^I,\\
  &-\frac{a^{JiK}_{iK}}{a^{JiK}_i}
  &~~=~~&\frac{a^{IkJ}_{Ik}}{a^{IkJ}_k}
  &~~=~~&\frac{a^{IkJ}_{kJ}}{a^{IkJ}_k}-\frac{a^{JiK}_{Ji}}{a^{JiK}_i}&~~=~~&\mu^J,
 \end{alignat*}
where $\la^I$, $\mu^J$, and $\nu^K$ denote the common values of the corresponding expressions. Since, by assumption, the equation belongs to the class II, we have $a^{JiK}_{JiK}=0$, and, differentiating the last equality in the first equation with respect to  $x_J$, we find:
\[
 \mu^J\nu^K=-\frac{a^{JiK}_{iK}a^{JiK}_{Ji}}{(a^{JiK}_i)^2}=0,
\]
and analogously $\nu^K\la^I=\la^I\mu^J=0$; since there holds additionally
$\la^I+\mu^J+\nu^K=0$, all three functions actually vanish. This means that the tripodal form $(IJK)$ is reduced, upon a point transformation of $x_i,x_j,x_k$, to the form analogous to (\ref{spec_2}):
\[
 a^{IJ}+a^{JK}+a^{KI}+x_i+x_j+x_k=0.
\]
Resolving once again the both tripodal forms for $x_k$, we find an identity
\[
 a^{IJ}+a^{JK}+a^{KI}+x_i+x_j=g(x_i,x_j,x_I+x_J+x_K),
\]
whence
\[
 a^{IJ}_I+a^{KI}_I=a^{IJ}_J+a^{JK}_J=a^{JK}_K+a^{KI}_K.
\]
These equations are easy to solve, and we arrive at the tripodal form
\[
 \la(x_I+x_J+x_K)^2+\mu(x_I+x_J+x_K)+\nu+x_i+x_j+x_k=0.
\]
The case $\la=0$ leads to equation (\ref{Y7}) which does not belong to the class II, while for $\la\ne0$, as easily shown, this tripodal form cannot be equivalent to equation (\ref{spec_2}).
\end{proof}

\begin{statement}\label{st:spec_3}
Equations of the class II with one of the tripodal forms as in (\ref{spec_3}) can be reduced to (\ref{Y5}).
\end{statement}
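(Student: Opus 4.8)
The plan is to follow exactly the scheme used in the proofs of Propositions~\ref{st:spec_1} and \ref{st:spec_2}: take the given tripodal form (\ref{spec_3}), in which the vertex $x_i$ is shared by all four two-variable terms while the opposite vertex $x_I$ enters alone, and confront it with a second tripodal form of the \emph{same} equation in which $x_i$ is no longer the common spike. After a point transformation of the type (\ref{tix}) we may assume $x_I$ enters (\ref{spec_3}) linearly, and, since the equation belongs to class II, we may use $a^{ikj}_{ikj}=0$ for every triple together with $p^{ij}_{ij}\ne0$, $p^{ik}_{ik}\ne0$. The goal is to show that, modulo the point transformations (\ref{tix}), the four functions $p^{in}$ ($n\in\{j,k,J,K\}$) acquire a \emph{common} multiplicative dependence on $x_i$, which is precisely what is needed to recast the relation in the product form (\ref{Y5}).

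First I would compare (\ref{spec_3}) with the tripodal form attached to a face sharing an edge $in$, for each equatorial $n$, and apply the identities (\ref{id1}) of Proposition~\ref{st:id} exactly as in Proposition~\ref{st:spec_2}. Solving the two forms for a common variable and differentiating yields a relation of the type $p^{in}_{in}/p^{in}_n=\omega(x_i)$ whose right-hand side is \emph{independent of $n$}: the isolated term $x_I$ together with the class II vanishing $a^{ikj}_{ikj}=0$ is what forces the $x_i$-logarithmic derivatives of all four legs to coincide. Integrating $\partial_i\log p^{in}_n=\omega(x_i)$ gives $p^{in}_n=\Omega(x_i)\,\theta_n(x_n)$ with one and the same factor $\Omega(x_i)=\exp\int\omega\,dx_i$; here $\Omega$ cannot be constant, for otherwise $p^{ij}_{ij}=0$, contradicting the hypothesis. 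A further integration in $x_n$ produces $p^{in}=\Omega(x_i)\Theta_n(x_n)+\rho_n(x_i)$, where each $\Theta_n$ is genuinely nonconstant by irreducibility.

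At this point I would use the point transformations $x_n\mapsto\Theta_n(x_n)$ on the four equatorial variables to bring (\ref{spec_3}) to the normal form
\[
 \Omega(x_i)\,(x_j+x_k+x_J+x_K)+R(x_i)+x_I=0,\qquad R=\textstyle\sum_n\rho_n .
\]
Multiplying by $1/\Omega(x_i)$ (legitimate since $\Omega\ne0$) and applying the point transformation $x_i\mapsto 1/\Omega(x_i)$ turns it into $x_j+x_k+x_J+x_K+S(x_i)+x_i x_I=0$, with $S(x_i)=R\,x_i$ expressed in the new variable. Once $R$ is shown to be constant, $S(x_i)+x_ix_I=x_i(x_I+R)$ is absorbed by an admissible shift $x_I\mapsto x_I-R$, and a sign normalization $x_n\mapsto -x_n$ delivers $x_ix_I=x_j+x_k+x_J+x_K$, which is (\ref{Y5}).

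The hard part is twofold. The delicate analytic step is establishing that $\omega$ is independent of $n$, i.e.\ that all four legs share the same $x_i$-dependence; this is where the isolated variable $x_I$ and the vanishing of the third derivatives $a^{ikj}_{ikj}=0$ must be exploited with care, exactly in the spirit of Proposition~\ref{st:spec_2}. The delicate bookkeeping step is disposing of the residual additive term $R(x_i)$: using the freedom to redistribute additions $\mu(x)+\nu(z)$ among the legs and comparing one further pair of tripodal forms, one must show that $R(x_i)$ reduces to a constant, so that it combines with the product $x_ix_I$ rather than spoiling it. Care is also needed to verify that this branch does not collapse into the type (\ref{spec_2}) already excluded in Proposition~\ref{st:spec_2}, which is guaranteed by $\Omega$ being nonconstant.
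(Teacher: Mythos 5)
Your overall strategy coincides with the paper's: show that the four two-variable legs of (\ref{spec_3}) factor with a common $x_i$-dependence, $p^{in}=\Omega(x_i)\Theta_n(x_n)+\rho_n(x_i)$, normalize by point transformations to $x_j+x_k+x_J+x_K=x_ix_I+s(x_i)$, and then kill the residual term by one further comparison of tripodal forms. The concluding steps you describe are essentially the paper's (with the minor caveat that one only obtains $s_{ii}=0$, i.e.\ $s$ \emph{linear} in the transformed variable rather than constant; the linear part is then absorbed by shifting $x_I$ and the constant by shifting one of the equatorial variables).

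The genuine gap is in the central step, which you assert rather than prove, and for which the mechanism you name is not the one that actually works. Comparing (\ref{spec_3}) with the adjacent form $(I,J,K)$ yields, via (\ref{id1}), only the relations (\ref{pp}): the ratios $p^{iJ}_J/p^{ik}_k$ and $p^{iK}_K/p^{ij}_j$ equal expressions built from the legs $a^{IkJ}$, $a^{JiK}$, $a^{KjI}$ of that \emph{other} tripodal form, and these expressions do in general depend on $x_i$ (through $a^{JiK}$). The $x_i$-independence you need --- hence the common factor $\Omega$ --- follows only in the branch $a^{JiK}_{Ji}=a^{JiK}_{iK}=0$. The complementary branches ($a^{JiK}_{Ji}\ne0$ or $a^{JiK}_{iK}\ne0$) must be disposed of separately: in each of them one shows that some $p^{in}_{in}$ vanishes, so that (\ref{spec_3}) degenerates into a particular case of (\ref{spec_2}), which Proposition \ref{st:spec_2} has already excluded from class II. Your proposal invokes Proposition \ref{st:spec_2} only as an after-the-fact check, ``guaranteed by $\Omega$ being nonconstant''; in fact that exclusion is a \emph{prerequisite} for the existence of $\Omega$, not a consequence of it. Relatedly, (\ref{pp}) links the legs only in the two pairs $(p^{iJ},p^{ik})$ and $(p^{iK},p^{ij})$; to tie all four together one also needs the comparison with the form $(I,J,k)$, i.e.\ the $k\leftrightarrow K$ symmetry of (\ref{spec_3}), which your phrase ``for each equatorial $n$'' gestures at but does not carry out.
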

\begin{proof}
Comparing with the tripodal form
\[
 (I,J,K):\quad a^{IkJ}+a^{JiK}+a^{KjI}=0,
\]
we find:
\begin{equation}\label{pp}
 \frac{p^{iJ}_J}{p^{ik}_k}=\frac{a^{JiK}_J+a^{IkJ}_J}{a^{IkJ}_k},\qquad
 \frac{p^{iK}_K}{p^{ij}_j}=\frac{a^{JiK}_K+a^{KjI}_K}{a^{KjI}_j}.
\end{equation}
There follows $a^{JiK}_{JK}=0$. Further, assume that $a^{JiK}_{Ji}\ne0$. Then, differentiating the first equation with respect to $x_I$ and $x_i$, we find $a^{IkJ}_{Ik}=a^{IkJ}_{IJ}=0$, so that the tripodal form is
\[
 a^{kJ}+a^{JiK}+a^{KjI}=0.
\]
Solving for $x_I$, we find:
\begin{equation}\label{pppp}
 p^{ij}+p^{ik}+p^{iJ}+p^{iK}=f(x_K,x_j,a^{kJ}+a^{JiK}),
\end{equation}
and applying $\partial_j\partial_k$ leads to $0=f'_j$, where the prime stands for the derivative of $f$ with respect to the third argument. But then  $p^{ij}_{ij}=0$.

Exchanging the roles of $j$ and $k$, we prove that the implications hold:
\[
 a^{JiK}_{Ji}\ne0~\Rightarrow~a^{IkJ}_{Ik}=a^{IkJ}_{IJ}=p^{ij}_{ij}=0;\quad
 a^{JiK}_{iK}\ne0~\Rightarrow~a^{KjI}_{jI}=a^{KjI}_{KI}=p^{ik}_{ik}=0.
\]
If $a^{JiK}_{Ji}\ne0$ and $a^{JiK}_{iK}\ne0$ simultaneously, then $p^{ij}_{ij}=p^{ik}_{ik}=0$, and our equation reduces to
\[
 p^j+p^k+p^{iJ}+p^{iK}+x_I=0,
\]
but this is, up to notation, a particular case of the previous special equation (\ref{spec_2}). Let, for example, $a^{JiK}_{Ji}\ne0$ and $a^{JiK}_{iK}=0$. Then (\ref{pppp}) is fulfilled, and, applying  $\partial_J\partial_K$, we come to $0=f'_K$ (recall that $a^{JiK}_{JK}=0$), but then also $p^{iK}_{iK}=0$, and we come again to an equation of the type (\ref{spec_2}):
\[
 p^j+p^{ik}+p^{iJ}+p^K+x_I=0.
\]

There remains the only possibility $a^{JiK}_{Ji}=a^{JiK}_{iK}=0$. In this case there follows from  (\ref{pp}) that
\[
  \left(\frac{p^{iJ}_J}{p^{ik\vphantom{j}}_k}\right)_{\!i}
 =\left(\frac{p^{iK}_K}{p^{ij}_j}\right)_{\!i}=0,
\]
and, using the symmetry of our equation under $k\leftrightarrow K$ (that is, comparing (\ref{spec_3}) with the tripodal form $(IJk)$ instead of $(IJK)$), we find also:
\[
  \left(\frac{p^{iJ}_J}{p^{iK\vphantom{j}}_K}\right)_{\!i}
 =\left(\frac{p^{ik}_k}{p^{ij}_j}\right)_{\!i}=0.
\]
All these equations yield:
\[
 p^{ij}+p^{ik}+p^{iJ}+p^{iK}=(q^j+q^k+q^J+q^K)r^i+s^i,
\]
moreover $r_i\ne\const$, since otherwise we would come to equation (\ref{Y7}) which does not belong to the class II. Up to changes of variables, equation (\ref{spec_3}) is reduced to
\[
 x_j+x_k+x_J+x_K=x_Ix_i+s^i.
\]
The tripodal form  $(I,J,K)$ in this case is $a^{IkJ}+a^i+a^{KjI}=0$, and, solving for  $x_j$, we find:
\[
x_Ix_i+s^i-x_k-x_J-x_K=f(x_K,x_I,a^{IkJ}+a^i)\quad\Rightarrow\quad
 x_I+s^i_i=-\frac{a^i_i}{a^{IkJ}_k}.
\]
Differentiating the last equation with respect to $x_i$ and $x_I$, we obtain: $a^i_{ii}=0$, but then
$s^i_{ii}=0$, and an additional linear change of variables leads to equation (\ref{Y5}).
\end{proof}

\subsection{Class III}\label{ss:caseIII}

In this case it turns out to be convenient to change notation and to re-write the tripodal forms (\ref{3legs}) as
\begin{equation}\label{6legs}
\begin{array}{ccc}
 \text{\em faces:} &\qquad & \text{\em equations:} \\[1.5mm]
  (1,2,3),~(4,5,6) && p^{14}+p^{42}+p^{26}+p^{63}+p^{35}+p^{51}=0, \\
  (1,2,4),~(3,5,6) && q^{13}+q^{32}+q^{26}+q^{64}+q^{45}+q^{51}=0, \\
  (1,3,5),~(2,4,6) && r^{63}+r^{32}+r^{21}+r^{14}+r^{45}+r^{56}=0, \\
  (1,4,5),~(2,3,6) && s^{64}+s^{42}+s^{21}+s^{13}+s^{35}+s^{56}=0
\end{array}
\end{equation}
(through a re-ordering of terms each of these equations gives two tripodal forms, corresponding to opposite faces).

\begin{statement}
Each term in equations (\ref{6legs}), as a function $f(x,y)$ of its arguments, is of the following type:
\begin{equation}\label{fabcd}
 f=a(x)b(y)+c(x)+d(y)\quad\text{or}\quad f=\rho\log(a(x)+b(y))+c(x)+d(y).
\end{equation}
\end{statement}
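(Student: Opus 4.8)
The plan is to reduce the claim, for a single edge function, to a Liouville-type differential equation and then integrate it. First I would record the structural input of \emph{Class III}: since $a^{ikj}_{ij}=0$ for every triple, each leg splits as a sum of two functions, one for each base-and-spike pair, and this is exactly what produces the four hexagonal relations (\ref{6legs}). Consequently every edge function $f(x,y)$ appearing in (\ref{6legs}) is one summand of a leg, with one argument being the spike and the other the base. The irreducibility hypothesis guarantees $f_{xy}\not\equiv0$, so all the quantities below are defined near a generic point.

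The heart of the matter is to produce, for a fixed edge function $f$, a closed differential relation involving $f$ alone. For this I would use that the edge carrying $f$ lies on two octahedral faces, i.e. in two tripodal forms sharing that edge, and feed this into the identities of Proposition \ref{st:id}. Relation (\ref{Aij}) already equates the invariant $\dfrac{a^{ikj}_{ik}a^{ikj}_{kj}}{(a^{ikj}_k)^2}$ of one leg with that of the neighbouring leg; after the Class III splitting this invariant collapses to an expression built only from $f_{xy}$ and its logarithmic derivatives. Differentiating the first identity (\ref{id1}) in the two directions transverse to the shared edge and separating variables (the two sides then depend on disjoint groups of variables), I expect to obtain that $\partial_x\partial_y\log f_{xy}$ is proportional to $f_{xy}$, with the proportionality factor forced by the separation to be a genuine constant.

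This identifies $w:=\log f_{xy}$ as a solution of the Liouville equation $w_{xy}=\kappa\,e^{w}$ with constant $\kappa$, and integration splits into the two announced cases. If $\kappa=0$ then $\log f_{xy}$ is additively separable, $f_{xy}=a'(x)b'(y)$, and one integration in each variable gives $f=a(x)b(y)+c(x)+d(y)$, the first alternative in (\ref{fabcd}). If $\kappa\neq0$ the general solution of Liouville's equation yields $f_{xy}=C\,a'(x)b'(y)/(a(x)+b(y))^2$ for suitable $a,b$, and integrating produces $f=\rho\log(a(x)+b(y))+c(x)+d(y)$; the additive terms $c(x)+d(y)$ absorb precisely the single-variable integration constants, in agreement with the freedom in the edge functions described after Proposition \ref{st:aij}.

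The main obstacle is the middle step: converting the comparison of the two tripodal forms (through (\ref{id1}), (\ref{id2}), (\ref{Aij})) into the clean statement that $\partial_x\partial_y\log f_{xy}=\kappa f_{xy}$ with $\kappa$ constant. Two points need care. First, the Class III splitting of a leg into edge functions is defined only up to additive single-variable terms, and the same edge may receive different normalisations from its two faces; one must check that these ambiguities are annihilated by the mixed derivative, so that the derived relation genuinely concerns the intrinsic object $f_{xy}$. Second, the separation-of-variables argument that pins $\kappa$ down as a constant rather than a function is exactly where the \emph{second} hexagonal representation is indispensable, and the bookkeeping of which variable occupies which term of (\ref{6legs}) must be carried out carefully to avoid spurious degenerate branches.
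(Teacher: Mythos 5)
Your overall strategy --- extract from the identities of Proposition \ref{st:id} a closed differential constraint on a single edge function and then integrate --- is the right one, and your endpoint is correct: for $f_{xy}\not\equiv 0$ the two alternatives in (\ref{fabcd}) are exactly the solutions of $\partial_x\partial_y\log f_{xy}=\kappa f_{xy}$ with \emph{constant} $\kappa$ ($\kappa=0$ giving the separable case, $\kappa\ne0$ the logarithmic one). But the step that would turn this into a proof is missing. Differentiating (\ref{id1}) transversally to the edge does not by itself produce a relation in $f$ alone: written out for $f=p^{14}$, the identity (\ref{id1'}) gives $p^{14}_{14}=-\bigl(q^{13}_1+q^{51}_1\bigr)\bigl(p^{35}_5+p^{51}_5\bigr)q^{45}_{45}/\bigl(q^{45}_5+q^{51}_5\bigr)^2$, and applying $\partial_1\partial_4\log(\cdot)$ leaves $2q^{45}_{45}q^{51}_{15}/\bigl(q^{45}_5+q^{51}_5\bigr)^2$ --- an expression in the \emph{neighbouring} edge functions $q^{45}$, $q^{51}$ that is not visibly a constant multiple of $p^{14}_{14}$; its ratio to $p^{14}_{14}$ still involves $x_1,x_3,x_5$. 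Pinning it down is precisely the work you defer. The paper supplies it concretely: freezing all variables except $x_1,x_4$ in the two instances of (\ref{id1}) (for the pairs of heads sharing the edges $12$ and $45$) first yields the fractional-linear forms (\ref{fxy}) for $f_x$ and $f_y$ \emph{separately}; equating the two resulting expressions for $-f_{xy}$, taking logarithms and applying $\partial_x\partial_y$ gives $\b'\d'/(\b+\d)^2=\mu'\varkappa'/(\mu+\varkappa)^2$, and dividing by the $f_{xy}$-identity forces $\a/\b'=\la/\mu'$, a function of $x$ equal to a function of $y$, hence constant. That is where your $\kappa$ comes from; without the intermediate forms (\ref{fxy}) I do not see how your separation of variables is to be executed.

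Two further points. First, Proposition \ref{st:imp2} does not help here: after the Class III splitting the invariant in (\ref{Aij}) becomes $p^{ik}_{ik}p^{kj}_{kj}/\bigl(p^{ik}_k+p^{kj}_k\bigr)^2$, which couples \emph{two} edge functions meeting at the vertex $k$ rather than collapsing to data of a single $f$; the paper reserves (\ref{Aij}) for Proposition \ref{st:T23}, exactly to tie neighbouring edge functions together, and its hypothesis on non-vanishing mixed derivatives is not automatic. Second, irreducibility does not guarantee $f_{xy}\not\equiv 0$ for an individual edge term --- equation (\ref{Y7}) consists entirely of such terms, and the proposition immediately following this one in the paper is devoted to forms containing them. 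This is harmless for the statement (if $f_{xy}\equiv0$ then $f=c(x)+d(y)$, a degenerate instance of the first alternative), but the branch must be split off explicitly before you may take $\log f_{xy}$, as the paper does with the case that one of $\a,\d',\la,\varkappa'$ vanishes.
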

\begin{proof}
First we show that
\begin{equation}\label{fxy}
 f_x=\frac{\a(x)}{\b(x)+\d(y)}+\g(x),\qquad
 f_y=\frac{\la(y)}{\mu(y)+\varkappa(x)}+\nu(y).
\end{equation}
Due to the symmetry of formulas (\ref{6legs}) it is enough to do this for $f=p^{14}$, $x=x_1$, $y=x_4$. We use the first identity (\ref{id1}), with $(i,j,k)=(1,2,3)$ and $(4,5,6)$:
\[
 \frac{a^{132}_1+a^{451}_1}{a^{451}_5}=\frac{a^{142}_1+a^{351}_1}{a^{351}_5},\qquad
 \frac{a^{465}_4+a^{124}_4}{a^{124}_2}=\frac{a^{415}_4+a^{624}_4}{a^{624}_2}.
\]
This yields:
\begin{equation}\label{id1'}
 \frac{q^{13}_1+q^{51}_1}{q^{45}_5+q^{51}_5}=
 \frac{p^{14}_1+p^{51}_1}{p^{35}_5+p^{51}_5},\qquad
 \frac{s^{64}_4+s^{42}_4}{s^{21}_2+s^{42}_2}=
 \frac{p^{14}_4+p^{42}_4}{p^{42}_2+p^{26}_2},
\end{equation}
and, setting all variables except for $x_1,x_4$ to constants, we come to (\ref{fxy}).

Further, computing the mixed derivatives, we find:
\[
 -f_{xy}=\frac{\a(x)\d'(y)}{(\b(x)+\d(y))^2}=
 \frac{\la(y)\varkappa'(x)}{(\mu(y)+\varkappa(x))^2}.
\]
If at least one of the functions $\a,\d',\la,\varkappa'$ vanishes, then $f=c(x)+d(y)$. Otherwise, taking logarithms and applying $\partial_x\partial_y$, we come to
\[
 \frac{\b'(x)\d'(y)}{(\b(x)+\d(y))^2}=
 \frac{\mu'(y)\varkappa'(x)}{(\mu(y)+\varkappa(x))^2}.
\]
If $\b'=\mu'=0$, then equations (\ref{fxy}) are easily integrated and lead to $f=a(x)b(y)+c(x)+d(y)$. Finally, if $\b'$ and $\mu'$ do not vanish, then the two latter equations yield $\a/\b'=\la/\mu'=\const$, and an integration finishes the proof.
\end{proof}

The proposition just proven is not easy to use directly, since it is not yet known how are the representations (\ref{fabcd}) corresponding to different terms of the tripodal forms (\ref{6legs}) related to one another. We start with filtering away the case when these forms contain too many terms of the kind $a(x)+b(y)$.

\begin{statement}
If each of the forms (\ref{6legs}) contains terms with vanishing mixed derivatives, then one of the forms contains at least four such terms, and it is a particular case of the equation (\ref{spec_1}).
\end{statement}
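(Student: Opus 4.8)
The plan is to read the four equations (\ref{6legs}) geometrically: each is a Hamiltonian hexagon on the six vertices of the octahedron, its six summands sitting on the six edges of the cycle, with the three octahedron diagonals (the opposite pairs $\{1,6\}$, $\{2,5\}$, $\{3,4\}$, which sum to $7$) being the antipodal pairs of the hexagon. By the preceding proposition every summand has one of the two shapes in (\ref{fabcd}); I would call an edge \emph{split} when its summand is of the additively separable kind $c(x)+d(y)$, i.e.\ has vanishing mixed derivative. A short bookkeeping over (\ref{6legs}) shows that each of the twelve octahedron edges occurs in exactly two of the four forms (since $4\cdot 6=2\cdot 12$), so the four hexagons constitute a double cover of the edge set. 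In this language the hypothesis says precisely that the set $S$ of split edges meets every one of the four hexagons.

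First I would record the local engine. Comparing, as in Proposition \ref{st:id}, two forms that share an edge $ij$, and using that in Class III one has $a^{ikj}_{ij}=0$ throughout — so that Proposition \ref{st:branch} operates in its constant branch and the representation (\ref{aikj}) of Proposition \ref{st:aij} is available — I would differentiate the identities (\ref{id1}) to obtain linear relations among the mixed second derivatives of the edge-terms incident to the shared edge. Two consequences are needed: a \emph{transfer} statement, that the separable pieces of a split edge may be absorbed into single-variable functions at its two endpoints without disturbing the other terms, and a \emph{propagation} statement, that a split edge is never isolated — its presence forces a prescribed splitness on certain edges meeting it in the partner hexagon, on pain of some surviving term dropping one of its arguments and violating the irreducibility of (\ref{xxxxxx}).

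Next comes the forcing, which I expect to be the crux. Pure counting is \emph{insufficient}: two well-placed split edges already meet all four hexagons, so the hypothesis by itself does not even guarantee four split edges, let alone four in a single form. The content is that the propagation statement forbids such sparse, spread-out configurations. Exploiting the symmetry of (\ref{6legs}) to cut down the number of cases, I would assume that every hexagon meets $S$ in at most three edges and trace the propagation rule around the shared edges until a contradiction appears (a term losing an argument, or two incompatible branches of (\ref{fabcd}) being imposed on one edge). This rules out all minimal configurations and leaves only the situation in which some hexagon $\mathcal F$ carries at least four split edges; the same analysis should pin down their position, namely that $\mathcal F$ possesses an opposite pair $\{j,J\}$ both of whose incident edges are split.

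The reduction is then immediate. In $\mathcal F$ the vertices $j,J$ appear only through separable terms; by the transfer statement I gather the corresponding single-variable pieces into two functions $q^j,q^J$ attached to $j$ and $J$, and collect the single-variable parts of the remaining split edges into the genuinely coupled terms that survive. Since the two non-split edges of $\mathcal F$ cannot be incident to $j$ or $J$, they are the two hexagon edges among the other four vertices, and what is left is a relation $p^{IK}+p^{ik}+q^j+q^J=0$ — a particular case of (\ref{spec_1}), with the middle leg $p^{Ki}$ simply absent — which is exactly the input of Proposition \ref{st:spec_1}. I expect the forcing step to be the main obstacle: it is the only place where the combinatorics of the double cover interacts with the analytic branch structure (\ref{fabcd})--(\ref{ikKj-II}), and organizing the case distinction so that it terminates without an unwieldy enumeration is the delicate part.
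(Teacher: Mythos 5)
Your combinatorial setup (four hexagons double-covering the twelve edges) and your final reduction to \eqref{spec_1} with the middle leg absent both match the paper, but the step you yourself flag as ``the crux'' is precisely where a genuine gap sits, and the gap is a missing structural idea rather than a missing computation. The paper's engine is not a one-way forcing ``on pain of violating irreducibility'': differentiating the identities \eqref{id1'} with respect to the variable that occurs in exactly one term on each side yields a clean \emph{biconditional}, e.g.\ $s^{21}_{21}=0\Leftrightarrow p^{14}_{14}=0\Leftrightarrow q^{45}_{45}=0$, linking the mixed derivatives of terms sitting on consecutive edges of a Hamiltonian hexagon of the octahedron while the form-label cycles through three of the four letters. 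These equivalences close up into disjoint $6$-cycles of \emph{terms} (e.g.\ $p^{14}\to q^{45}\to s^{56}\to p^{63}\to q^{32}\to s^{21}\to p^{14}$, which avoids $r$ entirely), each cycle being all-or-nothing and meeting exactly three of the four forms in two terms apiece. With that in hand your ``forcing step'' evaporates: to give every form a separable term you must switch on at least two cycles (one cycle misses a form), and any two on-cycles already deposit four separable terms into a single form. No contradiction argument, no tracing around shared edges, no case enumeration. Your assertion that ``pure counting is insufficient'' is true for counting split edges but false for counting on-cycles, and your anticipation of ``an unwieldy enumeration'' is the symptom of not having found the equivalence-class structure.

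A secondary but related imprecision: your set $S$ of ``split edges'' conflates an edge with a term. Each edge carries two a priori unrelated summands, one in each of the two forms containing it (e.g.\ $p^{14}$ and $r^{14}$), and these land in \emph{different} equivalence classes, so one can be separable while the other is not. Consequently ``$S$ meets every hexagon'' is not an accurate translation of the hypothesis of the statement, and the observation that two edges can meet all four hexagons does not bear on the actual counting, which must be done over the $24$ terms, not the $12$ edges. Once you replace your propagation lemma by the biconditional chains from \eqref{id1'} and count classes instead of edges, the rest of your outline (absorbing the single-variable pieces of the four separable terms into $q^j+q^J$ at the antipodal pair and recognizing the surviving two coupled terms as a degenerate instance of \eqref{spec_1}) goes through exactly as in the paper.
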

\begin{proof}
By differentiation we derive from (\ref{id1'}):
\[
 s^{21}_{21}=0~~\Leftrightarrow~~p^{14}_{14}=0~~\Leftrightarrow~~q^{45}_{45}=0,
\]
that is, the presence of one term with vanishing mixed derivatives in one of the forms yields the presence of such terms in further two forms. This chain can be continued to a closed cycle:
\[
 p^{14}_{14}=0~\Rightarrow~q^{45}_{45}=0~\Rightarrow~
 s^{56}_{56}=0~\Rightarrow~p^{63}_{63}=0~\Rightarrow~
 q^{32}_{32}=0~\Rightarrow~s^{21}_{21}=0~\Rightarrow~p^{14}_{14}=0,
\]
not containing $r^{ij}$. Of course, all forms are on the same footing, so that starting, e.g., from $r^{14}$, we would get the cycle
\[
 r^{14}_{14}=0~\Rightarrow~p^{42}_{42}=0~\Rightarrow~
 q^{26}_{26}=0~\Rightarrow~r^{63}_{63}=0~\Rightarrow~
 p^{35}_{35}=0~\Rightarrow~q^{51}_{51}=0~\Rightarrow~r^{14}_{14}=0,
\]
not containing $s^{ij}$. Therefore, if the terms with vanishing mixed derivatives are present in each of the forms, then one of the forms accumulates at least four such terms. For instance, if both cycles above take place then $p^{14}_{14}=p^{63}_{63}=p^{42}_{42}=p^{35}_{35}=0$, and the first equation in (\ref{6legs}) becomes $p^4+p^{26}+p^3+p^{51}=0$.
\end{proof}

To compare the general terms (\ref{fabcd}) we will use the following lemma. Note that the function $F$ in this lemma defines a general solution of the Liouville equation $(\log F)_{xy}=-2F$.

\begin{lemma}\label{l:frac}
All solutions of the functional equation
\[
 \frac{g_x(x,z)h_y(y,z)}{(g(x,z)-h(y,z))^2}=F(x,y)\ne0
\]
are given by the formulas
\[
 g=\frac{\d(z)}{\g(z)-\a(x)}+\eps(z),\quad
 h=\frac{\d(z)}{\g(z)-\b(y)}+\eps(z),\quad
 F=\frac{\a'(x)\b'(y)}{(\a(x)-\b(y))^2}.
\]
\end{lemma}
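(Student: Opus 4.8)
The plan is to exploit that the right-hand side $F$ does not depend on $z$, and to reduce the equation to a normalized form in which it literally states that a $z$-family of one-variable maps preserves the ``cross-ratio differential'', i.e. is Möbius. First I would freeze $z$ at a base value $z_0$. Since $F\neq0$ forces $g_x\neq0$ and $h_y\neq0$, the functions $\a(x):=g(x,z_0)$ and $\b(y):=h(y,z_0)$ have nonvanishing derivatives, and evaluating the hypothesis at $z=z_0$ gives immediately
\[
 F(x,y)=\frac{\a'(x)\b'(y)}{(\a(x)-\b(y))^2},
\]
which is already the asserted form of $F$ (and confirms the Liouville equation $(\log F)_{xy}=-2F$). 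Introducing new independent variables $a=\a(x)$, $b=\b(y)$ and writing $g(x,z)=G(a,z)$, $h(y,z)=H(b,z)$, the chain rule turns the hypothesis into the normalized equation
\[
 \frac{G_a(a,z)\,H_b(b,z)}{(G(a,z)-H(b,z))^2}=\frac{1}{(a-b)^2},
\]
with the normalization $G(a,z_0)=a$, $H(b,z_0)=b$.

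Next I would take the logarithm of the normalized equation and differentiate in $a$, obtaining
\[
 \frac{G_{aa}}{G_a}-\frac{2G_a}{G-H}=-\frac{2}{a-b}.
\]
Writing $\sigma=G_{aa}/G_a$ (a function of $(a,z)$ only) and solving for $H$ gives the explicit formula
\[
 H(b,z)=G(a,z)-\frac{2G_a(a,z)\,(a-b)}{\sigma(a,z)\,(a-b)+2},
\]
whose right-hand side is a ratio of affine functions of $b$; hence $H(\cdot,z)$ is a Möbius transformation of $b$ for each $z$ (non-degenerate, since $H_b\neq0$). By the symmetric differentiation in $b$, the function $G(\cdot,z)$ is likewise a Möbius transformation of $a$.

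It then remains to show that $G(\cdot,z)$ and $H(\cdot,z)$ are one and the same Möbius map. Here I would write $G=(Pa+Q)/(Ra+S)$ and $H=(\tilde Pb+\tilde Q)/(\tilde Rb+\tilde S)$ with $z$-dependent coefficients of nonzero determinants $\Delta_G,\Delta_H$, substitute into the normalized equation, and use $G_a=\Delta_G/(Ra+S)^2$ to reduce it to the requirement that the bilinear numerator of $G-H$ equal $\pm\sqrt{\Delta_G\Delta_H}\,(a-b)$. Matching the coefficients of $ab$ and of the constant term yields $P\tilde R=\tilde PR$ and $Q\tilde S=\tilde QS$, while matching the coefficients of $a$ and of $b$ forces the two resulting proportionality factors to coincide; thus $(\tilde P,\tilde Q,\tilde R,\tilde S)\propto(P,Q,R,S)$ and $H=G=:M(\cdot,z)$. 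Finally, parametrizing the generic Möbius map as $M(a,z)=\d(z)/(\g(z)-a)+\eps(z)$ and undoing the substitution $a=\a(x)$, $b=\b(y)$ produces exactly
\[
 g=\frac{\d(z)}{\g(z)-\a(x)}+\eps(z),\qquad h=\frac{\d(z)}{\g(z)-\b(y)}+\eps(z),
\]
as claimed.

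The main obstacle is this last coincidence step: $a$ and $b$ range over disjoint neighbourhoods (of $\a(x_0)\neq\b(y_0)$), so one cannot naively set $b=a$ in the normalized equation to read off $G=H$. One must instead either match the Möbius coefficients as above, or first use the already-established rationality of $H$ to continue it analytically onto the $a$-domain and only then specialize $b=a$ in the polynomial identity $G_aH_b(a-b)^2=(G-H)^2$. Care is also needed with branches and signs whenever logarithms or square roots are taken, but these choices only affect constants and do not alter the final form; the purely affine (pole-free) Möbius maps appear as the limiting case $\g\to\infty$ of the stated parametrization.
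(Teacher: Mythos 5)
Your proof is correct, and it reaches the result by a genuinely different route from the paper's. You first freeze $z=z_0$ to read off $F=\a'\b'/(\a-\b)^2$, normalize the independent variables to reduce to $G_aH_b/(G-H)^2=(a-b)^{-2}$, then log-differentiate in $a$ and solve for $H$ to see that $H(\cdot,z)$ is a M\"obius function of $b$ (and symmetrically for $G$), and finally identify the two M\"obius maps by matching the coefficients of the bilinear numerator of $G-H$ against $\pm\sqrt{\Delta_G\Delta_H}\,(a-b)$. The paper instead integrates the original equation in $x$ to get $h_y/(g-h)=\tilde h(y,z)-\int F\,dx$, specializes $y=y_0$ to conclude that $g$ is a $z$-dependent M\"obius function of some $\a(x)$, and then exploits the invariance of the functional equation under simultaneous $z$-dependent linear-fractional transformations of $g$ and $h$ (with $F$ unchanged) to gauge-normalize $g=\a(x)$; in that gauge a one-line $\partial_z$ computation forces $h_z=0$, and undoing the gauge finishes the proof. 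The paper's gauge-fixing trick thus replaces your explicit coefficient matching and sidesteps the disjoint-domain issue you rightly flag; your version is more computational in the last step but determines $F$ more transparently at the outset and makes the ``same M\"obius map'' conclusion fully explicit (including the degenerate affine case as the limit $\g\to\infty$, which the stated parametrization of the lemma formally omits). Both arguments are sound.
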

\begin{proof}
Integration with respect to $x$ leads to
\[
 \frac{h_y(y,z)}{g(x,z)-h(y,z)}=\tilde h(y,z)-\int F(x,y)dx.
\]
Setting $y=y_0=\const$, we come to
\[
 \frac{\d(z)}{g(x,z)-\eps(z)}=\g(z)-\a(x),
\]
and since $h_y\ne0$, we can choose $y_0$ so that the numerator and the denominator do not vanish identically. This yields the expression for $g$. Further, we can use the invariance of the equation under non-degenerate linear-fractional transformations
\[
 g\to\frac{p(z)g+q(z)}{r(z)g+s(z)},\qquad h\to\frac{p(z)h+q(z)}{r(z)h+s(z)},
 \qquad F\to F
\]
to bring $g$ to the form $g=\a(x)$, $\a'\ne0$. Then
\[
 \frac{\a'(x)h_y(y,z)}{(\a(x)-h(y,z))^2}=F(x,y)\quad\Rightarrow\quad
 \frac{h_{yz}}{h_y}+\frac{2h_z}{h-\a(x)}=0\quad\Rightarrow\quad
 \frac{h_z\a'}{(h-\a(x))^2}=0,
\]
that is, $h=\b(y)$, and the inverse linear-fractional transformation finishes the proof.
\end{proof}

Now we are in a position to analyze the case where at least one of the tripodal forms (\ref{6legs}) contains no terms with the vanishing mixed derivatives (representable as a sum of functions of a single variable). For definiteness, let it be the first form in (\ref{6legs}).

\begin{statement}\label{st:T23}
If $p^{ij}_{ij}\ne0$ for all $i,j$, the equation (\ref{xxxxxx}) is equivalent to (\ref{Y3}) or to
(\ref{Y2}).
\end{statement}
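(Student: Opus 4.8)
The plan is to turn the per-edge classification (\ref{fabcd}) into a global dichotomy, using the Liouville-type rigidity of Lemma~\ref{l:frac} to propagate one single structure around the whole octahedron. Since by hypothesis $p^{ij}_{ij}\ne0$ for every edge, the preceding proposition applies to each of the six terms of the first form in (\ref{6legs}): up to additive functions of a single variable, each $p^{ij}$ is either \emph{bilinear}, $p^{ij}=a(x_i)b(x_j)$, or \emph{logarithmic}, $p^{ij}=\rho\log(a(x_i)+b(x_j))$. These are exactly the two options separated at the end of that proof according to whether the inner function $\beta$ in (\ref{fxy}) has $\beta'=0$ or $\beta'\ne0$, and they match the two entries of Table~\ref{fig:sklad} for (\ref{Y2}) and (\ref{Y3}). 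So the first step is to record this alternative edge by edge; the whole point will be to show the type is the same for all six edges, the bilinear case leading to (\ref{Y2}) and the logarithmic case to (\ref{Y3}).

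Next I would propagate the type and align the inner functions around the equatorial hexagon $1\!-\!4\!-\!2\!-\!6\!-\!3\!-\!5$ by exploiting the relations between neighbouring faces. Two faces sharing an edge (e.g. $(1,2,3)$ and $(1,2,4)$, which are the faces carrying the first two forms of (\ref{6legs})) are linked by the identities (\ref{id1}), recorded in the split form (\ref{id1'}). The idea is to solve one of these two tripodal forms for the vertex opposite the shared edge and substitute into the other, but now keeping one extra variable active instead of freezing everything as in the derivation of (\ref{fxy}). After one differentiation this produces a functional equation of precisely the shape $g_x h_y/(g-h)^2=F(x,y)$ treated in Lemma~\ref{l:frac}, with the extra variable playing the role of the parameter $z$. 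The lemma then pins the inner functions $a+b$ of the two adjacent edges down to a common Möbius transform in the shared variable. This is the rigidity required: edges meeting at a vertex must be of the same type, and their inner functions must agree in the shared argument. Running this comparison around the hexagon (using the reflection and cyclic symmetries of the octahedron to reach every pair) forces all six edges into one uniform type with mutually compatible inner functions.

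Once the type is uniform, I would normalize by the point transformations (\ref{tix}), choosing a single $X_i$ at each vertex; this is legitimate precisely because the previous step matched the inner functions at each shared vertex, so one change of variable per vertex suffices. In the bilinear case each term becomes $\pm x_ix_j$ (modulo single-variable pieces), and imposing that the hexagonal sum be an identity fixes the signs and cancels the single-variable remainders, reproducing the tripodal $(1,2,3)$ form $(x_1-x_2)x_4+(x_2-x_3)x_6+(x_3-x_1)x_5=0$ of (\ref{Y2}). In the logarithmic case each term becomes $\rho_{ij}\log(x_i+x_j)$; matching the constants $\rho_{ij}$ against the closed hexagonal identity yields the multiplicative tripodal form of (\ref{Y3}).

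The hard part will be the rigidity step of the second paragraph: ruling out a genuine mixing of the bilinear and logarithmic types along the hexagon, and showing that the inner functions at the three pairs of edges meeting at each vertex can be straightened simultaneously by one transformation. This is exactly what Lemma~\ref{l:frac} is designed to deliver, but the delicate point is the bookkeeping — choosing which pair of faces in (\ref{6legs}) to compare, which vertex variable to eliminate, and with respect to which variables to differentiate so that the resulting relation is \emph{literally} of the Liouville form required by the lemma. A secondary nuisance is the careful tracking of the additive single-variable terms $c(x)+d(y)$ and of the residual freedom (constant factors, the flip $x\leftrightarrow z$, and additions $\mu(x)+\nu(z)$ in individual legs), so that the final normalized equation is exactly (\ref{Y2}) or (\ref{Y3}) rather than merely equivalent to one of them up to transformations not yet accounted for.
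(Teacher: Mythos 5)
Your overall strategy coincides with the paper's: the Liouville-form functional equation $p^{ik}_{ik}p^{kj}_{kj}/(p^{ik}_k+p^{kj}_k)^2=A^{ij}$ (which the paper obtains by invoking Proposition \ref{st:imp2}, applicable here because $a^{jIK}_{IK}=p^{IK}_{IK}\ne0$), followed by Lemma \ref{l:frac} to tie consecutive terms of the first form in (\ref{6legs}), cyclic propagation around the hexagon $i\to k\to j\to I\to K\to J\to i$, and a final normalization into the bilinear and logarithmic families. Up to that point your plan is sound and is essentially the paper's argument.

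The genuine gap is in your last step: the closed hexagonal identity does \emph{not} fix the remaining constants, and the two families you arrive at are genuinely multi-parameter. After the cyclic closure one is left with
\begin{gather*}
 x_1x_4+x_4x_2+x_2x_6+x_6x_3+x_3x_5+\eps x_5x_1=\d,\\
 (x_1+x_4)(x_2+x_6)(x_3+x_5)=\sigma(x_4+x_2)(x_6+x_3)(x_5+\eps x_1+\d),
\end{gather*}
and every member of these families admits the two tripodal forms with heads $(1,2,3)$ and $(4,5,6)$ --- i.e.\ satisfies everything you have imposed so far --- yet for generic $\eps,\d,\sigma$ it is \emph{not} equivalent to (\ref{Y2}) or (\ref{Y3}). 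Concretely, solving the first equation for $x_6=f(x_1,\dots,x_5)$ one computes $\partial_2\partial_4\log(f_3/f_5)=(1+\eps)x_1x_5-\d$ up to a nonvanishing factor, so the tripodal form with head $(1,2,4)$ exists only for $\eps=-1$, $\d=0$. You therefore must bring in a \emph{second} form of (\ref{6legs}) (the $q$-form for the face pair $(1,2,4)$, $(3,5,6)$), exactly as the paper does, to kill $\eps$, $\d$, $\sigma$; ``imposing that the hexagonal sum be an identity'' cannot do this, since the hexagonal sum only encodes the two opposite faces you started from. The rest of your worries (type mixing along the hexagon, simultaneous straightening of the inner functions at each vertex) are indeed handled by Lemma \ref{l:frac} in the way you describe, with the sign alternation $+d^k$ versus $-d^k$ (respectively $\rho$ versus $-\rho$) in consecutive legs being the bookkeeping detail to track.
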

\begin{proof}
We apply Proposition \ref{st:imp2}. We have: $a^{ikj}=p^{ik}+p^{kj}$, $a^{jIK}=p^{jI}+p^{IK}$, $a^{KJi}=p^{KJ}+p^{Ji}$, and, since by assumption $a^{jIK}_{IK}=p^{IK}_{IK}\ne0$, the following equation is fulfilled:
\[
 \frac{p^{ik}_{ik}p^{kj}_{kj}}{(p^{ik}_k+p^{kj}_k)^2}=
 \frac{a^{iKj}_{iK}a^{iKj}_{Kj}}{(a^{iKj}_K)^2}=A^{ij}.
\]
Then, according to Lemma  \ref{l:frac}, we have:
\[
 p^{ik}_k+p^{kj}_k=\frac{\d^k}{\g^k-\a^i}-\frac{\d^k}{\g^k-\b^j},
\]
so that the representations (\ref{fabcd}) for $p^{ik}$ and for $p^{kj}$ are tied in the following sense:\medskip

if $p^{ik}=a^ib^k+c^i+d^k$, then $p^{kj}=a^jb^k+c^j-d^k$;
\medskip

if $p^{ik}=\rho\log(a^i+b^k)+c^i+d^k$, then $p^{kj}=-\rho\log(a^j+b^k)+c^j-d^k$.
\medskip

\noindent Applying this to the pair $p^{kj}$, $p^{jI}$, and further cyclically, we obtain the representations
\[
 \begin{array}{l}
  p^{ik}=a^ib^k+c^i+d^k,\\
  p^{kj}=a^jb^k+c^j-d^k,\\
  p^{jI}=a^jb^I-c^j+d^I,\\
  p^{IK}=a^Kb^I+c^K-d^I,\\
  p^{KJ}=a^Kb^J-c^K+d^J,\\
  p^{Ji}=\hat a^ib^J+\hat c^i-d^J,\\
 \end{array}
\qquad\text{or}\qquad
 \begin{array}{l}
  p^{ik}=\rho\log(a^i+b^k)+c^i+d^k,\\
  p^{kj}=-\rho\log(a^j+b^k)+c^j-d^k,\\
  p^{jI}=\rho\log(a^j+b^I)-c^j+d^I,\\
  p^{IK}=-\rho\log(a^K+b^I)+c^K-d^I,\\
  p^{KJ}=\rho\log(a^K+b^J)-c^K+d^J,\\
  p^{Ji}=-\rho\log(\tilde a^i+b^J)+\tilde c^i-d^J.\\
 \end{array}
\]
A comparison of the first and the last terms shows that
\[
 \hat a^i=\eps a^i,\quad \hat c^i=-c^i+\d,\qquad
 \tilde a^i=\eps a^i+\d,\quad \tilde c^i=-c^i+\sigma.
\]
Summing up all the forms, we obtain, after point transformations and re-enumeration, the following equations:
\begin{gather*}
 x_1x_4+x_4x_2+x_2x_6+x_6x_3+x_3x_5+\eps x_5x_1=\d,\\
 (x_1+x_4)(x_2+x_6)(x_3+x_5)=\sigma(x_4+x_2)(x_6+x_3)(x_5+\eps x_1+\d).
\end{gather*}
One can be more precise about the constants here. For this aim, we solve these equations for one of the variables, say for $x_6=f(x_1,x_2,x_3,x_4,x_5)$, and compare with the second tripodal form (\ref{6legs}), which yields another expression for
\[
 x_6=\varphi(x_2,x_4,q^{13}+q^{32}+q^{45}+q^{51}).
\]
It follows that $(\log(f_3/f_5))_{24}=0$, which can be checked in a straightforward way for a given function $f$. This check shows that the constants are uniquely determined, and we come to the equations
\begin{gather*}
 x_1x_4+x_4x_2+x_2x_6+x_6x_3+x_3x_5-x_5x_1=0,\\
 (x_1+x_4)(x_2+x_6)(x_3+x_5)=(x_4+x_2)(x_6+x_3)(x_5+x_1).
\end{gather*}
The first of them is reduced to (\ref{Y2}) after an additional change $x_2\to-x_2$, $x_6\to-x_6$,
while the second one coincides with (\ref{Y3}) after changing the signs of $x_4,x_5,x_6$.
\end{proof}

This finishes the proof of Theorem \ref{th:3leg_types}.

\section{Classification of compatible quintuples}\label{s:class}

\subsection{Separating away non-compatible equations}\label{ss:killbill}

Now we have to combine tripodal equations into compatible quintuples, taking into account that the legs on the faces shared by different octahedra must coincide, according to Proposition \ref{st:3leg}. Table \ref{fig:sklad} allows us to exclude some a priori non-compatible combinations. For instance, equation of the type (\ref{Y1}) can only match either an equation of the same type or an equation of the type (\ref{Y6}). Other combinations are impossible, as shown in the following proposition.

\begin{statement}
The functions
\begin{gather*}
 xyz,\quad xy,\quad y,\quad (x+z)^ky\;(k\ne0),\quad (x+y)z,\\
 y+\log(x+z),\quad \log(x+y)-\log(y+z),\quad \log(x+y),
\end{gather*}
are pairwise non-equivalent modulo transformations
\[
  \tilde a(x,y,z)=\g a(f(x),g(y),h(z))+\mu(x)+\nu(z)
\]
with non-constant $f,g,h$, along with the flip $x\leftrightarrow z$. Functions $(x+z)^ky$ with different $k$ are also non-equivalent.
\end{statement}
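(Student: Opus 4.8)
The plan is to build a short hierarchy of \emph{differential invariants} of the equivalence relation and to read off from them that no two entries of the list can be matched. Write $a=a(x,y,z)$ and denote partial derivatives by subscripts. The transformation acts as $\tilde a=\gamma\,a(f(x),g(y),h(z))+\mu(x)+\nu(z)$ with $\gamma\ne0$ and $f',g',h'\not\equiv0$, together with the flip $x\leftrightarrow z$. The key observation is how the mixed derivatives transform: since $\mu(x)+\nu(z)$ is annihilated by any differentiation involving $\partial_y$, one has $\tilde a_{xy}=\gamma f'g'\,a_{xy}$, $\tilde a_{yz}=\gamma g'h'\,a_{yz}$, $\tilde a_{xz}=\gamma f'h'\,a_{xz}$, $\tilde a_{xyz}=\gamma f'g'h'\,a_{xyz}$ and $\tilde a_y=\gamma g'\,a_y$, all evaluated at $(f,g,h)$.

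First I would use the \emph{vanishing pattern} of these mixed derivatives as a crude invariant. From the transformation laws above, each of the conditions $a_{xy}\equiv0$, $a_{yz}\equiv0$, $a_{xz}\equiv0$, $a_{xyz}\equiv0$ is preserved (for non-constant analytic $f,g,h$ the prefactors are nonzero on a dense set, so vanishing transfers both ways), while the flip merely interchanges $a_{xy}$ with $a_{yz}$. Hence the triple consisting of the number $n\in\{0,1,2\}$ of derivatives among $a_{xy},a_{yz}$ that do not vanish identically, together with $[a_{xz}\equiv0]$ and $[a_{xyz}\equiv0]$, is an invariant. A direct computation assigns to the eight entries six distinct such triples; in particular $y$, $(x+y)z$ and $y+\log(x+z)$ are the unique members of their classes and are thereby separated from everything else. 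Three classes remain to be split: $\{xyz,\ (x+z)^k y\ (k\ne1)\}$, $\{(x+z)y,\ \log(x+y)-\log(y+z)\}$ and $\{xy,\ \log(x+y)\}$.

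For the first I would introduce the scalar invariant
\[
 M=\frac{a_{xyz}\,a_y}{a_{xy}\,a_{yz}},
\]
which is genuinely invariant: the Jacobian factors $\gamma,f',g',h'$ cancel exactly, so $\tilde M(x,y,z)=M(f(x),g(y),h(z))$, and $M$ is flip-symmetric. One computes $M\equiv1$ for $xyz$ and $M\equiv(k-1)/k$ for $(x+z)^k y$. Since $(k-1)/k\ne1$ for every finite $k$ and $k\mapsto(k-1)/k$ is injective, $M$ separates $xyz$ from the whole family and separates the members of the family from one another, which simultaneously settles the final claim that distinct exponents $k$ give non-equivalent functions.

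The remaining two classes have $a_{xyz}\equiv0$, so $M$ degenerates; these I would split by the invariant $L=\partial_x\partial_y\log a_{xy}$ together with its flip image $\partial_y\partial_z\log a_{yz}$. From $\tilde a_{xy}=\gamma f'g'\,a_{xy}$ one gets $\partial_x\partial_y\log\tilde a_{xy}=f'g'\,(\partial_x\partial_y\log a_{xy})(f,g,h)$, so the vanishing of $L$ is invariant. For the bilinear representatives $(x+z)y$ and $xy$ one has $a_{xy}\equiv\const$, hence $L\equiv0$ (and likewise for the flip image), whereas for $\log(x+y)-\log(y+z)$ and $\log(x+y)$ one finds $L=2/(x+y)^2\not\equiv0$; since the flip only swaps $L$ with its companion, the simultaneous vanishing of the two is a flip-invariant property that distinguishes the members of each class. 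The main obstacle, and the step deserving the most care, is precisely the verification that $M$ and the condition $L\equiv0$ are invariant under the \emph{full} group --- in particular that the additive part $\mu(x)+\nu(z)$ and the independent reparametrizations $f,g,h$ leave them unchanged; once the transformation laws recorded in the first paragraph are established, everything else reduces to the routine evaluation of $M$ and $L$ on the eight explicit functions.
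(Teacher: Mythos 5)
Your proposal is correct, and it takes a genuinely different route from the paper. The paper's own proof is a pairwise elimination: for each pair of candidate leg functions one assumes an identity of the form $\g a(f,g,h)+\mu+\nu=\tilde a$ and differentiates until a contradiction (or the forced equality of exponents $k=m$) appears; only three representative pairs are worked out and the remaining ones are left to the reader. You instead organize the whole argument around a complete system of differential invariants: the vanishing pattern of $a_{xy},a_{yz},a_{xz},a_{xyz}$ (symmetrized over the flip), the scalar $M=a_{xyz}a_y/(a_{xy}a_{yz})$, and the log-Hessian $L=\partial_x\partial_y\log a_{xy}$ together with its flip companion. I checked the transformation laws ($\tilde a_{xy}=\g f'g'a_{xy}$, etc., with the additive part $\mu(x)+\nu(z)$ annihilated), the six distinct vanishing patterns, the values $M\equiv 1$ for $xyz$ and $M\equiv(k-1)/k$ for $(x+z)^ky$, and the values of $L$ on the two residual classes; all are correct, and the injectivity of $k\mapsto(k-1)/k$ settles the final claim about distinct exponents in one stroke. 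What your approach buys is completeness and uniformity: all $\binom{8}{2}$ pairs plus the one-parameter family in $k$ are dispatched at once, whereas the paper's method is more elementary per pair but is only sketched. One small point to tighten: in the class $\{xy,\ \log(x+y)\}$ the companion $\partial_y\partial_z\log a_{yz}$ is undefined because $a_{yz}\equiv0$ there, so the flip-invariant property should be phrased as ``for every non-vanishing derivative among $a_{xy},a_{yz}$ the corresponding log-Hessian vanishes'' (vacuously true for the missing one); with that reading your argument covers equivalences composed with the flip as well.
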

\begin{proof}
All proofs are similar, therefore we consider several pairs of functions as examples.
\begin{itemize}
\item The equality
\[
 \g g(y)+\g\log(f(x)+h(z))+\mu(x)+\nu(z)=\log\frac{x+y}{y+z},
\]
is impossible, as follows immediately by differentiation with respect to $x$ and $z$.

\item Suppose that
\[
 \g f(x)g(y)h(z)+\mu(x)+\nu(z)=(x+y)z,
\]
then the differentiation gives $\g f'(x)g'(y)h'(z)=0$.

\item Suppose that
\[
 \g(f(x)+h(z))^kg(y)+\mu(x)+\nu(z)=(x+z)^my.
\]
Then $g(y)=\a y+\b$, and further $\a\g(f(x)+h(z))^k=(x+z)^m$, so
$f(x)+h(z)=\const(x+z)^{m/k}$. Differentiation with respect to $x,z$ yields $m=k$.
\end{itemize}
The reader is invited to work out the other pairs of functions.
\end{proof}

A more careful analysis shows that equations (\ref{Y1}) and (\ref{Y6}) cannot be matched, although they have a common leg. More precisely, the following statement holds true.

\begin{statement}\label{st:NoT456}
There do not exist compatible triples of equations where one of the equations is of the type
(\ref{Y4}) with $\g\ne1$, or (\ref{Y5}), or (\ref{Y6}).
\end{statement}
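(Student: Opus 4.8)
The plan is to run everything through the leg-matching rule of Proposition~\ref{st:3leg}: in a compatible triple, two octahedra sharing a (white) triangular face must assign \emph{the very same} leg function to that face, with the same distinguished (spike) variable, up to a nonzero constant factor. Thus every leg of a suspicious equation, read off from Table~\ref{fig:sklad}, has to be identifiable with a leg of whichever equation sits across the corresponding face, and the preceding proposition tells us exactly which leg types may be identified. The decisive observation is that each of the three equations in question carries a leg type occurring in no other entry of Table~\ref{fig:sklad}: equation (\ref{Y5}) has the leg $(x+y)z$, equation (\ref{Y6}) has the leg $y+\log(x+z)$, and equation (\ref{Y4}) with $\gamma\neq1$ has the legs $y(x+z)^{\gamma}$, $y(x+z)^{1/\gamma}$, which by the preceding proposition are non-equivalent to any $(x+z)^{k}y$ with $k\neq\gamma,1/\gamma$. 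Consequently, along any face carrying such a distinctive leg the neighbouring octahedron must carry \emph{the same} equation; in particular any compatible triple containing one of these equations is forced to contain a pair of adjacent identical copies, glued along a common white face.

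First I would locate, for each of the three equations, the white face on which the distinctive leg appears and the spike forced there, using the explicit tripodal forms computed after Theorem~\ref{th:3leg_types} together with Table~\ref{fig:sklad}. The point is then that a shared white face sits in a \emph{different} position inside each of the two adjacent octahedra's own vertex labelings, so that the two copies of the equation may be forced to present legs of different types (or the same leg type but with different spikes) on that face. The claim I would verify is that the geometric spike dictated by the cell structure of $P(2,4)$ and the algebraic spike demanded by the low symmetry of (\ref{Y4}) ($\gamma\neq1$), (\ref{Y5}), (\ref{Y6}) disagree, so that the forced self-adjacency is in fact impossible. Equivalently, one can feed the two tripodal forms into the summation identity (\ref{ijk}) for the relevant quadruple and check that the leg forced on the closing octahedron $\langle 0\rangle$ (or $\langle 4\rangle$) is not the leg of any equation of the list — typically it assembles into a reducible expression, violating the irreducibility hypothesis. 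Since all the freedom in the point transformations (\ref{tix}) is already consumed in normalising the distinctive leg to its canonical form, no residual transformation can repair the mismatch.

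The hard part will be purely the bookkeeping, not the computation. One must read off correctly the spike attached to each shared white face from the Delaunay structure of the $4$-ambo-simplex, and keep precise track of which of the eight tripodal forms of a given equation realises the distinctive leg, so that the spike conflict (or the failure of the closing equation) is exhibited without ambiguity. Once the geometric spike on the shared face and the algebraic spike demanded by the equation are both written down, each individual contradiction collapses to a one-line computation with the explicit leg functions of Table~\ref{fig:sklad}. Carrying this out for the three distinctive legs, and invoking the preceding proposition to exclude every pairing not involving a distinctive leg, shows that none of (\ref{Y4}) with $\gamma\neq1$, (\ref{Y5}), (\ref{Y6}) can occur in a compatible triple.
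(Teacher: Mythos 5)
Your opening move coincides with the paper's: each of (\ref{Y4}) with $\g\ne1$, (\ref{Y5}), (\ref{Y6}) possesses a leg type occurring nowhere else in Table~\ref{fig:sklad}, so by Proposition~\ref{st:3leg} the octahedron adjacent across the face carrying that leg must carry an equation of the same type, glued along that face. Up to this point you and the paper agree. The gap is in the mechanism you propose for the contradiction. Your primary claim --- that the ``geometric spike'' dictated by the cell structure and the ``algebraic spike'' demanded by the equation disagree, so that the forced self-adjacency is impossible --- is false. In the tripodal forms (\ref{3leg:1}) the shared face $(1,4,14)$ receives the leg $a(x_1,x_4,x_{14})$ with the \emph{same} spike (the vertex $x_1$, lying outside both heads) from both octahedra $\langle 3\rangle$ and $\langle 2\rangle$; this matching is built into Proposition~\ref{st:3leg} and is precisely what permits the identification of legs in the first place. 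Two copies of, say, (\ref{Y5}) glue perfectly consistently along the distinctive leg $(X_1+X_4)/X_{14}$, so the conflict you intend to verify does not exist.

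The contradiction the paper actually extracts lies elsewhere, and your fallback remark about reducibility only gestures at it while aiming at the wrong octahedron. Once the first two equations of the triple (\ref{3leg:1}) are pinned down as two copies of the distinguished type sharing the leg $a(x_1,x_4,x_{14})$, the remaining legs $b(x_2,x_4,x_{24})$ and $c(x_3,x_4,x_{34})$ are determined up to the additive freedom $\mu(x_4)+\nu(\cdot)$, and in all three cases they depend on $x_4$ only through the common term $\mu(x_4)$. Hence $b-c$ is independent of $x_4$, so the \emph{third equation of the triple itself}, $b(x_2,x_4,x_{24})-c(x_3,x_4,x_{34})=r(x_{23},x_{24},x_{34})$, does not involve $x_4$ and violates irreducibility. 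This reducibility occurs in one of the three given equations, not on the closing octahedron $\langle 0\rangle$ or $\langle 4\rangle$ as you suggest; Proposition~\ref{st: 3-->5} asserts no irreducibility for the closing equation, so locating the contradiction there would require additional argument. To repair the proof, drop the spike-conflict claim, compute $b$ and $c$ explicitly for each of the three equations from the uniqueness of the leg $a$, and exhibit the cancellation of the $x_4$-dependence in $b-c$.
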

\begin{proof}
The proof is based on the fact that each of the equations listed in the proposition possesses a unique leg, which does not appear by equations of other types ($y(x+z)^\g$, $(x+y)z$, and $y+\log(x+z)$, respectively). By virtue of Proposition \ref{st:3leg}, there follows that at least one further equation of the compatible triple has to be of the same type. Since all coordinate directions $\Integer^4$ are on the same footing, we can assume that the compatible triple is given by equations (\ref{3leg:1}), the unique leg being the function $a(x_1,x_4,x_{14})$ in the first and the second equations. Moreover, since this leg is unique, the equations themselves are recovered up to transformations (\ref{tix}), in other words, the functions  $b(x_2,x_4,x_{24})$ and $c(x_3,x_4,x_{34})$ can be determined. It turns out that $b-c$ does not depend on  $x_4$, so that the third equation in (\ref{3leg:1}) is reducible, in contradiction with our standing assumption. In the following detailed analysis $X_i$ stands for an arbitrary non-constant function of $x_i$.

{\em Equation (\ref{Y4}) with $\g\ne1$}. Consider equation
\[
 X_1X_{24}=(X_2+X_{12})(X_4+X_{14})^{-\g}.
\]
Its tripodal form with the head $(4,14,24)$ is
\[
 X_1(X_4+X_{14})^\g-X_2/X_{24}=X_{12}/X_{24},
\]
and, comparing with the first equation in (\ref{3leg:1}), we find:
\begin{align*}
 a(x_1,x_4,x_{14})&=X_1(X_4+X_{14})^\g+\la(x_{14})+\mu(x_4),\\
 b(x_2,x_4,x_{24})&=X_2/X_{24}+\mu(x_4)+\nu(x_{24}).
\end{align*}
This $a$ corresponds to a unique leg $y(x+z)^\g$ in Table \ref{fig:sklad}, therefore the second equation in (\ref{3leg:1})
\[
 X_3/X_{34}-X_1(X_4+X_{14})^\g=-X_{13}/X_{34}.
\]
Therefore $c(x_3,x_4,x_{34})=X_3/X_{34}+\mu(x_4)+\kappa(x_{34})$, but then the terms $\mu(x_4)$ in the third equation in (\ref{3leg:1}) cancel.

{\em Equation (\ref{Y5})}. Consider equation
\[
 X_2X_{14}=X_1+X_4+X_{24}+X_{12} \quad\Rightarrow\quad
 \frac{X_1+X_4}{X_{14}}-X_2=-\frac{X_{24}+X_{12}}{X_{14}}.
\]
We have
\[
 a(x_1,x_4,x_{14})=\frac{X_1+X_4}{X_{14}}+\la(x_{14})+\mu(x_4),\quad
 b(x_2,x_4,x_{24})=X_2+\mu(x_4)+\nu(x_{24}).
\]
Due to the uniqueness of the leg $a$ the second equation in (\ref{3leg:1}) has to be of the type (\ref{Y5}), as well, and to have the tripodal form
\[
 X_3-\frac{X_1+X_4}{X_{14}}=\frac{X_{34}+X_{23}}{X_{14}},
\]
whence $c(x_3,x_4,x_{34})=X_3+\mu(x_4)+\kappa(x_{34})$, and the third equation in (\ref{3leg:1}) does not contain $x_4$.

{\em Equation (\ref{Y6})}. Consider equation
\[
 X_1X_2X_{12}X_{24}=X_4+X_{14} \quad\Rightarrow\quad
 (\log(X_4+X_{14})-\log X_1)-\log X_2=\log(X_{12}X_{24}).
\]
We have
\begin{align*}
 a(x_1,x_4,x_{14})&=\log(X_4+X_{14})-\log X_1+\la(x_{14})+\mu(x_4),\\
 b(x_2,x_4,x_{24})&=\log X_2+\mu(x_4)+\nu(x_{24}).
\end{align*}
Also this $a$ is unique, therefore the second equation in (\ref{3leg:1}) has to be of the type (\ref{Y6}) and to have the tripodal form
\[
 \log X_3-(\log(X_4+X_{14})-\log X_1)=-\log(X_{13}X_{34}),
\]
whence $c(x_3,x_4,x_{34})=\log X_3+\mu(x_4)+\kappa(x_{34})$, and, as in the previous case, the third equation in (\ref{3leg:1}) does not contain $x_4$.
\end{proof}

\subsection{Completing the classification}\label{ss:final}

Note that after removing equations (\ref{Y5}) and (\ref{Y6}) from Table \ref{fig:sklad} the leg $xy$ becomes unique, however the argumentation like in the proof of Proposition \ref{st:NoT456} is not possible. To handle with the remaining cases, it is not enough to use formulas (\ref{3leg:1}) alone, and one has to refer to all tripodal forms, (\ref{ijk}) and their shifted versions.

Before we formulate the final result, let us describe more precisely what transformations are allowed to bring the equations to the canonical form. First of all, these are autonomous point transformations $x\to X(x)$ (the same at all lattice points). However, if only these are allowed, then the answer will contain many arbitrary constant parameters. It turns out that all these parameters are inessential and can be killed by non-autonomous transformations, which depend on the lattice point. Generally speaking, such transformations result in non-autonomous equations, therefore not all them should be allowed but only those special ones which render the transformed equation still autonomous. Their existence is related to a certain symmetry of the equations. For instance, if the equation is invariant under the one-parameter group of translations $x\to x+a$, then it admits non-autonomous transformations $x(i,j,k)\to
x(i,j,k)+\a i+\b j +\g k$, with the transformed equation being dependent on the parameters $\a,\b,\g$, which can be used to simplify the result. Similar transformations exist in all cases under consideration, and we use them to eliminate all the constant parameters (the situation is similar for continuous 3D integrable systems).

\begin{theorem}\label{th:class}
Any compatible quintuple of irreducible shift-invariant octahedron type equations on $Q(A_4)$ is equivalent, modulo non-autonomous point transformations, to one of the following systems (different indices stand for the shifts in different coordinate directions; recall that in the $\mathbb Z^4$ realization of the lattice $Q(A_4)$ the shift $T_0$ in the coordinate direction 0 can be simply omitted).\smallskip

\noindent Five equations of type {\em (\ref{Y1})}:
\begin{equation}\label{5T1}\tag{$\chi_1^5$}
 x_{ij}x_{km}-x_{ik}x_{jm}+x_{im}x_{jk}=0,\qquad 0\le i<j<k<m\le 4;
\end{equation}
Five equations of type {\em(\ref{Y3})}:
\begin{equation}\label{5T3}\tag{$\chi_2^5$}
 \frac{(x_{ij}-x_{ik})(x_{kj}-x_{km})(x_{jm}-x_{im})}
      {(x_{ik}-x_{kj})(x_{km}-x_{jm})(x_{im}-x_{ij})}=-1,\qquad i,j,k,m\in\{0,1,2,3,4\};
\end{equation}
Two different quintuples consisting of four equations of type {\em(\ref{Y2})} and one equation of type {\em(\ref{Y3})}:
\begin{equation}\label{4T2-1T3}\tag{$\chi_3^4\chi_2$}
\left\{\begin{array}{l}
 (x_{ik}-x_{ij})x_{i0}+(x_{ij}-x_{jk})x_{j0}+(x_{jk}-x_{ik})x_{k0}=0,\qquad i,j,k\in\{1,2,3,4\},\vspace{1mm}\\
 \dfrac{(x_{12}-x_{13})(x_{23}-x_{34})(x_{24}-x_{14})}
       {(x_{13}-x_{23})(x_{34}-x_{24})(x_{13}-x_{12})}=-1;
\end{array}\right.
\end{equation}
and
\begin{equation}\label{4T2-1T3'}\tag{$\chi_4^4\chi_2$}
\left\{\begin{array}{l}
 \dfrac{x_{ik}-x_{ij}}{x_{i0}}+\dfrac{x_{ij}-x_{jk}}{x_{j0}}
 +\dfrac{x_{jk}-x_{ik}}{x_{k0}}=0,\qquad i,j,k\in\{1,2,3,4\},\vspace{1mm}\\
 \dfrac{(x_{12}-x_{13})(x_{23}-x_{34})(x_{24}-x_{14})}
       {(x_{13}-x_{23})(x_{34}-x_{24})(x_{13}-x_{12})}=-1;
\end{array}\right.
\end{equation}
Three equations of type {\em(\ref{Y4})} and two equations of type {\em(\ref{Y2})}:
\begin{equation}\label{3T4-2T2}\tag{$\chi_5^3\chi_4^2$}
\left\{\begin{array}{l}
 \dfrac{x_{i4}-x_{j4}}{x_{40}}
   =x_{ij}\Bigl(\dfrac1{x_{j0}}-\dfrac1{x_{i0}}\Bigr),\qquad i,j\in\{1,2,3\},\vspace{2mm}\\
 \dfrac{x_{13}-x_{12}}{x_{10}}+\dfrac{x_{12}-x_{23}}{x_{20}}
  +\dfrac{x_{23}-x_{13}}{x_{30}}=0,\vspace{2mm}\\
 \dfrac{x_{14}-x_{24}}{x_{12}}+\dfrac{x_{24}-x_{34}}{x_{23}}
  +\dfrac{x_{34}-x_{14}}{x_{13}}=0.
\end{array}\right.
\end{equation}
\end{theorem}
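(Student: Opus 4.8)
The plan is to reduce the classification of compatible quintuples to the classification of single tripodal equations already obtained in Theorem~\ref{th:3leg_types}, supplemented by the matching of legs on shared faces. Throughout one classifies quintuples up to \emph{simultaneous} (per-vertex) point transformations, so that the freedom (\ref{tix}) used to normalise one octahedron equation is no longer available for the others. By Theorem~\ref{th:3leg}, $4$D consistency of the five equations carried by the octahedral faces $\EQ0,\dots,\EQ4$ of a $4$-ambo-simplex is equivalent to the requirement that, in every quadruple of these octahedra meeting pairwise along triangles, the tripodal forms of three equations sum to the tripodal form of the fourth. In particular each of the five equations admits all eight tripodal representations, hence by Theorem~\ref{th:3leg_types} is one of (\ref{Y1})--(\ref{Y7}); and the summation forces the leg attached to a triangular face shared by two octahedra to be the \emph{same} function for both adjacent equations, up to the residual leg freedom recorded after Table~\ref{fig:sklad}.

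First I would use this matching to prune the admissible combinations of types. The non-equivalence proposition of Section~\ref{ss:killbill} shows that distinct entries of the legs store in Table~\ref{fig:sklad} cannot be identified by the allowed leg transformations, which already forbids most cross-face pairings. The second reduction is Proposition~\ref{st:NoT456}, eliminating (\ref{Y4}) with $\g\ne1$, (\ref{Y5}), and (\ref{Y6}) altogether: each of these carries a leg occurring in no other equation, so a compatible neighbour would have to repeat the type, whereupon the third equation of the triple (\ref{3leg:1}) loses its dependence on $x_4$ and becomes reducible. At this stage the homogeneous families are already settled, since (\ref{Y1}) and the multi-ratio equation (\ref{Y3}) are each self-consistent (cf.\ Theorem~\ref{th:main}); these give the quintuples (\ref{5T1}) and (\ref{5T3}).

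Next I would treat the mixed families. As noted at the start of Section~\ref{ss:final}, after the deletions the leg $xy$ becomes shared, so the clean argument of Proposition~\ref{st:NoT456} no longer decides matters, and one must instead impose the \emph{full} set of tripodal identities (\ref{ijk}) together with their $T_i$-shifted copies on all five faces at once. Tracing the leg of a linear equation of type (\ref{Y2}) around the ambo-simplex forces a neighbouring octahedron to carry the multi-ratio equation, and similarly a leg of $\chi_5$-type forces a $\chi_4$-type companion; this is the mechanism behind the necessity statements of Theorem~\ref{th:main}. Solving the resulting leg-matching relations pins down, up to constants, exactly the three mixed quintuples (\ref{4T2-1T3}), (\ref{4T2-1T3'}) and (\ref{3T4-2T2}); the purely linear equation (\ref{Y7}) is found not to occur in any irreducible consistent quintuple.

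The final step is to remove the constant parameters that survive when only autonomous point transformations are used. Each of the five families is invariant under a one-parameter group of translations or scalings of the field, and the associated non-autonomous transformations $x(n)\mapsto f(x(n),n)$ depending linearly on the lattice coordinates preserve autonomy while shifting these constants; a suitable choice normalises every parameter and yields the canonical forms (\ref{5T1})--(\ref{3T4-2T2}). Sufficiency, i.e.\ that each listed family really is consistent, is supplied by the linear-problem derivations of Section~\ref{s:Hirota}. The main obstacle is the mixed-case analysis: the leg-matching equations couple the five octahedron equations nonlinearly across shared faces, and one must bookkeep the residual functional freedom carefully both to show that no combination outside the list closes up and that each listed combination is realised by an irreducible, genuinely consistent system.
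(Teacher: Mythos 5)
Your outline follows the paper's own route: Theorem~\ref{th:3leg} plus Theorem~\ref{th:3leg_types} to force each face equation into the list (\ref{Y1})--(\ref{Y7}) with matched legs on shared triangles, the legs-store non-equivalence proposition and Proposition~\ref{st:NoT456} to prune the type combinations, the full set of tripodal identities (\ref{ijk}) and their shifted copies to pin down the mixed quintuples, and non-autonomous point transformations to normalise the surviving constants. That is exactly the paper's scheme.

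One step as you wrote it does not go through: you dispose of the homogeneous families by saying that (\ref{Y1}) and (\ref{Y3}) ``are each self-consistent, these give the quintuples (\ref{5T1}) and (\ref{5T3}).'' Self-consistency of the canonical representative only proves that these quintuples \emph{occur}; it does not show that \emph{every} consistent quintuple built from five equations of type (\ref{Y1}) (resp.\ (\ref{Y3})) is equivalent to the canonical one. Since the classification is modulo a single per-vertex change of variables applied simultaneously to all five equations, each vertex variable $x_{ij}$ enters a priori through an unknown function $X_{ij}$, and one must still run the same machinery you invoke for the mixed cases: write all five tripodal forms as in (\ref{T1-ijk}), compare the legs appearing in the shifted forms $T_i\EQ{i}$, and deduce that the ten functions $X_{ij}$ are all proportional (for type (\ref{Y1})) or related by pairwise commuting linear-fractional maps (for type (\ref{Y3})), before the non-autonomous normalisation can be applied. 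This is where most of the actual work in the paper's proof of these two cases lies, so it should not be elided. A second, smaller point: citing the linear-problem derivations of Section~\ref{s:Hirota} for sufficiency is weaker than what is needed, since those derivations live on $\mathbb Z^3$; the paper instead verifies consistency of each normal form directly (via the sufficiency half of Theorem~\ref{th:3leg}, i.e.\ the summation of tripodal forms), and that is the cleaner way to close the argument.
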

\begin{proof}
{\em General scheme.} We start with one tripodal equation, replacing variables $x_{ij}$ by yet unknown non-constant functions $X_{ij}=X_{ij}(x_{ij})$. Comparing terms in the tripodal forms (\ref{ijk}), we are able to completely determine the consistent quintuple, up to ten arbitrary functions $X_{12},\dots,X_{45}$. To specify the latter, we use the shifted tripodal forms. One can show then that all functions $X_{ij}$ are related to one another via some linear-fractional transformations. Point transformations allow us to assume that the functions $X_{ij}$ are linear-fractional, with coefficients of different functions being connected by certain relations. Resolving these relations, we come to a consistent system containing several free parameters. Finally, we get rid of them using non-autonomous transformations.

{\em Equation} (\ref{Y1}) can only be compatible with equations of the same type, as follows from Table \ref{fig:sklad}. Equations (\ref{ijk}) with $(i,j,k,m,n)=(1,2,3,4,0)$ read:
\begin{equation}\label{T1-ijk}
\begin{aligned}
 \EQ1:&\quad \frac{X_{34}}{X_{03}X_{04}}
            -\frac{X_{24}}{X_{02}X_{04}}
            +\frac{X_{23}}{X_{02}X_{03}}=0,\\
 \EQ2:&\quad \frac{X_{34}}{X_{03}X_{04}}
            -\frac{X_{14}}{X_{01}X_{04}}
            +\frac{X_{13}}{X_{01}X_{03}}=0,\\
 \EQ3:&\quad \frac{X_{24}}{X_{02}X_{04}}
            -\frac{X_{14}}{X_{01}X_{04}}
            +\frac{X_{12}}{X_{01}X_{02}}=0.
\end{aligned}
\end{equation}
Indeed, in the first equation the functions  $X_{ij}$ can be freely chosen; this defines the first term in the second equation. But then in the second term in $\EQ2$ the dependence on $x_{40}$ is known, while the variables $x_{14}$ and $x_{01}$ (which were absent from $\EQ1$) enter via arbitrary new functions. Continuing in this fashion, we express all terms through ten arbitrary functions $X_{ij}$. As a corollary we get
\[
 \EQ4:~~X_{01}X_{23}-X_{02}X_{13}+X_{03}X_{12}=0,\qquad
 \langle 0\rangle:~~X_{12}X_{34}-X_{13}X_{24}+X_{14}X_{23}=0.
\]
Equations (\ref{ijk}) are herewith exhausted. To determine the functions $X_{ij}$ we have to consider the shifted tripodal forms. We write them as
\begin{align*}
 T_1\EQ1:&\quad T_1\left(\frac{X_{02}}{X_{23}X_{24}}
         -\frac{X_{03}}{X_{23}X_{34}}
         +\frac{X_{04}}{X_{24}X_{34}}=0\right),\\
 T_2\EQ2:&\quad T_2\left(\frac{X_{01}}{X_{13}X_{14}}
         -\frac{X_{03}}{X_{13}X_{34}}
         +\frac{X_{04}}{X_{14}X_{34}}=0\right),\\
 T_3\EQ3:&\quad T_3\left(\frac{X_{01}}{X_{12}X_{14}}
         -\frac{X_{02}}{X_{12}X_{24}}
         +\frac{X_{04}}{X_{14}X_{24}}=0\right),
\end{align*}
and the comparison of the legs on the left-hand sides leads to relations like
\[
 \frac{\a X_{30}(x_0)}{X_{23}(x_2)X_{34}(x_4)}
 =\frac{X_{10}(x_0)}{X_{12}(x_2)X_{14}(x_4)}+\mu(x_2)+\nu(x_4).
\]
It is easy to realize that the terms $\mu+\nu$ can be neglected here, and that the functions $X_{12},X_{13},X_{23}$ coincide up to constant factors, and the same is true for the functions $X_{01},X_{02},X_{03}$. Because of the symmetry of all the indices, we come to the conclusion that all ten functions are proportional. Without loss of generality we can set $X_{ij}=\a_{ij}x_{ij}$, and a direct check shows that the equations are consistent for arbitrary values of $\a_{ij}$. All these parameters can be set equal to 1 upon use of the non-autonomous transformation
\[
 \tilde x(n_0,n_1,n_2,n_3,n_4)= \prod_{i,j}\a^{n_in_j}_{ij}x(n_0,n_1,n_2,n_3,n_4),
\]
and the answer takes the form of the quintuple (\ref{5T1}).\smallskip

{\em Equation} (\ref{Y4}) {\em with $\g=1$.} We consider this equation in the form
\[
 X_{04}(X_{24}-X_{34})=X_{23}(X_{02}-X_{03}).
\]
Identifying it with the first equation in (\ref{ijk}) with $(i,j,k,m,n)=(1,2,3,4,0)$, we have:
\[
 a^{02,24,04}=X_{24}X_{04},\qquad a^{03,34,04}=X_{34}X_{04},\qquad
 a^{02,23,03}=X_{23}(X_{03}-X_{02}),
\]
and the last two equations in (\ref{ijk}) read:
\[
 \EQ2:~~X_{34}X_{04}-a^{01,14,04}=a^{03,13,01},\qquad
 \EQ3:~~a^{01,14,04}-X_{24}X_{04}=a^{01,12,02}.
\]
Since the leg $xy$ is unique, these equations are of the type (\ref{Y4}), as well. This determines them in the following form:
\[
 X_{34}X_{04}-X_{14}X_{04}=X_{13}(X_{01}-Y_{03}),\qquad
 X_{14}X_{04}-X_{24}X_{04}=X_{12}(Y_{02}-Y_{01}),
\]
where $Y$, similarly to $X$, stand for arbitrary non-constant functions of the corresponding variables: $Y_{ij}=Y_{ij}(x_{ij})$. Summing up, we find the fourth equation
\[
 \EQ4:~~X_{12}(Y_{02}-Y_{01})+X_{13}(X_{01}-Y_{03})+X_{23}(X_{03}-X_{02})=0,
\]
which can be of the type (\ref{Y2}) only. But then the corresponding functions $X$ and $Y$ are linearly related, and it is easy to see that we can assume that they plainly coincide (re-defining $X_{12},X_{13}$, and $X_{15}$, if necessary). As a result, we come to the following quintuple of equations:
\begin{equation}\label{T4-ijk}
\begin{aligned}
 \EQ1:&\quad (X_{24}-X_{34})X_{04}=X_{23}(X_{03}-X_{02}),\\
 \EQ2:&\quad (X_{34}-X_{14})X_{04}=X_{13}(X_{01}-X_{03}),\\
 \EQ3:&\quad (X_{14}-X_{24})X_{04}=X_{12}(X_{02}-X_{01}),\\
 \EQ4:&\quad X_{12}(X_{02}-X_{01})+X_{13}(X_{01}-X_{03})+X_{23}(X_{03}-X_{02})=0,\\
 \EQ0:&\quad \frac{X_{14}-X_{24}}{X_{12}}+\frac{X_{24}-X_{34}}{X_{23}}
  +\frac{X_{34}-X_{14}}{X_{13}}=0.
\end{aligned}
\end{equation}
Equations (\ref{ijk}) are now exhausted, and the remaining functional freedom has to be reduced by using their shifted versions:
\begin{align*}
 T_1\EQ1: &\quad T_1\bigl(X_{23}X_{03}-X_{23}X_{02}-X_{04}(X_{24}-X_{34})=0\bigr),\\
 T_2\EQ2: &\quad T_2\bigl(X_{13}X_{01}-X_{13}X_{03}-X_{04}(X_{34}-X_{14})=0\bigr),\\
 T_3\EQ3: &\quad T_3\bigl(X_{12}X_{02}-X_{12}X_{01}-X_{04}(X_{14}-X_{24})=0\bigr).
\end{align*}
The comparison of legs on the left-hand sides leads to relations like
\[
 \a X_{23}(x_2)X_{03}(x_0)=X_{12}(x_2)X_{01}(x_0)+\mu(x_2)+\nu(x_4),
\]
where the terms $\mu+\nu$ account for the freedom in the definition of the legs. There follows that the functions $X_{12},X_{13},X_{23}$ are the same up to constant factors, while the relations between the functions $X_{15}$, $X_{25}$ and $X_{35}$ are affine. Further, equation $\EQ4$ takes the form
\begin{gather*}
  \a X_{04}(x_{01})(X_{24}(x_{12})-X_{34}(x_{13}))
  +\b X_{04}(x_{02})(X_{34}(x_{23})-X_{14}(x_{12}))\\
  +X_{04}(x_{03})(X_{14}(x_{13})-X_{24}(x_{23}))=0,
\end{gather*}
and the comparison with the previous form leads to the conclusion that $X_{14}$, $X_{24}$, and $X_{34}$ are affinely related to $X_{12}$, and $X_{04}$ is affinely related to $X_{15}$. Finally, for $\EQ0$ we obtain the representation
\[
  \frac{X_{02}(x_{12})-X_{03}(x_{13})}{X_{04}(x_{14})}
  +\g\frac{X_{03}(x_{23})-X_{01}(x_{12})}{X_{04}(x_{24})}
  +\d\frac{X_{01}(x_{13})-X_{02}(x_{23})}{X_{04}(x_{34})}=0
\]
(it is enough to observe that equations (\ref{T4-ijk}) are invariant under the flip $4\leftrightarrow 0$ and the inversion of $X_{04}$, $X_{12}$, $X_{13}$, $X_{23}$), and a comparison with the previous representation shows that $1/X_{04}$ is affinely related to $X_{14}$. As a result, all functions are expressed through one of them (say, $X_{12}$), and without loss of generality we can put them as functions of $x$ as
\begin{gather*}
 X_{12}=x,\quad X_{13}=px,\quad X_{23}=qx,\quad X_{04}=1/x,\\
 X_{i4}=a_ix+b_i,\quad X_{0i}=c_i/x+d_i,\quad i=1,2,3.
\end{gather*}
Substitution into the shifted tripodal forms allows us to show that $b_i=d_i=0$ and leads to equations
\[
 \frac{a_ix_{i4}-a_jx_{j4}}{x_{04}}
 =Ax_{ij}\Bigl(\frac{a_i}{x_{0j}}-\frac{a_j}{x_{0i}}\Bigr),\quad i,j=1,2,3.
\]
They are consistent for all parameter values. The non-autonomous change $\tilde x(i,j,k,m,n)=
a^i_1a^j_2a^k_3A^{mn}x(i,j,k,m,n)$ allows us to set all parameters to 1, and we come to the consistent triple (\ref{3T4-2T2}).\smallskip

{\em Equation} (\ref{Y2}). Now we do not consider equations of the type (\ref{Y4}) anymore, and the leg $y(x+z)$ becomes unique for equations of the type (\ref{Y2}). It is not difficult to see that in a consistent quintuple containing such an equation at least four equations are of the same type. Indeed, either the tripodal form does not contain legs $y(x+z)$, or all three legs are of this sort. Therefore, if one starts with such a leg for one equation of the triple (\ref{ijk}), then all the legs for the three equations and for their sum (which is the tripodal form of the fourth equation) will be of this sort. We easily come to the following system:
\begin{equation}\label{T2-ijk}
\begin{aligned}
 \EQ1:&\quad X_{24}(X_{02}-X_{04})+X_{34}(X_{04}-X_{03})+X_{23}(X_{03}-X_{02})=0,\\
 \EQ2:&\quad X_{34}(X_{03}-X_{04})+X_{14}(X_{04}-X_{01})+X_{13}(X_{01}-X_{03})=0,\\
 \EQ3:&\quad X_{14}(X_{01}-X_{04})+X_{24}(X_{04}-X_{02})+X_{12}(X_{02}-X_{01})=0,\\
 \EQ4:&\quad X_{12}(X_{02}-X_{01})+X_{13}(X_{01}-X_{03})+X_{23}(X_{03}-X_{02})=0,\\
 \EQ0:&\quad \frac{(X_{14}-X_{12})(X_{24}-X_{23})(X_{34}-X_{13})}
     {(X_{12}-X_{24})(X_{23}-X_{34})(X_{13}-X_{14})}=-1.
\end{aligned}
\end{equation}
Here $\EQ4$ follows from the first three equations in an obvious way, while $\EQ0$ is derived using their multiplicative representation

\[
 \frac{X_{i4}-X_{ij}}{X_{ij}-X_{j4}}=\frac{X_{04}-X_{0j}}{X_{0i}-X_{04}}.
\]
In order to determine the functions $X$, we compare legs in the shifted tripodal forms. The first three equations are re-written as
\begin{align*}
 T_1\EQ1: &\quad T_1\bigl(X_{02}(X_{24}-X_{23})+X_{03}(X_{23}-X_{34})
   +X_{04}(X_{34}-X_{24})=0\bigr),\\
 T_2\EQ2: &\quad T_2\bigl(X_{01}(X_{13}-X_{14})+X_{03}(X_{34}-X_{13})
   +X_{04}(X_{14}-X_{34})=0\bigr),\\
 T_3\EQ3: &\quad T_3\bigl(X_{01}(X_{14}-X_{12})+X_{02}(X_{12}-X_{24})
   +X_{04}(X_{24}-X_{14})=0\bigr),
\end{align*}
and the legs comparison leads to equations like
\[
 \a X_{03}(x_0)(X_{23}(x_2)-X_{34}(x_4))
  =X_{01}(x_0)(X_{12}(x_2)-X_{14}(x_4))+\mu(x_2)+\nu(x_4).
\]
Because of the symmetry of the indices $1,2,3,4$, we find that the functions $X_{12}$, $X_{13}$, $X_{23}$, $X_{14}$, $X_{24}$, $X_{34}$ are affinely related, and the same is true for the functions $X_{01}$, $X_{02}$, $X_{03}$, $X_{04}$. To establish a relation between the two sets of functions, it is enough to compare the legs $T_1(a^{03,04,02})$ of equation $\EQ1$ and $T_0(a^{13,14,12})$ of equation $\EQ5$:
\begin{equation}\label{logfra}
 \log\frac{X_{02}(x_2)-X_{04}(x_4)}{X_{04}(x_4)-X_{03}(x_3)}=
 \la\log\frac{X_{12}(x_2)-X_{14}(x_4)}{X_{14}(x_4)-X_{13}(x_3)}+\mu(x_2)+\nu(x_3).
\end{equation}
Differentiating with respect to $x_2$, we find:
\[
 \frac{X'_{25}(x_2)}{X_{25}(x_2)-X_{45}(x_4)}=
 \frac{\la X'_{12}(x_2)}{X_{12}(x_2)-X_{14}(x_4)}+\mu'(x_2),
\]
whence $X_{04}$ and $X_{14}$ are related by a linear-fractional transformation. Without loss of generality, we can restrict ourselves to two cases:
\begin{align*}
1)&\quad X_{ij}(x)=a_{ij}x+b_{ij},\quad i,j=0,1,2,3,4;\\
2)&\quad X_{ij}(x)=a_{ij}x+b_{ij},\quad i,j=1,2,3,4,\qquad
    X_{0i}(x)=c_i/x+d_i,\quad i=1,2,3,4.
\end{align*}
A rather tiresome analysis of the shifted tripodal forms shows that in the case 1) there are two possibilities: $a_{ij}=\a_i\a_j$, $b_{ij}=b_{12}$ or $a_{ij}=1$, $b_{ij}=\b_i+\b_j$, and in the case 2) $a_{ij}=c_ic_j$, $b_{ij}=b_{12}$, $d_i=d_1$. Non-autonomous dilations and translations allow us to reduce these cases to the quintuples (\ref{4T2-1T3}) and (\ref{4T2-1T3'}).\smallskip

{\em Equation} (\ref{Y3}). Now we can exclude from consideration equations of the type (\ref{Y2}). Then equation of the type (\ref{Y3}) can be a member of a consistent quintuple with equations of the same type only. As above, the tripodal forms (\ref{ijk}) allow us to show, starting with one equation, that a consistent quintuple must be of the form
\[
 \frac{(X_{ij}-X_{ik})(X_{kj}-X_{km})(X_{jm}-X_{im})}
      {(X_{ik}-X_{kj})(X_{km}-X_{jm})(X_{im}-X_{ij})}=-1.
\]
Comparison of the shifted tripodal forms leads to relations like (\ref{logfra}) with all possible permutations of indices, whence all $X_{ij}$ are related by linear-fractional transformations. When we encountered such a situation above, we used the method of undetermined coefficients, but, fortunately, now one can dispense with it. Indeed, the same arguments as in Lemma \ref{l:frac} allow us to deduce from (\ref{logfra}) that the linear-fractional relation between $X_{02}$ and $X_{12}$ is the same as between $X_{04}$ and $X_{14}$ or between $X_{03}$ and $X_{13}$ (with $\la=1$ and $\mu+\nu=0$). This is true for other index sets, as well, so that
\[
  X_{ik}=M_{ij}X_{jk},\quad i\ne j\ne k\ne i,
\]
where $M_{ij}$ are some linear-fractional function. There follows $M_{ij}M_{jk}=M_{ik}$, and, setting $L_i=M_{0i}$, $L_0=\id$, we obtain $M_{ij}=L_iL^{-1}_j$. But then $L^{-1}_iX_{ik}=L^{-1}_jX_{jk}$, in particular, $L^{-1}_iX_{0i}=L^{-1}_jX_{0j}$, and on the other hand, $X_{ij}=L_iX_{0j}=L_jX_{0i}$, whence $L_iL_j=L_jL_i$. Thus, $X_{ij}(x)=L_iL_jX(x)$, where all $L_i$ pairwise commute, and $X(x)$ is an arbitrary function. Setting without loss of generality $X(x)=x$, and performing a non-autonomous transformation $\tilde x(n_0,n_1,n_2,n_3,n_4)=L^{n_1}_1L^{n_2}_2L^{n_3}_3L^{n_4}_4(x(n_0,n_1,n_2,n_3,n_4))$, the situation is reduced to the case when all $L_i=\id$.\smallskip

{\em Equation} (\ref{Y7}). A simple analysis of the tripodal forms
(\ref{ijk}) and their shifted versions shows that consistent
systems of this type are reduced to linear ones with constant
coefficients, which is of course not very interesting. Note
however that in this case it makes sense to relax the condition of
the shift invariance, which might lead to auxiliary linear
problems for hypothetical 4D equations. This interesting
possibility remains outside of the scope of the present paper.
\end{proof}

\section{Remarks on the Y-system}
\label{sect: Y-eq}

We are aware of one integrable equation on $\mathbb Z^3$ which
formally belongs to the octahedron type equations and is not in
our list. The so called Y-system (of type $A_N$)
\begin{equation}\label{h6}\tag{\mbox{$Y$}}
h(m+e_1)h(m-e_1)=\frac{(1-h(m+e_2))(1-h(m-e_2))}{(1-h^{-1}(m+e_3))(1-h^{-1}(m-e_3))}
\end{equation}
plays a prominent role in the theory of integrable systems
\cite{Zamolodchikov, Adler_Startsev_1999, Doliwa_1997,
Dynnikov_Novikov_1997, FZ_2003, Kashaev_Reshetikhin_1997,
Kuniba_Nakanishi_Suzuki_2010, Szenes_2009, Volkov_2007}. The
Y-system is closely related to equations (\ref{h1})--(\ref{h5}).
However, its properties are rather different from those of
(\ref{h1})--(\ref{h5}). It has been already shown in Section
\ref{s:Hirota} that on $\mathbb Z^3$ this equation can be derived,
through suitable changes of variables, from both (\ref{h1}) and
(\ref{h2}). We discuss here in more detail its relation to
equation (\ref{h2}).

Start with a 3D Desargues (or Laplace-Darboux) map as defined in
Section \ref{s: cells}, i.e., with a map $x:Q(A_3)\to{\mathbb
R}P^n$ such that the image of any white triangle is a collinear
triple of points. In the representation of $Q(A_3)$ as $\mathbb
Z^3_{\rm even}$, this can be seen as a 2D Q-net (indexed ny
$m_1,m_2$) and its iterated Laplace transforms (indexed by $m_3$), 
see \cite{Schief_talk, Doliwa_2010}; one can find definitions
of a Q-net and of its Laplace transforms in \cite{DDG_book}. Thus, the vertices of any
white tetrahedron are collinear, and therefore one can introduce
their (real) cross-ratio, called $h$ (or $1-h$, or else $1-h^{-1}$, depending on the ordering of the arguments of the cross-ratio). Any black tetrahedron has six
white neighbors, adjacent to it along its six edges. A straightforward
computation shows that the variables $h$ for these six white tetrahedra
are related by equation (\ref{h6}) which basically says that the product
of six cross-ratios is equal to 1. To obtain it, one has to consider
four octahedra adjacent to the black tetrahedron along its black
faces. They carry four multi-ratio equations (\ref{h2}) coming from
the corresponding four complete quadrilaterals. These equations,
being multiplied, give the Y-system (the product of four multi-ratios 
can be re-arranged as a product of six cross-ratios). To sum up: variables $h$ 
of equation (\ref{h6}) live on white tetrahedra, while the equation itself 
is assigned to black tetrahedra of $Q(A_3)$.

One can try to make this picture multidimensional, by starting
with a multidimensional Desargues map $x:Q(A_N)\to{\mathbb R}P^n$.
Then any 3D sublattice $x:Q(A_3)\to{\mathbb R}P^n$ will support
its own copy of equation (\ref{h6}), with the variables $h$ which
show up in this copy and in no other: they are defined on 3-cells,
and each 3-cell belongs to only one 3D sublattice. The only
relation between these co-existing copies of equation (\ref{h6})
will be the condition that the product of $h$'s for all
tetrahedral faces of a white simplex is equal to 1.

Thus, although equation (\ref{h6}) is closely related to multidimensionally consistent
octahedron type equations, there is no reason for itself to be multidimensionally
consistent.

\paragraph{Acknowledgements.}
The research of V.A. was supported by the DFG Research Unit
``Polyhedral Surfaces'' and by RFBR grants 09-01-92431-KE and
NSh-6501.2010.2. The research of A.B. was partially supported by
the DFG Research Unit ``Polyhedral Surfaces''.


\end{document}